\def\UseBibLatex{1}
\makeatletter
\def\input@path{{styles/}}
\makeatother

\providecommand{\pgfsyspdfmark}[3]{}

\documentclass[12pt]{article}
\usepackage{lmodern}
\usepackage[T1]{fontenc}

\providecommand{\BibLatexMode}[1]{}
\providecommand{\BibTexMode}[1]{}

\renewcommand{\BibLatexMode}[1]{}
\renewcommand{\BibTexMode}[1]{#1}

\usepackage[bibencoding=utf8,style=alphabetic,backend=biber]{biblatex}%
\usepackage{sariel_biblatex}%

\usepackage{amsmath}%
\usepackage{amssymb}%
\usepackage[table]{xcolor}%

\usepackage{microtype}
\usepackage{xfrac}

\usepackage[amsmath,thmmarks]{ntheorem}%
\usepackage{iftex}

\usepackage{titlesec}%
\usepackage{xcolor}%
\usepackage{mleftright}%
\usepackage{xspace}%
\usepackage{graphicx}
\usepackage{hyperref}%
\usepackage[inline]{enumitem}
\usepackage{hyperref}%
\usepackage[ocgcolorlinks]{ocgx2}
\usepackage{comment}

\usepackage{mathcalb}
\usepackage{euscript}%

\definecolor{darkyred}{rgb}{0.25, 0, 0}

\hypersetup{%
      unicode,
      breaklinks,%
      colorlinks=true,%
      urlcolor=darkyred,%
      linkcolor=[rgb]{0.5,0.0,0.0},%
      citecolor=[rgb]{0,0.2,0.445},%
      filecolor=[rgb]{0,0,0.4},%
      anchorcolor=[rgb]{0.0,0.1,0.2},%
      hypertexnames=false
   }%

\titlelabel{\thetitle. }%

\theoremseparator{.}%

\theoremstyle{plain}%
\newtheorem{theorem}{Theorem}[section]

\theoremseparator{}%

\theoremseparator{.}%

\newtheorem{lemma}[theorem]{Lemma}

\newtheorem{corollary}[theorem]{Corollary}

\theoremstyle{plain}%
\theoremheaderfont{\sf}
\theorembodyfont{\upshape}%
\newtheorem*{remark:unnumbered}[theorem]{Remark}%
\newtheorem{remark}[theorem]{Remark}%
\newtheorem{tedium}[theorem]{Tedium}%
\newtheorem{observation}[theorem]{Observation}
\newtheorem{definition}[theorem]{Definition}
\newtheorem{defn}[theorem]{Definition}

\newtheorem{problem}[theorem]{Problem}

   \theoremheaderfont{\em}%
   \theorembodyfont{\upshape}%
   \theoremstyle{nonumberplain}%
   \theoremseparator{}%
   \theoremsymbol{\myqedsymbol}%
   \newtheorem{proof}{Proof:}%

\providecommand{\emphind}[1]{}%
\renewcommand{\emphind}[1]{\emph{#1}\index{#1}}

\definecolor{blue25emph}{rgb}{0, 0, 11}

\providecommand{\emphic}[2]{}
\renewcommand{\emphic}[2]{\textcolor{blue25emph}{%
      \textbf{\emph{#1}}}\index{#2}}

\providecommand{\emphi}[1]{}%
\renewcommand{\emphi}[1]{\emphic{#1}{#1}}

\definecolor{almostblack}{rgb}{0, 0, 0.3}

\providecommand{\emphw}[1]{}%
\renewcommand{\emphw}[1]{{\textcolor{almostblack}{\emph{#1}}}}%

\providecommand{\emphOnly}[1]{}%
\renewcommand{\emphOnly}[1]{\emph{\textcolor{blue25emph}{\textbf{#1}}}}

\newcommand{\myqedsymbol}{\rule{2mm}{2mm}}
\newcommand{\SarielThanks}[1]{%
    \thanks{%
       Department of Computer Science; %
       University of Illinois; %
       201 N. Goodwin Avenue; %
       Urbana, IL, 61801, USA; %
       \href{mailto:spam@illinois.edu}{sariel@illinois.edu}; %
       \url{http://sarielhp.org/}.%
    #1%
    }%
}

\newcommand{\HLinkB}[2]{\hyperref[#2]{#1}}
\newcommand{\HLinkD}[2]{\hyperref[#2]{\ref*{#2}}}
\newcommand{\HLink}[2]{\hyperref[#2]{#1~\ref*{#2}}}
\newcommand{\HLinkSuffix}[3]{\hyperref[#2]{#1\ref*{#2}{#3}}}

\newcommand{\figlab}[1]{\label{fig:#1}}
\newcommand{\figref}[1]{\HLink{Figure}{fig:#1}}
\newcommand{\figrefX}[1]{\HLinkD{}{fig:#1}}

\newcommand{\thmlab}[1]{{\label{theo:#1}}}
\newcommand{\thmref}[1]{\HLink{Theorem}{theo:#1}}

\newcommand{\corlab}[1]{\label{cor:#1}}

\newcommand{\seclab}[1]{\label{sec:#1}}
\newcommand{\secref}[1]{\HLink{Section}{sec:#1}}

\providecommand{\deflab}[1]{\label{def:#1}}
\newcommand{\defref}[1]{\HLink{Definition}{def:#1}}
\newcommand{\defrefY}[2]{\hyperref[def:#1]{#2}}

\newcommand{\apndlab}[1]{\label{apnd:#1}}

\newcommand{\tbllab}[1]{\label{table:#1}}
\newcommand{\tblref}[1]{\HLink{Table}{table:#1}}
\newcommand{\tblrefX}[1]{\HLinkD{}{table:#1}}

\newcommand{\lemlab}[1]{\label{lemma:#1}}
\newcommand{\lemref}[1]{\HLink{Lemma}{lemma:#1}}%

\newcommand{\tedlab}[1]{\label{tedium:#1}}
\newcommand{\tedref}[1]{\HLink{Tedium}{tedium:#1}}%

\providecommand{\eqlab}[1]{}%
\renewcommand{\eqlab}[1]{\label{equation:#1}}

\providecommand{\remove}[1]{}%
\newcommand{\Set}[2]{\left\{ #1 \;\middle\vert\; #2 \right\}}

\newcommand{\pth}[1]{\mleft(#1\mright)}%

\newcommand{\ceil}[1]{\mleft\lceil {#1} \mright\rceil}

\newcommand{\brc}[1]{\left\{ {#1} \right\}}
\newcommand{\set}[1]{\brc{#1}}%

\newcommand{\cardin}[1]{\left\lvert {#1} \right\rvert}%

\renewcommand{\th}{th\xspace}

\renewcommand{\Re}{\mathbb{R}}%

\newlist{compactenumA}{enumerate}{5}%
\setlist[compactenumA]{topsep=0pt,itemsep=-1ex,partopsep=1ex,parsep=1ex,%
   label=(\Alph*)}%

\newlist{compactenuma}{enumerate}{5}%
\setlist[compactenuma]{topsep=0pt,itemsep=-1ex,partopsep=1ex,parsep=1ex,%
   label=(\alph*)}%

\newlist{compactenumI}{enumerate}{5}%
   \setlist[compactenumI]{topsep=0pt,itemsep=-1ex,partopsep=1ex,parsep=1ex,%
   label=(\Roman*)}%
\newlist{compactenumIn}{enumerate*}{5}%
\setlist[compactenumIn]{%
   label=(\Roman*)}%

\newlist{compactenumin}{enumerate*}{5}%
\setlist[compactenumin]{label=(\roman*)}%

\newlist{compactenumi}{enumerate}{5}%
\setlist[compactenumi]{topsep=0pt,itemsep=-1ex,partopsep=1ex,parsep=1ex,%
   label=(\roman*)}%

\newlist{compactitem}{itemize}{5}%
\setlist[compactitem]{topsep=0pt,itemsep=-1ex,partopsep=1ex,parsep=1ex,%
   label=\ensuremath{\bullet}}%

   \numberwithin{figure}{section}%
   \numberwithin{table}{section}%
   \numberwithin{equation}{section}%

\usepackage{todonotes}

\renewcommand{\P}{P}%
\newcommand{\Q}{Q}%
\newcommand{\eps}{\varepsilon}%
\newcommand{\epsA}{\xi}%

\newcommand{\Term}[1]{\textsf{#1}}

\definecolor{vdarkyred}{rgb}{0.1, 0, 0}
\newcommand{\ProblemC}[1]{{\textsf{\textcolor{vdarkyred}{#1}}}}

\newcommand{\SetCover}{\ProblemC{Set{}Cover}\xspace}
\newcommand{\VertexCover}{\ProblemC{Vertex{}Cover}\xspace}

\newcommand{\CliqueCover}{\ProblemC{Clique{}Cover}\xspace}
\newcommand{\EdgeCliqueCover}{\ProblemC{EClique{}Cover}\xspace}

\providecommand{\PairwiseClustering}{\ProblemC{Pairwise{}\-Clustering}\xspace}

\newcommand{\Eps}{\ensuremath{\tfrac{1}{\eps}}\xspace}
\newcommand{\EpsA}{\ensuremath{\tfrac{1}{\epsA}}\xspace}
\newcommand{\WSPD}{\Term{WSPD}\xspace}
\newcommand{\SSPD}{\Term{SSPD}\xspace}
\newcommand{\eSSPD}{\ensuremath{\Eps}-\Term{SSPD}\xspace}

\newcommand{\cWSPD}{\Term{CoverWSPD}\xspace}
\newcommand{\pWSPD}{\Term{PartitionWSPD}\xspace}
\newcommand{\MST}{\Term{MST}\xspace}
\newcommand{\PTAS}{\Term{PTAS}\xspace}

\newcommand{\eWSPD}{\ensuremath{\tfrac{1}{\eps}}-\WSPD{}\xspace}
\newcommand{\ecWSPD}{\ensuremath{\tfrac{1}{\eps}}-\cWSPD{}\xspace}
\newcommand{\etal}{\textit{et~al.}\xspace}

\definecolor{VDGreen}{rgb}{0.0, 0.10, 0}

\newcommand{\TextX}[1]{%
   \begin{minipage}{0.18\linewidth}
       #1
   \end{minipage}
}

\newcommand{\InsertFigure}[7]{%
\begin{figure}[H]
    \phantom{}%
    \hfill%
    \includegraphics[page=2,width=0.18\linewidth]{figs/#2}
    \hfill
    \includegraphics[page=4,width=0.18\linewidth]{figs/#2}
    \hfill%
    \includegraphics[page=6,width=0.18\linewidth]{figs/#2}
    \hfill%
    \includegraphics[page=8,width=0.18\linewidth]{figs/#2}
    \hfill%
    \includegraphics[page=10,width=0.18\linewidth]{figs/#2}
    \hfill%
    \phantom{}

    \phantom{}
    \hfill%
    \TextX{Greedy: #6}%
    \hfill%
    \TextX{\WSPD: #3}%
    \hfill%
    \TextX{\AprxT: #4}%
    \hfill%
    \TextX{\AprxTC: #5}%
    \hfill%
    \TextX{\IP: #7}%
    \hfill%
    \phantom{}

    \caption{Solution for $n={#1}$, the input is $1,2, \ldots , n$, and $\eps=1.0$.}
    \figlab{#2}
\end{figure}
}

\DefineNamedColor{named}{OliveGreen} {cmyk}{0.64,0,0.95,0.20} \providecommand{\ComplexityClass}[1]{%
   {{\textnormal{\textsc{%
               \textcolor[named]{VDGreen}{%
                  #1}}}}}%
}%
\renewcommand{\ComplexityClass}[1]{{{\textcolor[named]{VDGreen}{%
            \textnormal{\textsc{#1}}}}}}%
\providecommand{\NPHardness}{{\ComplexityClass{NP-Hardness}}\xspace}%
\providecommand{\NPHard}{{\ComplexityClass{NP-Hard}}%
   \index{NP!hard}\xspace}%
  \providecommand{\NP}{\ComplexityClass{NP}%
   \index{NP}\xspace}  \providecommand{\POLYT}{\ComplexityClass{P}\xspace}

\newcommand{\dC}{\mathcalb{d}}%
\newcommand{\DistChar}{\dC}%
\newcommand{\dmY}[2]{\DistChar\pth{#1,#2}}%

\newcommand{\dminY}[2]{\dC_{\min}\pth{#1,#2}}
\newcommand{\dmaxY}[2]{\dC_{\max}\pth{#1,#2}}

\newcommand{\G}{G}
\newcommand{\VV}{\mathsf{V}}
\newcommand{\Edges}{\mathsf{E}}

\newcommand{\VX}[1]{\VV\pth{#1}}
\newcommand{\EGX}[1]{\Edges\pth{#1}}

\newcommand{\WR}{\mathcal{W}}

\newcommand{\diamC}{\nabla}%
\newcommand{\diamX}[1]{\diamC\pth{#1}}

\newcommand{\ts}{\hspace{0.6pt}}
\newcommand{\FMS}{\EuScript{X}}%

\newcommand{\Spread}{\Psi}%
\newcommand{\SpreadX}[1]{\Spread\pth{#1}}%
\newcommand{\cpX}[1]{\mathrm{cp}\pth{#1}}
\newcommand{\pair}{\mathcalb{p}}%
\newcommand{\opair}{\mathcalb{o}}%
\newcommand{\F}{\mathcal{F}}%
\providecommand{\G}{\mathcal{G}}%
\renewcommand{\G}{\mathcal{G}}%

\newcommand{\pairsX}[1]{E_{#1}}%

\newcommand{\SQ}{\mathcal{S}}%

\newcommand{\lenX}[1]{\ell\pth{#1}}%

\newcommand{\R}{\mathcal{R}}%
\newcommand{\V}{\mathcal{V}}%
\newcommand{\rect}{\mathcalb{r}}%

\usepackage{stmaryrd}%
\providecommand{\IntRange}[1]{\mleft\llbracket #1 \mright\rrbracket}
\newcommand{\IRX}[1]{\IntRange{#1}}%

\newcommand{\minWSPD}{\Term{minWSPD}\xspace}

\newcommand{\shadowX}[1]{\mathsf{S}_{#1}}

\newcommand{\sq}{\mathcalb{s}}

\newcommand{\OPT}{\mathcal{O}}%

\newcommand{\Opt}{\mathrm{opt}}
\newcommand{\UU}{\mathcal{U}}%

\newcommand{\ball}{\mathcalb{b}}%
\newcommand{\ballY}[2]{\ball\pth{#1,#2}}%
\newcommand{\Rep}{\rho}
\newcommand{\repX}[1]{\Rep\pth{#1}}
\newcommand{\mlevelX}[1]{L({#1})}

\newcommand{\Odim}{{O(\dim)}}

\newcommand{\ZZ}{\mathbb{Z}}%

\newcommand{\parent}{\overline{\mathrm{p}}}
\newcommand{\tbase}{\tau}

\newcommand{\nettree}{net-tree\xspace}

\newcommand{\Nettrees}{Net-trees\xspace}

\newcommand{\rootX}[1]{\mathcal{r}\pth{#1}}

\newcommand{\hcite}[2][]{\emph{\textbf{\cite[#1]{#2}.}}}

\newcommand{\KCenPrcY}[2]{\dC_{\max}\pth{#1,#2}}

\newcommand{\KCenPrcOptY}[2]{\Opt_{#2}\pth{#1}}
\newcommand{\optCDiamY}[2]{\Opt_{\diamC,#1}\pth{#2}}
\newcommand{\NN}{\mathbb{N}}
\newcommand{\CS}{C}

\ifpdftex
\fi

\usepackage{bold-extra}

\newcommand{\SQSet}{\mathcal{C}}

\newcommand{\Julia}{\texttt{Julia}\xspace}%
\newcommand{\Gurobi}{\texttt{Gurobi}\xspace}%
\newcommand{\IP}{\Term{IP}\xspace}%
\newcommand{\AprxT}{\Term{Aprx3}\xspace}%
\newcommand{\AprxTC}{\Term{Aprx3C}\xspace}%

\usepackage{pict2e}%

\newcommand{\ABC}{\Term{ABC}\xspace}
\newcommand{\eABC}{\ensuremath{\tfrac{1}{\eps}}-\ABC{}\xspace}

\newcommand{\ringC}{%
   \begin{picture}(10,10)
       \color{gray}
       \linethickness{2pt}
       \put(5,3.4){\circle{6}}
   \end{picture}%
}

\newcommand{\ringZ}[3]{\ringC\pth{#1,#2,#3}}

\newcommand{\PB}{X}%
\newcommand{\PC}{Y}%

\newcommand{\pb}{x}
\newcommand{\pc}{y}

\newcommand{\obslab}[1]{\label{observation:#1}}
\newcommand{\obsref}[1]{\HLink{Observation}{observation:#1}}
\newcommand{\Cube}{\mathcal{C}}%
\newcommand{\num}{\zeta}%
\newcommand{\Grid}{\mathcal{G}}%
\newcommand{\Cell}{\square}%
\newcommand{\UC}{[0,1)^d}
\newcommand{\shift}{\nu}
\newcommand{\qte}{\mathcal{T}_\eps}
\newcommand{\qteA}{\mathcal{T}_\epsA}

\newcommand{\Pfar}{P_{\mathrm{far}}}

\newcommand{\cen}{\overline{\mathsf{c}}}
\newcommand{\cenX}[1]{\overline{\mathsf{c}}\pth{#1}}

\newcommand{\weightX}[1]{\mathrm{w}\pth{#1}}%

\newcommand{\normX}[1]{\left\| {#1} \right\|}
\newcommand{\dY}[2]{\normX{#1 - #2}}

\newcommand{\LgEpsA}{\lambda}
\newcommand{\qt}{\mathcal{T}}%

\BibTexMode{%
   \newcommand{\tcite}[1]{\cite{#1}}
}

\newcommand{\Saarbrucken}{Saarbr\"{u}cken\xspace}

\usepackage{placeins}%

\newcommand{\prflab}[1]{\label{proof:#1}}

\bibliography{wspd_min}

\usepackage{float}

\newcommand{\sparagraph}[1]{\paragraph*{#1}}

\usepackage{bbding}

\definecolor{darkgreen}{rgb}{0,0.10,0}

\newcommand{\breakhere}{\hspace{0pt}\allowbreak}

\newcommand{\Correct}{\textcolor{darkgreen}{\ensuremath{\checkmark}}}
\newcommand{\Wrong}{\textcolor{red}{{\XSolidBrush}}}

\usepackage[all]{hypcap}
\usepackage{placeins}
\usepackage[export]{adjustbox}
\newcommand{\fstabC}{\mathsf{S}}
\newcommand{\fstabX}[1]{\fstabC\pth{#1}}
\newcommand{\fstabY}[2]{\fstabC\pth{#1,#2}}

\newcommand{\SpellIgnore}[1]{#1}

\title{On Small Pair Decompositions for Point Sets}

   \author{%
      Kevin Buchin%
      \thanks{Institute of Computer Science, TU Dortmund University, Germany}%
      \and%
      Jacobus Conradi%
      \thanks{Institute of Computer Science, Universität Bonn, Germany and University of Copenhagen, Denmark}%
      \and%
      Sariel Har-Peled%
      \SarielThanks{}%
      \and%
      Antonia Kalb%
      \thanks{ TU Dortmund University, Germany.  \href{mailto:spam@spamalot.com}{antonia.kalb@tu-dortmund.de}.  \url{https://orcid.org/0009-0009-0895-8153}.  }%
      \and%
      Abhiruk Lahiri%
      \thanks{Heinrich Heine University, \SpellIgnore{D\"{u}sseldorf} and CISPA Helmholtz Centre for Information Security, \Saarbrucken, Germany}%
      \and%
      Lukas Plätz%
      \thanks{Institute of Computer Science, Ruhr University Bochum, Germany}%
      \and%
      Carolin Rehs%
      \thanks{%
         Department of Mathematics and Computing Science, TU Eindhoven, the Netherlands and Institute of Computer Science, TU Dortmund University, Germany \href{mailto:spam@spamalot.com}{carolin.rehs@tu-dortmund.de}.  \url{https://orcid.org/0000-0002-8788-1028} Supported by the PRIME \SpellIgnore{programme} of the German Academic Exchange Service (D{AA}D) with funds from the German Federal Ministry of Research, Technology and Space (BM{F}T{R})%
   }%
   \and%
   Sampson Wong%
   \thanks{Department of Computer Science, University of Copenhagen, Denmark}
   }%

\begin{document}

\date{\today}%

\maketitle

\begin{abstract}
    We study the \minWSPD problem of computing the minimum-size well-separated pairs decomposition of a set of points, and show constant approximation algorithms in low-dimensional Euclidean space and doubling metrics. This problem is computationally hard already $\Re^2$, and is also hard to approximate.

    We also introduce a new pair decomposition, removing the requirement that the diameters of the parts should be small.  Surprisingly, we show that in a general metric space, one can compute such a decomposition of size $O( \tfrac{n}{\eps}\log n)$, which is dramatically smaller than the quadratic bound for \WSPD{s}. In $\Re^d$, the bound improves to $O( d \tfrac{n}{\eps}\log \tfrac{1}{\eps } )$.
\end{abstract}

\section{Introduction}

Given a set $\P$ of $n$ points, the \emph{Well-Separated Pair Decomposition} (\WSPD), introduced by Callahan and Kosaraju \cite{ck-dmpsa-95}, is an approximate compact representation of the metric induced by $P$. It has been used to solve numerous problems, such as the all-nearest-neighbors \cite{ck-dmpsa-95}, approximate \MST~\cite{ck-dmpsa-95}, spanners \cite{ns-gsn-07}, Delaunay triangulations~\cite{bm-dt-11}, distance oracles \cite{gz-wspdu-05}, and others \cite{hm-fcnld-06}. See Smid \cite{s-wspda-18} for a survey.

For a prespecified $\eps \in (0,1)$, the \eWSPD represents the $\binom{n}{2}$ pairwise distances, of a set $\P \subseteq \Re^d$ of $n$ points, as a union of $O(n/\eps^d)$ (edge-disjoint) bicliques, where all the distances in a single biclique are the same up to a multiplicative factor in $1\pm O(\eps)$. More precisely, for a pair $\{\PB,\PC\}$ in a \eWSPD, we require that
\begin{equation*}
    \max\bigl( \diamX{\PB}, \diamX{\PC} \bigr) \leq \eps \dmY{\PB}{\PC},
\end{equation*}
where $\diamX{\PB}$ is the diameter of $\PB$, and $\dmY{\PB}{\PC} = \min_{\pb \in \PB, \pc \in \PC} \dmY{\pb}{\pc}$, where $\dmY{\pb}{\pc}$ denotes the distance between $\pb$ and $\pc$.  In words, for a biclique $ X \otimes Y$ in this cover, the diameters of the two sets are at most $\eps$-fraction of their distance from each other, see \defref{WSPD}.  This biclique cover can be computed, in $O(n \log n + n/\eps^d)$ time, where each biclique is represented as a pair of nodes in the tree, see \cite{ck-dmpsa-95}.  The \emphw{size} of the \WSPD is the number of bicliques.

The natural question is how to construct a small (or even the smallest) representation of the pairwise distances in $P$? Specifically, what is the smallest number of bicliques required to cover the $\binom{n}{2}$ pairwise distances in $\binom{\P}{2}$, so that distances in a biclique are roughly the same?  Our approach to answering the above question is first to consider how to minimize the size of the \WSPD, and secondly to introduce a relaxation of the \WSPD and construct even smaller covers, where the diameter requirement on the biclique sides is removed.

\sparagraph{Minimum-size \WSPD{s}.}

The bound on the size of a \eWSPD in $\Re^d$ is $O(n / \eps^d)$ \cite{ck-dmpsa-95}. However, this bound can be far from optimal. In $\Re^2$, the point set $\P = \Set{ p_i = \bigl( i,(4/\eps)^i \bigr) }{ i=1,\ldots, n}$ has a \eWSPD formed by the pairs $\PB_i = \{p_1, \ldots, p_{i}\}$ and $\PC_i = \{p_{i+1}\}$, for $i=2,\ldots, n$. Its size is $(n-1)$, which is significantly smaller than the worst-case bound $\Theta(n/\eps^2)$. Computing small \WSPD{s} may have downstream effects on its applications, such as the sizes of \WSPD-based spanners or distance oracles. This effect motivates us to study the \minWSPD problem.

\begin{problem}[\minWSPD]
    \label{problem:1}
    Given a set $P$ of $n$ points in a metric space $(\FMS,\dC)$, and a parameter $\eps$, compute the minimum-size \eWSPD ($\WR$) of $P$. That is, minimize the number of pairs in $\WR$, while guaranteeing that the resulting \WSPD provides the desired separation.
\end{problem}

\sparagraph{Semi-separated pairs-decomposition (\SSPD).} %
The requirements for a \WSPD are, in some regards, quite stringent. Thus, one may be interested in obtaining smaller pair decompositions by relaxing the requirements of the \WSPD. A neat example for this is semi-separated pair decomposition (\SSPD) \cite{v-dcamc-98, acfs-psspd-09, adfg-rftgs-09, adfgs-gswps-11,ah-ncsa-12}, which requires only one of the sides in each pair to be small.  Formally, for a pair $\{\PB,\PC\}$ in \eSSPD, we require that $\min\bigl( \diamX{\PB}, \diamX{\PC} \bigr) \leq \eps \dmY{\PB}{\PC}$ (note, the $\min$ here instead of $\max$ used in \WSPD). Thus, a semi-separated pair might have a ``tiny'' set (say) $\PB$ on one side, while the other side diameter might be unbounded in terms of the distance to the other set, see \figref{stability}.  For $\Re^d$, there are constructions of \eSSPD with $O(n/\eps^d)$ pairs, with the striking property that their weight is $O( \eps^{-d} n \log n)$. The weight of a pair decomposition is the total size of the sets used in it. In contrast, the weight of the \WSPD, in the worst case, is quadratic.

\sparagraph{Approximate Biclique Covers (\eABC).} %
For Approximate Biclique Covers (\ABC{s}), we are interested in a different relaxation -- what if we do not care about the diameters of the two sides of the pairs at all, as long as the distances are roughly the same? That is, for a pair $\{X,Y \}$ in the pair decomposition, the requirement is that $\dmY{x}{y} \leq (1+\eps)\dmY{X}{Y}$, for all $x \in X, y \in Y$. Such a pair $\{X,Y\}$ is \emphw{$\eps$-stable}. These three different concepts of separation are inherently different, see \figref{stability}.  This decomposition still provides sufficient information to efficiently approximate quantities such as the diameter, \MST, or the closest pair.  Specifically, small \ABC{}s should be useful in speeding up approximate matching algorithms.  More generally, one can try and use the new \ABC decompositions as a replacement for either \WSPD or \SSPD.

We are unaware of any previous work on Approximate Biclique Covers (\ABC{s}) for metric spaces (covering graphs by bi/cliques is, of course, a well-known problem). Of particular interest is whether one can obtain a small \ABC in metrics that do not admit small \WSPD{s}. In general metrics, the \WSPD can have quadratic size and weight.

\sparagraph{\WSPD{}s in other metrics.}
Although \WSPD{s} may have $\Omega(n^2)$ size in general metrics, they are known to have sub-quadratic size in special cases:
\begin{compactenumIn}%
    \item Callahan and Kosaraju~\cite{ck-dmpsa-95} construct a \eWSPD of size $O(n / \eps^d)$ in $\Re^d$.

    \item Gao and Zhang \cite{gz-wspdu-05} showed how to construct $\tfrac{1}{\eps}$-\WSPD of size $O(\eps^{-4} n \log n)$ for the unit-distance graph for a set $\P$ of $n$ points in the plane (the bound is $O_\eps(n^{2-2/d})$ for $d>2$). The 2d bound was slightly improved to $O( \eps^{-2} n \log n)$ by Har-Peled \etal \cite{hrr-wspdr-25}.

    \item Har-Peled and Mendel \cite{hm-fcnld-06} showed how to compute   \eWSPD{}s, of size $n/\eps^{O(\dim)}$  for spaces with  doubling dimension $\dim$.

    \item Gudmundsson and Wong \cite{gw-wspdl-24} showed a $\tfrac{1}{\eps}$-\WSPD of size $O( \lambda^2 \eps^{-4} n \log n)$ for a $\lambda$-low-density graph.

    \item Deryckere \etal \cite{dgrsw-wsstc-25} showed how to compute
    a linear size \eWSPD for $c$-packed graphs.
\end{compactenumIn}

\sparagraph{\WSPD as a min-covering problem.} %
A \WSPD represents $P \otimes P$ as a union of bicliques. While these bicliques are typically required to be disjoint, applications of the \WSPD do not make use of the disjointness. We therefore also investigate whether dropping this requirement results in significantly smaller \WSPD{}s, referred to as \cWSPD. In contrast, the disjoint version is \pWSPD (see \defref{WSPD}).

The problem of computing such a \cWSPD can be seen as a \SetCover instance, where the sets are all maximal well-separated pairs in $\binom{\P}{2}$. When the number of such pairs is polynomial, this readily leads to $\Theta(\log n)$ approximation of the minimum-size \cWSPD, see \secref{set_cover}. In \secref{1_dim} we present algorithms with better approximation factors for $P \subset \Re$, which in turn also leads to a constant-factor approximation of the minimum-size \pWSPD for such $P$.

\subsection{Our contributions}

This paper presents new results for the \eABC and the minimum \eWSPD.
\begin{compactenumI}[leftmargin=0.7cm]
    \smallskip%
    \item \textsf{\ABC{}s and \SSPD{}s.}  In \secref{abc}, we show, surprisingly, that \emph{any} finite metric space admits an \ABC of near-linear size. Specifically, we construct an \eABC with $O(\frac n \eps \log n)$ size. In contrast to \WSPD{s}, which only have sub-quadratic size in special cases. Furthermore, the resulting decomposition is also an \eSSPD.

    In $\Re^d$, we improve the size bound to $O( d^{3} \tfrac{n}{\eps}\log \tfrac{1}{\eps})$. Happily, the new construction is again an \eSSPD. The construction improves the number of pairs in \SSPD from the old bound of $O(n/\eps^d)$ to the stated bound, which has only polynomial dependency on the dimension $d$ instead of exponential.  While the new \ABC/\SSPD has a compact representation, it has two drawbacks: (i) it is a cover, not a disjoint cover, of the pairs, and (ii) the weight of the decomposition can be quadratic in the worst case.

    \medskip%
    \item \textsf{\WSPD in doubling metrics.}  In \secref{doubling_metrics}, we show that the classical algorithm using net-trees~\cite{hm-fcnld-06} for constructing a \WSPD in metric spaces with bounded doubling dimension is nearly optimal. More specifically, for a set of points $\P$ with constant doubling dimension and $\eps\in(0,1)$ we show that the computed \WSPD has size $O(\Opt)$, where $\Opt$ is the size of the minimum-size \WSPD of $\P$, and the running time of the algorithm is in $O(\Opt +n\log n)$, making its running time optimal output-sensitive.\footnote{It is known that computing the \WSPD requires $\Omega(n \log n)$ time due to a reduction to Nearest Pair.} This is in contrast to the best known running time of the algorithm of $O(\eps^{-d}n+n\log n)$.

    \smallskip%
    \item
    \textsf{Computational hardness.} %
    In \secref{hardness} we study the hardness of \minWSPD.  We show that for general metric spaces, computing the minimum size $3$-\WSPD is \NPHard (\thmref{NP_hard_arbitrary}). Furthermore, a simple padding of this construction shows that the problem is polynomially hard to approximate (\lemref{hard_m_s}).  In the plane, we show that deciding if a \eWSPD has a prespecified size is \NPHard, see \thmref{NP_hard_euclidean_2} (here $\eps$ is quite small, roughly $1/4n$).

    \smallskip%
    \item \textsf{Improvements in one dimension.} %
    In \secref{1_dim}, we study the special case of $\P\subset\Re$. Our first algorithm constructs a \SetCover instance and uses known \PTAS algorithms to obtain a $(1+\delta)$-approximation algorithm with running time $n^{O(1/\delta^{2})}$. Our second algorithm uses the sweep-line algorithm to obtain an easy-to-implement $3$-approximation with running time $O(n\log n)$. Finally, when one requires the separated pairs to be disjoint, we give a constant approximation.

    \smallskip%
    \item \textsf{Experiments.}  %
    In \secref{experiments} we report on our implementation of six \WSPD algorithms, for the one-dimensional case, and compare the sizes of the outputs.  The algorithms we implement are: a Greedy set cover algorithm, the \WSPD algorithm of Callahan and Kosaraju~\cite{ck-dmpsa-95}, the 3-approximation algorithm, the 3-approximation algorithm with a post-\SpellIgnore{proc\-essing} cleanup stage, an \IP solver, and an \IP solver that outputs disjoint covers.

\end{compactenumI}

\sparagraph{Comparison to previous work.}
Computing a \WSPD with instance-specific guarantees has been considered by Har-Peled \etal \cite{hrr-wspdr-25}. Our result differs in that our \eWSPD is a constant factor approximation compared to the minimum \vphantom{$\tfrac{1}{\eps}^{Z}$}\eWSPD, whereas Har-Peled \etal \cite{hrr-wspdr-25} construction size is proportional to the minimum $\tfrac{33}{\eps}^{}$-\WSPD.

\subsubsection*{Paper organization}

In \secref{wspd_def} we cover the preliminaries. In \secref{abc} we introduce the approximate biclique covers and show how to construct such a \SpellIgnore{decom\-positions}.  In \secref{doubling_metrics} we show that the net-tree based \WSPD construction is a constant factor approximation.  In \secref{hardness} we state our NP-hardness lower bounds.  In \secref{1_dim} we present our results for points for the special case of $\Re$.  Specifically, in \secref{1D_WSPD_by_SetCover}, we present a \PTAS by reducing it to \SetCover.  In \secref{1_dim_3_approx} we show an $O(n \log n)$ time algorithm that provides a $3$-approximation. In \secref{partition} we show how to convert a \cWSPD into a \pWSPD.  We present our experiments in \secref{experiments}.  Conclusions are provided in \secref{conclusions}.

\section{Preliminaries}
\seclab{wspd_def}

\subsection{Metrics, distances, balls and spread}

\begin{definition}
    \deflab{metric_space}%
    A \emphi{metric space} $\FMS$ is a pair $(\FMS, \DistChar )$, where $\FMS$ is the ground set and $\DistChar: \FMS \times \FMS \rightarrow [0, \infty)$ is a \emphi{metric} satisfying the following conditions for all $x,y\in \FMS$: (i) $\dmY{x}{y} = 0$ if and only if $x =y$, (ii) $\dmY{x}{y} = \dmY{y}{x}$, and (iii) $\dmY{x}{y} + \dmY{y}{z} \geq \dmY{x}{z}$.

    For a set $\P \subseteq \FMS$, its \emphi{diameter} is
    \begin{equation*}
        \diamX{\P} = \max_{x,y \in \P} \dmY{x}{y}.
    \end{equation*}
    The \emphi{distance} between two sets $\PB, \PC \subseteq \FMS$ is
    \begin{equation*}
        \dmY{\PB}{\PC} = \dminY{\PB}{\PC} =\min_{\pb \in \PB, \pc \in \PC} \dmY{\pb}{\pc}.
    \end{equation*}
    The \emphi{maximum distance} between the two sets is
    \begin{equation*}
        \dmaxY{\PB}{\PC} =\max_{\pb \in \PB, \pc \in \PC} \dmY{\pb}{\pc}.
    \end{equation*}
\end{definition}

\begin{definition}
    For a metric space $\FMS$, and a point $c \in \FMS$, let
    \begin{equation*}
        \ball(c,r) = \Set{ p\in \FMS}{ \dmY{c}{p}\leq r }
    \end{equation*}
    denote the \emphi{ball} of radius $r$ centered at $c$.
\end{definition}

\begin{definition}
    \deflab{spread}%
    For a set $\P \subseteq \FMS$, its \emphi{closest pair distance} is
    \begin{equation*}
        \cpX{\P} = \min_{x,y \in \P: x \neq y} \dmY{x}{y}.
    \end{equation*}
    The \emphi{spread} of $\P$ is $\SpreadX{\P} = \diamX{\P} / \cpX{\P}$.
\end{definition}

\begin{definition}
    \deflab{o_times}
    For two sets $X,Y \subseteq \FMS$, let
    \begin{equation*}
        X \otimes Y =%
        \Set{ \{x, y\}}{x \in X, y \in Y, x \neq y}.
    \end{equation*}
    In particular, let $\binom{X}{2} = X \otimes X$ be the of all (unordered) pairs of $X$.
\end{definition}

\subsection{Different notions of separation}

These different notions of separation are illustrated in \figref{stability}.

\begin{definition}
    \deflab{well_separated}%
    A pair of sets $\{\PB, \PC\}$ of points in a metric space $(\FMS, \DistChar )$ is
    \begin{align*}
      &&&\text{\emphi{$\tfrac{1}{\eps}$-separated}}
      &\quad\text{if}\quad
      & \max \pth{\bigl. \diamX{\PB},
        \diamX{\PC} } \leq \eps \ts \dmY{\PB}{\PC}, \text{and}
        &
      \\
      &&& \text{\emphi{$\tfrac{1}{\eps}$-semi-separated}}
      &
        \quad\text{if}\quad{}
      &
      \min \pth{\bigl. \diamX{\PB},
        \diamX{\PC} } \leq \eps \ts \dmY{\PB}{\PC}.
        &
    \end{align*}
\end{definition}

A $\Eps$-well-separated pair is also $\Eps$-semi-separated.  See \figref{stability} for an example demonstrating these different concepts.

\begin{definition}
    \deflab{stable}%
    The \emphi{stability} of a pair $\pair = \{\PB, \PC \}$ is
    \begin{equation*}
        \fstabY{\PB}{\PC}
        =
        \frac{\dmaxY{\PB}{\PC} - \dminY{\PB}{\PC}}
        {2 \dminY{\PB}{\PC}}.
    \end{equation*}
    The pair $\pair$ is \emphi{$\eps$-stable} if $\fstabY{\PB}{\PC} \leq \eps$.
\end{definition}

\begin{remark}
    If $\pair = \{ \PB,\PC \}$ is $\eps$-stable, then
    \begin{math}
        \Bigl.\tfrac{\dmaxY{\PB}{\PC}}{\dminY{\PB}{\PC}}
        =
        1 + 2 \fstabY{\PB}{\PC}
        \leq
        1+2 \eps.
    \end{math}
    Namely, all pairwise distances in such a pair are roughly the same up to a factor of $1+2\eps$.  On the other hand, if the pair $\pair$ is $\Eps$-separated, then $\dmaxY{\PB}{\PC} \leq \diamX{\PB} + \diamX{\PC} + \dmY{\PB}{\PC} \leq (1+2\eps)\dmY{\PB}{\PC}$, and then $\fstabY{\PB}{\PC} \leq \eps$, implying that $\pair$ is $\eps$-stable.  Note, that an $\Eps$-semi-separated pair is not necessarily $\eps$-stable, see \figref{stability}.
\end{remark}

\begin{figure}
    \centering
    \begin{tabular}{|c|c|c|c|c|}
      \hline
      &
      {\includegraphics[scale=0.45,valign=c,angle=90]{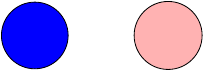}}
      &
        \includegraphics[page=2,scale=0.45,valign=c]{figs/pair}
      &
        \includegraphics[page=3,scale=0.225,valign=c]{figs/pair}
      &
        \includegraphics[page=5,scale=0.6,valign=c]{figs/pair}
      \\
      \hline
      $1$-well-separated
      &
        \Correct
      &
        \Wrong
      &
        \Wrong
      &
        \Wrong$\Bigr.$
      \\
      $1$-semi-separated
      &
        $\Correct$
      &
        $\Correct$
      &
        $\Correct$
      &
        \Wrong
      \\
      $1$-stable
      &
        $\Correct$
      &
        $\Correct$
      &
        \Wrong
      & \Correct
      \\
      \hline
    \end{tabular}
    \caption{%
       Rightmost example, if the sidelength of the grid is $1$, then $m = \dminY{\PB}{\PC} = 6$, $M = \dmaxY{\PB}{\PC} = 10$, and $\diamX{\PB} = \diamX{\PC} =\sqrt{37} \approx 6.08$. Thus, the pair $\{\PB, \PC\}$ is $\tfrac{1}{3}$-stable, as $\frac{M -m}{2m} = \tfrac{1}{3}$.}
    \figlab{stability}
\end{figure}

\subsection{Pairs decompositions}

\begin{definition}
    \deflab{pair_decomposition}%
    For a point set $\P \subseteq \FMS$, a \emphi{pair decomposition} of $\P$ is a set of pairs
    \[
        \WR = \brc{\bigl. \brc{\PB_1,\PC_1},\ldots,\brc{\PB_s,\PC_s}},
    \]
    such that
    \begin{compactenumI}[leftmargin=1cm]
        \item $\PB_i,\PC_i\subset \P$ for every $i$,
        \item $\PB_i \cap \PC_i = \emptyset$ for every $i$, and
        \item $\bigcup_{i=1}^s \PB_i \otimes \PC_i %
        = \binom{\P}{2}$.
    \end{compactenumI}
\end{definition}

\begin{definition}
    \deflab{WSPD}%
    For a point set $\P$, a \emphOnly{well-separated pair decomposition} of $\P$ with parameter $1/\eps$, denoted by \emphw{\Eps-\WSPD{}}, is a pair decomposition
    \begin{math}
        \WR = \brc{\bigl.  \brc{\PB_i,\PC_i}}_{i=1}^s
    \end{math}
    of $\P$, such that, for all $i$, the sets $\PB_i$ and $\PC_i$ are $\tfrac{1}{\eps}$-separated. The \emphi{size} of the \WSPD $\WR$ is the number of pairs in the decomposition (i.e., $s$).

    The \WSPD $\WR$ is a \pWSPD if every pair of $\binom{\P}{2}$ is covered exactly once by the pairs of $\WR$. Otherwise, it is a \cWSPD.
\end{definition}

\begin{definition}
    For a pair decomposition $\WR = \brc{\bigl.  \brc{\PB_i,\PC_i}}_{i=1}^s$ its \emphi{weight} is $\weightX{\WR} = \sum_i ( |\PB_i| + |\PC_i|)$.
\end{definition}
It is known that the weight of any pair decomposition of a set of $n$ points is $\Omega(n \log n)$ \cite[Lemma~3.31]{h-gaa-11}. In the worst case, the weight of a \WSPD of $n$ points is $\Omega(n^2)$.

\begin{definition}
    A pair decomposition $\WR$ of a point set $P$ is \emphi{semi-separated pair decomposition} (\emphi{$\Eps$-\SSPD{}}) of $P$ with parameter $1/\eps$, if all the pairs in $\WR$ are $\Eps$-semi-separated.
\end{definition}

Surprisingly, one can construct \eSSPD with total weight $O(\eps^{-d}n  \log n)$ with $O(\eps^{-d} n )$ pairs in $\Re^d$ \cite{adfgs-gswps-11}. Similar bounds are known for doubling metrics \cite{ah-ncsa-12}.

\subsection{Doubling metrics}
\seclab{doubling}

The \emphi{doubling constant} of a metric space $\FMS$ is the maximum, over all balls $\ball$ in the metric space $\FMS$, of the number of balls needed to cover $\ball$, using balls with half the radius of $\ball$. The logarithm of the doubling constant is the \emphi{doubling dimension} of the space.  The doubling dimension is a generalization of the Euclidean dimension, as $\Re^d$ has $\Theta(d)$ doubling dimension. Since we rarely know the precise value of the doubling constant, the value of the doubling dimension is usually not known precisely.

\subsection{Constructing \WSPD{}s via \SetCover}%
\seclab{set_cover}

Given a set $\P$ of $n$ points, one possible approach is to compute a set $\F$ of all maximal well-separated pairs in $\binom{\P}{2}$ i.e., if $\{ \PB, \PC \} \in \F$, then (i) $\PB, \PC \subseteq \P$, (ii) $\{\PB,\PC\}$ are well-separated, and (iii) there is no pair $\{\PB',\PC'\} \in \F$, such that $\PB \subseteq \PB'$ and $\PC \subseteq \PC'$. A pair $\pair = \{\PB,\PC\} \in \F$ covers the all the pairs of points in $\pairsX{\pair} = \PB \otimes \PC$.  Thus, the problem of finding a \minWSPD is equivalent to solving the minimum \SetCover problem for the instance $\bigl(\binom{\P}{2}, \Set{\pairsX{\pair}}{\pair \in \F} \bigr)$.

We can compute greedily an $\Theta(\log |\F|)$-approximation (in polynomial time) by adding at each step the pair in $\F$ that covers the maximum number of uncovered pairs in~$\Q$ \cite{s-tagas-97}. It is known that \SetCover cannot be efficiently approximated within a factor of $(1-o(1)) \ln n$, unless $\POLYT=\NP$ \cite{m-pgcnl-15}.  If one can limit the size of $\F$ to be polynomial in $n$, this readily leads to $O( \log n)$ approximation of $\minWSPD$.

Since $|\F|=O(n^2)$ for $d=1$ (see \secref{1D_WSPD_by_SetCover} for details), greedy \SetCover yields an $O( \log n)$-approximation for computing a minimum \eWSPD for a point set on the real line for any $\eps>0$.  For higher dimensions, no polynomial bound on $|\F|$ is known.

Naturally, one can restrict the sets used in building the pairs to be of a certain (smaller) family of sets. For example, in two dimensions, consider all the subsets of $\P$ of the form $\P \cap \square$ where $\square$ is some square. Then, one can solve/approximate the resulting \SetCover problem. However, even in two dimensions and squares, the resulting discrete \SetCover problem corresponds to a set of points in four dimensions, which one has to cover using a set of axis-aligned boxes. To the best of our knowledge, no better than $O( \log n)$ approximation is known in this case.

\section{Approximate biclique cover}
\seclab{abc}

\subsection{Preliminaries}

In the following, let  $\P$ be a given set of $n$ points in a metric space $(\FMS,\dC)$.

\begin{definition}
    For a parameter $\eps \in (0,1)$, a \defrefY{pair_decomposition}{pair decomposition} of $\P$ of the form
    \begin{math}
        \WR = \brc{\bigl.\brc{\PB_i,\PC_i}}_{i=1}^s
    \end{math}
    is \emphi{$\eps$-approximate biclique cover} (\emphi{\eABC}), if all its pairs are $\eps$-stable, see \defref{well_separated}.
\end{definition}
In other words, a \eABC is a biclique cover of the metric graph of $\P$, where for all $i$, the graph $(\PB_i \cup \PC_i, \PB_i \otimes \PC_i)$ is a biclique, where all the edges have roughly the same length, where an edge $\pb\pc$ is assigned the length $\dmY{x}{y}$.

\subsection{Constructing \ABC in the general metric case}

\begin{defn}
    A \emphi{ring} centered at $p$ with radii $r < R$ is the set $\ringZ{p}{r}{R} = \ballY{p}{R} \setminus \ballY{p}{r}$.
\end{defn}

\begin{theorem}
    \thmlab{abc_metric}%
    Given a set $\P$ of $n$ points in a metric space $(\FMS,\dC)$, and a parameter $\eps \in (0,1/2)$, one can construct a \eABC of $\P$ with $O( \tfrac{n}{\eps} \log n )$ pairs.
\end{theorem}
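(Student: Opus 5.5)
The plan is to build every pair of the cover as a ``ball versus ring'' pair around some center point. The basic stability observation is as follows. Fix $p \in \P$ and $r>0$, and set
\[
X = \P \cap \ballY{p}{\eps r/4}
\quad\text{and}\quad
Y = \P \cap \ringZ{p}{r}{(1+\eps/4) r}.
\]
By the triangle inequality, every $x \in X$ and $y\in Y$ satisfy $(1-\eps/4) r \leq \dmY{x}{y} \leq (1+\eps/2) r$. Hence $\fstabY{X}{Y} \leq \tfrac{(3\eps/4) r}{2(1-\eps/4) r} \leq \eps$ (using $\eps<1/2$), so $\{X,Y\}$ is $\eps$-stable, with $X\cap Y=\emptyset$. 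A pair $\{x,y\}$ with $\dmY{x}{y}=d$ is covered by such a pair as soon as some center $p$ satisfies $\dmY{p}{x} \leq \eps d/8$ and $r$ is chosen on the geometric grid $(1+\eps/8)^i$ so that $\dmY{p}{y}$ lands in $[r,(1+\eps/4)r)$. The whole task is therefore to choose, for each center $p$, a small set of radii, such that every point pair is served by some (center, radius) combination.

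To get the $\log n$ bound I would use a random permutation $\pi$ of $\P$ (and derandomize or simply bound the expectation at the end). For a pair $\{x,y\}$ at distance $d$, I designate as its center the $\pi$-first point $p$ of $\P\cap\ballY{x}{\eps d/8}$; this point always exists, since $x$ itself is a candidate. A given center $p$ then has to supply the ring around $p$ at the scale of $\dmY{p}{y}$ only for those scales $\rho$ at which $p$ is the $\pi$-minimum of a ball of radius roughly $\eps\rho$ around a nearby point. As $\rho$ grows, the ball $\P\cap\ballY{p}{\Theta(\eps \rho)}$ grows through at most $n$ distinct sizes $k$. The probability that $p$ is the $\pi$-minimum of the current ball when it has size $k$ is $1/k$. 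Summing gives $H_n = O(\log n)$ expected ``active'' ball-sizes per center, which is the source of the $\log n$.

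The main obstacle is bounding the number of rings per active stretch. While the inner ball $\ballY{p}{\eps\rho}$ stays a fixed set, $\rho$ may sweep a huge range, and the number of nonempty $(1+\eps)$-rings in that range is not controlled by the spread-free counting above. The remedy I would try first is to exploit exactly this emptiness. If $\ballY{p}{\eps \rho}$ gains no new point while $\rho$ ranges over $[a,b]$, then any $y$ at distance $\rho\in[a,b]$ from $p$ lies outside $\ballY{p}{\eps b}$ only when $\rho > \eps b$. Consequently only the rings with radius in $[\eps b, b]$ can carry new, not-yet-covered pairs, and there are $O(\tfrac1\eps\log\tfrac1\eps)$ of them. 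Rings at smaller radii involve points $y$ whose pair with $x$ is already served by a center chosen at a smaller active scale.

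If the extra $\log\tfrac1\eps$ cannot be removed this way, I would instead merge consecutive rings whose points are far from each other into a single side. The justification is that the $X$ side has diameter $\leq \eps\rho$, so a pair is stable whenever all $Y$-distances to $p$ lie within a $1+\eps$ factor. This merging should leave $O(1/\eps)$ pairs per active scale. Multiplying by the $O(\log n)$ expected active scales per center, and summing over the $n$ centers, gives the claimed $O(\tfrac n\eps\log n)$ pairs. A standard argument (or a fixed good permutation, which exists by the expectation bound) then finishes the proof. I expect the careful accounting in this last step, showing that the pairs skipped by ignoring the inner rings really are covered elsewhere, to be the delicate part of the proof.
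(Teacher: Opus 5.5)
\textbf{The gap is in the probabilistic count.} Your center for $\{x,y\}$ is the $\pi$-minimum of $\ballY{x}{\eps d/8}$, a ball around $x$. The $1/k$ computation, however, is about the $\pi$-minimum of $\ballY{p}{\Theta(\eps\rho)}$, a ball around $p$. Since $p$ may lie on the boundary of $x$'s ball, the first event does not imply the second, and the claim of $H_n$ expected active scales per center fails for your rule.

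Concretely, take a star metric with hub $p$, leaves $x_j$ at distance $2^j$ from $p$, and leaves $y_j$ at distance $(8/\eps-1)2^j$, for $j=1,\ldots,J$. Then $\eps\,\dmY{x_j}{y_j}/8=2^j$ and $\ballY{x_j}{2^j}\cap\P=\{x_j,p\}$. So $p$ is the designated center of $\{x_j,y_j\}$ with probability $1/2$ for every $j$, and must supply $\Omega(J)=\Omega(n)$ distinct rings in expectation. Yet $\ballY{p}{2^j}\supseteq\{x_1,\ldots,x_j\}$, so your heuristic predicts $O(\log n)$.

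This example does not break the global bound, but it removes the basis of your accounting. The natural repair, letting $p$ serve only scales at which it is the $\pi$-minimum of a ball around itself, loses covering. The $\pi$-minimum of $\ballY{x}{r}$ need not be the minimum of its own $r$-ball, and chasing minima from ball to ball can drift up to $nr$ away from $x$. You would need a rule that both guarantees $\dmY{p}{x}\le\eps d/8$ and ties the activity of $p$ to balls centered at $p$; the proposal has none.

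\textbf{The ring count per stretch is the lesser problem, but it is also unresolved.} Remedy 1 holds for a simpler reason than you give. If $\P\cap\ballY{p}{\eps\rho}$ is constant for $\rho\in[a,b]$, no point of $\P$ is at distance in $[a,\eps b]$ from $p$, so those rings are empty, not ``already served.'' Multiplying $O(\tfrac1\eps\log\tfrac1\eps)$ rings by $H_n$ stretches overshoots, and remedy 2 cannot fix it. With $X\subseteq\ballY{p}{\eps\rho}$, stability is governed by the spread of distances from $p$ to $Y$, so rings whose radii differ by more than a $1+O(\eps)$ factor can never share a side, however far apart their points are.

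The correct fix is amortization. At most $O(s+\tfrac1\eps\log\tfrac1\eps)$ nonempty rings have $\cardin{\P\cap\ballY{p}{\eps\rho}}\le s$. Hence the expected number of active rings at $p$ is $O(\log n+\tfrac1\eps\log\tfrac1\eps)$, which suffices because one may assume $\eps\ge 1/n$. But this again works only once activity is defined through balls around $p$.

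\textbf{How the paper avoids this.} The paper uses no randomness. It removes the point $p$ of the closest pair $\{p,q\}$, with $\ell=\dmY{p}{q}$, and recurses on $\P\setminus\{p\}$. It then adds $O(\tfrac1\eps\log n)$ pairs $\{\{p\},\text{ring}\}$ covering distances in $[\ell,n^4\ell]$; nothing lies closer than $\ell$. Each farther point $f$ is handled by inserting $p$ into the side containing $q$ of the existing pair covering $\{q,f\}$. This changes that pair's distances by a factor $1\pm O(n^{-4})$, so even after $n$ insertions its stability grows only from $\eps/16$ to $\eps/8$.

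The closest-pair choice caps the range of scales, and the proxy insertion eliminates the far scales. This is the missing idea, and it replaces both your random centers and your per-stretch ring count.
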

\begin{proof}
    The \ABC is constructed inductively. If $\cardin{P} = O(1)$ then $\WR = \binom{P}{2} = \Set{\bigl. \{ p, q\}}{p,q \in P}$.  Otherwise, let $p,q$ be the closest pair of points in $\P$, and let $Q = P \setminus \{p\}$. By induction, the point set~$Q$ admits a \eABC cover, say $\WR$. It remains to add the missing pairs~$\{p\} \otimes P$ to $\WR$. To this end, let
    \begin{equation*}
        M = 4+ \ceil{(32/\eps)\ln n}
    \end{equation*}
    and $\ell = \dmY{p}{q}$.  Let
    \begin{math}
        P_i = \ringZ{p}{r_i}{r_{i+1}} \cap \P,
    \end{math}
    where
    \begin{math}
        r_i = \ell (1+\eps/8)^{i-2},
    \end{math}
    for $i =1, \ldots, M$.  We add the pair $\pair_i = \bigl\{\{p\}, P_i \bigr\}$, for $i =1, \ldots, M$, to $\WR$.  These pairs cover all the pairs of points $x \in P$ such that $\ell \leq \dmY{p}{x} \leq n^4 \ell$.  Observe that the \defrefY{stable}{stability} of a pair $\pair_i$ is
    \begin{equation*}
        \fstabX{\pair_i}
        \leq%
        \frac{r_{i+1} - r_i }{ 2r_i}
        =
        \frac{(1+\eps/8)^{i-1} - (1+\eps/8)^{i-2}}{2(1+\eps/8)^{i-2}}
        =
        \frac{\eps}{16}.
    \end{equation*}

    For the remaining points $f \in \Pfar = P \setminus \ballY{p}{\ell n^4}$, consider the pair $qf$ (as a proxy for $pf$). It must be covered by an existing pair $\{\PB,\PC\} \in \WR$, where (say) $q \in \PB$ and $f \in \PC$. Add $p$ to the set $\PB$. This process is repeated for all the points in $\Pfar$.  Let $\WR'$ be the resulting pair decomposition of $P$, which we claim is the desired \eABC.

    Consider a pair $\pair_0 = \{\PB_0, \PC_0\}$ when it was first created. Initially,  $\pair_0$ is $\eps/16$-stable, by construction, and let $\fstabC_0$ denote its stability.  Let $\Delta_0 = \dmaxY{\PB_0}{\PC_0}$. Assume this pair evolved by the construction inserting points into it on either side, creating a sequence of pairs $\pair_0 = \{\PB_0,\PC_0\}, \ldots, \pair_t = \{\PB_t,\PC_t\}$, where $t \leq n$. Consider when the $i$\th point $p_i$ is added to the evolving pair $\pair_{i-1} = \{\PB_{i-1},\PC_{i-1}\}$ to create the new pair $\pair_i$, by (say) adding the point $p_i$ to $\PB_{i-1}$. This was done because there is a point $q_i \in \PB_{i-1}$ and $f_i \in \PC_{i-1}$, such that
    \begin{equation*}
        \dmY{p_i}{q_i} n^4
        \leq
        \dmY{q_i}{f_i}
        \leq
        \Delta_{i-1} = \dmaxY{\PB_{i-1}}{\PC_{i-1}}.
    \end{equation*}
    Thus, for $i=1,\ldots, t$, we have
    \begin{align*}
        \Delta_{i}
      &=
        \dmaxY{\PB_i}{\PC_i}
        \leq
        \Delta_{i-1} + \dmY{p_i}{q_i}
        \leq
        (1+1/n^4)
        \Delta_{i-1}
        \leq
        (1+1/n^{4})^n
        \Delta_{0}
        \\&%
        \leq
        (1+2/n^{3})
        \Delta_0.
    \end{align*}
    A similar argument implies that $\dmY{\PB_i}{\PC_i} \geq (1-4/n^{3}) \dmY{\PB_0}{\PC_0}$. A straightforward argument now shows that $\fstabY{\PB_t}{\PC_t} \leq \eps/8$, see \tedref{boring_1} (the argument uses that $n \geq 1/\eps$). Thus, all the pairs computed are $\eps/8$-stable, and the resulting decomposition is $\tfrac{1}{\eps}$-\ABC.

    We claim that the number of pairs generated is at most $f(n) = \frac c \eps n \ln n$ for some sufficiently large constant~$c$. By induction, the number of pairs in $\WR$ is at most $f(n-1)$, and the number of new pairs added to it to get $\WR'$ is $M$. For the resulting decomposition, the number of pairs is bounded by
    \[
        f(n-1) + M
        \leq%
        \frac c \eps (n-1) \ln (n-1) + 4+ \ceil{\tfrac{32}{\eps}\ln n}
        \leq%
        \frac c \eps n \ln n = f(n),
    \]
    for $c$ sufficiently large.
\end{proof}

\begin{remark*}
    We do not currently have a matching lower bound to \thmref{abc_metric}, and closing the gap is an interesting open problem.
\end{remark*}

\begin{remark}
    Observe that all the pairs of \thmref{abc_metric} start as ring sets, with one side being a single point. The sets in the pair get bigger by points being added to either side of this pair. The above argumentation implies that, at the end of the process, the diameter of the set that started as a singleton in the pair would still be at most $\eps$-fraction of the distance between the two pairs. Namely, the \ABC constructed in this case is also a $\tfrac{1}{\eps}$-\SSPD.

    Sadly, the resulting \ABC/\SSPD can have quadratic weight in the worst case. Indeed, consider a $1$-\ABC $\WR$ for the point set $P = \Set{6^i}{i =1,\ldots,n} \subseteq \Re$. If there is a pair $\{\PB, \PC\} \in \WR$, such that $\cardin{\PB} >1 $ and $\cardin{\PC} >1 $, then assume $\PC$ contain the largest number $6^\pc$ that appear in both sets, by disjointness of $\PB,\PC$, we have $6^\pc \notin \PB$.  Observe that the closest number to $6^\pc$, that is smaller than it, in $P$ is $6^{\pc -1}$. Thus, we have $\dmaxY{\PB}{\PC} \geq 6^\pc -6^{\pc-1} = 5 \cdot 6^{\pc-1}$.  Observe, that $\PC$ contains at least one other number $6^t$, for $t< \pc$. Thus, we have $\dminY{\PB}{\PC} \leq |\max(\PB) - 6^t| \leq 6^{\pc-1}$.  We thus have that
    \begin{equation*}
        \fstabY{\PB_i}{\PC_i}
        =
        \frac{\dmaxY{\PB}{\PC} - \dminY{\PB}{\PC}}
        {2 \dminY{\PB}{\PC}}
        \geq
        \frac{5\cdot 6^{\pc-1}  - 6^{\pc-1}}
        {2 \cdot 6^{\pc-1}}
        = 2.
    \end{equation*}
    Namely, such a pair can not be $1$-stable. Thus, all pairs in $\WR$ must have at least one of their sets as a singleton. But the weight of such a pair is the number of original point-pairs it covers (plus one). Since overall there are $\binom{n}{2}$ pairs, it follows that the weight of $\WR$ is at least $\binom{n}{2}$.
\end{remark}

\begin{corollary}
    \corlab{sspd_metric}%
    Given a set $\P$ of $n$ points in a metric space $(\FMS,\dC)$, and a parameter $\eps \in (0,1/2)$, one can construct a pair decomposition of $\P$ with $O( \tfrac{n}{\eps} \log n )$ pairs, that is both \eABC and \eSSPD.
\end{corollary}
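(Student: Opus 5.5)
The plan is to take the pair decomposition $\WR'$ produced in the proof of \thmref{abc_metric}, with the constant in $M$ chosen a bit more conservatively if needed. Its size is $O(\tfrac{n}{\eps}\log n)$, and that proof already shows it is an \eABC. So it only remains to show that every pair is also $\tfrac{1}{\eps}$-semi-separated. I will prove the stronger invariant that every pair $\{\PB,\PC\}\in\WR'$ has a side, call it the \emph{core side}, whose diameter is at most $\eps\,\dmY{\PB}{\PC}$. The argument is by induction over the recursive construction, following the evolution $\pair_0,\ldots,\pair_t$ of each pair exactly as in the proof of \thmref{abc_metric}.

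\textbf{Base.} Every pair is born in one of two ways. Either it comes from the base case $\binom{P}{2}$, so both sides are singletons; or it is a ring pair $\bigl\{\{p\},P_i\bigr\}$, whose side $\{p\}$ is a singleton. In both cases I designate a singleton side as the core, and it has diameter $0$.

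\textbf{Growth of the core.} When a point $p_j$ is added to the side containing $q_j$, we know $\dmY{p_j}{q_j}\le \dmY{q_j}{f_j}/n^4\le \Delta_{j-1}/n^4$. Hence each insertion raises the diameter of the side it enters by at most $\Delta_{j-1}/n^4\le 2\Delta_0/n^4$, using the bound $\Delta_j\le(1+2/n^3)\Delta_0$ from the proof. After at most $n$ insertions, the core side has diameter at most $2n\Delta_0/n^4=2\Delta_0/n^3$. This holds whether or not the core was the side that grew.

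\textbf{Comparison with the distance.} Since $\pair_0$ is $\eps/16$-stable, $\Delta_0\le(1+\eps/8)\,\dmY{\PB_0}{\PC_0}$. The proof of \thmref{abc_metric} also gives $\dmY{\PB_t}{\PC_t}\ge(1-4/n^3)\,\dmY{\PB_0}{\PC_0}$. Combining these, the core diameter is at most
\[
\frac{2(1+\eps/8)}{n^3(1-4/n^3)}\,\dmY{\PB_t}{\PC_t}\;\le\;\eps\,\dmY{\PB_t}{\PC_t},
\]
which holds once $n^3\ge 4/\eps$. This is implied by the assumption $n\ge 1/\eps$ already used in \tedref{boring_1}.

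\textbf{Small $n$.} If $n<1/\eps$, I would avoid the issue by using the trivial decomposition $\binom{P}{2}$ (all singleton pairs, which are both stable and separated). Its size is $O(n^2)=O(n/\eps)$, which is within the bound.

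\textbf{Main obstacle.} The delicate point is that a pair can grow on both sides, so I must make sure the insertions into the core side are controlled even though they are triggered by pairs $q_jf_j$ with $q_j$ on the core side. The key is that the per-insertion bound depends only on $\Delta_{j-1}$, not on which side receives the point. So the total growth is bounded by the number of insertions times $\Delta_0/n^4$ regardless, and the count is at most $n$ because each point $p$ is removed only once in the recursion and is added to a given pair at most once.
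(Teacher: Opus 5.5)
Your proposal is correct and is essentially the paper's own argument. The paper justifies the corollary through the remark after \thmref{abc_metric}: the side of each pair that started as a singleton only ever receives points $p_j$ lying within distance roughly $\Delta_{j-1}/n^4$ of a point already on that side, so its diameter remains an $\eps$-fraction of the pair's distance. You make this quantitative, using a diameter increase of at most $\dmY{p_j}{q_j}$ per insertion, at most $n$ insertions, and the lower bound $\dmY{X_t}{Y_t}\geq(1-4/n^3)\,\dmY{X_0}{Y_0}$ from the theorem's proof, and you correctly note that the base-case pairs also have a singleton side. As in the paper, this accounting assumes the same global $n$ in the threshold $n^4\ell$ at every level of the recursion, together with the $n\geq 1/\eps$ case that you handle separately.
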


\begin{observation}
    An easy inductive argument shows that the resulting \ABC{}s/\breakhere{}\SSPD{}s of \thmref{abc_metric} are disjoint -- every pair is covered exactly once by the cover computed.
\end{observation}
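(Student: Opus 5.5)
We prove disjointness by induction on $n = \cardin{\P}$, following the recursive construction in the proof of \thmref{abc_metric}. The claim is that the output $\WR'$ satisfies: (a) $\PB \cap \PC = \emptyset$ for every pair, and (b) every $\{x,y\} \in \binom{\P}{2}$ lies in $\PB \otimes \PC$ for exactly one pair $\{\PB,\PC\} \in \WR'$. The base case $\WR = \binom{\P}{2}$ with singleton pairs is trivially disjoint. For the inductive step, we assume the decomposition $\WR$ of $Q = \P \setminus \{p\}$ covers every pair of $\binom{Q}{2}$ exactly once. We then check that the two kinds of modifications (adding ring pairs, and inserting $p$ into existing sets) create no double coverage.

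\emph{Step 1: pairs not involving $p$.} The ring pairs $\{\{p\},P_i\}$ cover only pairs containing $p$. Inserting $p$ into some side $\PB$ of an existing pair $\{\PB,\PC\}$ adds exactly the new pairs $\{p\} \otimes \PC$. Here we need $p \notin \PC$, which holds because $p \notin Q$; so (a) is preserved as well. Hence the coverage multiplicity of every $\{x,y\} \in \binom{Q}{2}$ is unchanged, and by induction it equals one.

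\emph{Step 2: pairs $\{p,x\}$ with $x \in Q$.} We split into two cases by $\dmY{p}{x}$. Since $\ell$ is the closest pair distance, $\dmY{p}{x} \ge \ell$ always holds.

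\emph{Case $\dmY{p}{x} \le \ell n^4$.} Because $r_1 < \ell$ and $r_{M+1} \ge \ell n^4$, and the rings $P_i$ are pairwise disjoint, $x$ lies in exactly one ring. So $\{p,x\}$ is covered by exactly one ring pair. We also need that $\{p,x\}$ is not covered by any enlarged pair. An enlarged pair only gains coverage of $\{p\}\otimes\PC$, where $\PC$ contains a far point. So we must make sure that $p$ is never put into a set whose partner contains such a near $x$. This is the main obstacle.

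To handle it, we change the insertion rule so that $p$ is inserted only once per existing pair: we process far points $f$, and if $p$ is already in the pair covering $\{q,f\}$, we skip it. Then we argue via stability that every point of $\PC$ is far. The pair $\{\PB,\PC\}$ has stability at most $\eps/8$ and contains the edge $\{q,f\}$ of length at least $\dmY{p}{f} - \ell \geq (n^4 - 1)\ell$. So every $y \in \PC$ satisfies
\[
\dmY{q}{y} \ge (n^4-1)\ell/(1+\eps/4),
\]
hence $\dmY{p}{y} \ge \dmY{q}{y} - \ell$, which is much larger than $\ell$. This pushes such points beyond the near range only up to a constant factor. To get a clean split, we adjust the threshold: we cover near points by rings up to $\ell n^4$, and declare far to mean $\dmY{p}{f} > \ell n^4$. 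Then we observe that $y \in \PC$ together with the stable edge $\{q,f\}$ forces $\dmY{q}{y} \geq \dmY{q}{f}/(1+\eps/4)$. If this still leaves a gap, we extend the rings slightly and prune ring points that are already covered via proxies. This preserves the count $M$ up to a constant.

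\emph{Case $x$ far.} $\{q,x\}$ is covered by exactly one pair of $\WR$ by the induction hypothesis, so $p$ is added to exactly one set $\PB \ni q$ with $x$ in the partner. Uniqueness then follows: if another pair $\{\PB',\PC'\}$ received $p$ and also has $x \in \PC'$, that insertion was triggered by some far $f'$ with $q \in \PB'$. Then $\{q,x\}$ would be covered twice in $\WR$, contradicting the induction hypothesis. Combining the cases gives exact coverage, and hence the observation.
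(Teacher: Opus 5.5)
Your Step~1 and your far-case uniqueness argument are fine, and you correctly identify the obstacle in the near case. But the near case contains a genuine gap, and it cannot be closed for the construction as written.

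You need that no $x$ with $\dmY{p}{x}\le n^4\ell$ lies in the partner side $Y$ of a pair $\{X,Y\}\in\WR$ whose $X$-side receives $p$. Otherwise $\{p,x\}$ is covered both by its ring pair and by $\{X\cup\{p\},Y\}$. Stability only yields $\dmY{p}{y}\ge (n^4-1)\ell/(1+\eps/4)-\ell$ for $y\in Y$. Moving the threshold cannot help, since a stable side $Y$ can contain points just below and just above any threshold.

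This double coverage really happens. On the line take $p=0$, $q=1$, $q'=3$, $y=n^4-5$, $f=n^4+5$, plus widely spaced filler points far away.
\begin{itemize}
\item In the call on $Q$ the closest pair is $\{q,q'\}$. If $q$ is the point removed there, its rings reach distance about $2(n-1)^4>n^4$.
\item The distances of $y$ and $f$ to $q$ differ by a factor $1+O(n^{-4})$, so generically they fall into the same ring. Hence $\WR$ contains $\{\{q\},P'_j\}$ with $y,f\in P'_j$.
\item Back in $P$, $f$ is far, so $p$ joins $\{q\}$, while $y$ lies in a ring around $p$. Now $\{p,y\}$ is covered twice.
\end{itemize}
Likewise, if $r_{M+1}>n^4\ell$, the far points in the last ring are covered twice. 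So disjointness actually fails for the literal construction. Also, your rule ``insert $p$ only once per pair'' does nothing, since the sides are sets.

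The missing idea is a change to the construction, which you mention only as a hedged fallback:
\begin{itemize}
\item Run the proxy step first.
\item Let $C\subseteq Q$ be the set of $x$ for which $\{p,x\}$ is now covered.
\item Form the ring pairs on $Q\setminus C$ only.
\end{itemize}
Pruning only shrinks ring pairs, so their $\eps/16$-stability, their semi-separation, and the count $M$ are preserved. The stability analysis of the enlarged pairs uses only the far witness $f$, not the other points of $Y$.

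With this change the induction is indeed easy:
\begin{itemize}
\item All new coverage involves $p$, so multiplicities in $\binom{Q}{2}$ are unchanged.
\item $p$ is only ever added to the side containing $q$, so the two sides of each pair stay disjoint.
\item Each $x\in C$ lies in the partner of exactly one enlarged pair, since two such pairs would both cover $\{q,x\}$ in $\WR$. This is your uniqueness argument.
\item Each $x\in Q\setminus C$ has $\ell\le\dmY{p}{x}\le n^4\ell$, because all far points lie in $C$. Hence it lies in exactly one half-open ring, provided $M$ is large enough that $r_{M+1}\ge n^4\ell$.
\end{itemize}
State this modified construction up front and drop the threshold-adjustment and ``extend the rings'' detours, which accomplish nothing.
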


High-dimensional data often has a low spread (i.e., the spread is a small constant). By using only the rings in the construction of \thmref{abc_metric}, we readily get the following.

\begin{corollary}
    \corlab{abc_metric}%
    Given a set $\P$ of $n$ points in a metric space $(\FMS,\dC)$, with
    \defrefY{spread}{spread} $\Spread = \SpreadX{\P}$, one can construct a pair decomposition  of $\P$ of size $O\bigl(\tfrac{n}{\eps} \log \Spread\bigr)$, that is both  \eABC and \eSSPD.
\end{corollary}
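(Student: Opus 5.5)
We follow the inductive scheme of the proof of \thmref{abc_metric}, but drop the ``far points'' step: since the spread is bounded, every distance from the removed point already lies in a range covered by a bounded number of rings. Let $\Spread = \SpreadX{\P}$ and $\ell_0 = \cpX{\P}$, so every pairwise distance of $\P$ lies in $[\ell_0, \Spread \ell_0]$. We order the points $p_1,\ldots,p_n$ arbitrarily and, at step $j$, add the pairs connecting $p_j$ to $\P_{j} = \{p_{j+1},\ldots,p_n\}$. Using closest pairs as in the theorem is not needed here.

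For the point $p = p_j$ we use the rings $R_i = \ringZ{p}{r_i}{r_{i+1}} \cap \P_j$, with $r_i = \ell_0(1+\eps/8)^{i-2}$ for $i = 1,\ldots,M$. Here $M$ is the smallest index with $r_{M+1} \geq \Spread \ell_0$, namely
\begin{math}
    M = 2 + \ceil{\log_{1+\eps/8} \Spread} = O\bigl(\tfrac{1}{\eps}\log \Spread\bigr),
\end{math}
using $\ln(1+\eps/8) \geq \eps/16$ for $\eps<1$. Since $r_1 < \ell_0$, every $x \in \P_j$ satisfies $r_1 < \dmY{p}{x} \leq r_{M+1}$. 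Hence $x$ lies in exactly one ring, and the pairs $\{\{p\}, R_i\}$ (keeping only the nonempty ones) cover $\{p\} \otimes \P_j$ exactly once. Taking the union over all $j$ covers every pair of $\binom{\P}{2}$ exactly once. The total count is $\sum_j M = O(\tfrac{n}{\eps}\log \Spread)$. (For tiny spread, say $\Spread<2$, the bound is read as $O(n/\eps)$, consistent with $M = O(1+\tfrac1\eps\log\Spread)$.)

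Stability follows by the same computation as in the theorem. For the pair $\{\{p\},R_i\}$ we have $\dminY{\{p\}}{R_i} \geq r_i$ and $\dmaxY{\{p\}}{R_i} \leq r_{i+1}$. Since $t \mapsto (t-s)/(2s)$ is decreasing in $s$ and increasing in $t$,
\begin{math}
    \fstabY{\{p\}}{R_i} \leq \frac{r_{i+1}-r_i}{2r_i} = \frac{\eps}{16} \leq \eps.
\end{math}
Semi-separation is immediate because $\diamX{\{p\}} = 0 \leq \eps\,\dmY{\{p\}}{R_i}$. So the decomposition is both an \eABC and an \eSSPD.

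There is essentially no obstacle. The only point needing care is checking that the ring range $[r_1, r_{M+1}]$ really contains all of $[\ell_0, \Spread\ell_0]$, so that no far points remain and the proxy-insertion step of \thmref{abc_metric}, along with its stability-degradation analysis, is unnecessary. The rest is the elementary estimate of $M$.
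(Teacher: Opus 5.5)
Your proposal is correct and is essentially the paper's argument, which is only the remark that one keeps just the ring pairs from the construction of \thmref{abc_metric}. Since all distances lie in $[\cpX{\P}, \Spread\,\cpX{\P}]$, $O(\eps^{-1}\log\Spread)$ rings per point cover everything and the far-point insertion step disappears; your arbitrary ordering with the global closest-pair distance, in place of the closest-pair removal order, is an inessential simplification. One small slip: the smallest index with $r_{M+1}\geq \Spread\ell_0$ is $M = 1+\ceil{\log_{1+\eps/8}\Spread}$ rather than $2+\ceil{\log_{1+\eps/8}\Spread}$, but your larger $M$ works just as well.
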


\subsection{Construction \ABC in the Euclidean case}

\subsubsection{The construction}

We start with some standard definitions, which are taken from Chan \etal \cite{chj-lsoa-20}.  In the following, we fix a parameter $\epsA = 1/2^\LgEpsA$, for some integer $\LgEpsA >0$, and use
\begin{equation}
    \EpsA =2^\LgEpsA
    \qquad\text{and}\qquad %
    \LgEpsA = \log_2 \EpsA.
    \eqlab{eps}
\end{equation}

\begin{defn}
    \deflab{t_grid}%
    Let $\Cube \subseteq \Re^d$ be an axis-parallel cube with side length $\num$.  For an integer $t > 1$, partitioning $\Cube$ uniformly into a $t \times t \times \cdots \times t$ subcubes, forms a \emphi{$t$-grid} $\Grid(\Cube, t)$.  The diameter of a cube $\Cell$ is $\diamX{\Cell} = \mathrm{sidelength}(\Cell)\sqrt{d}$, and let $\cenX{\Cell}$ denote its \emphi{center}.
\end{defn}

\begin{defn}
    An \emphi{$\epsA$-quadtree} $\qte$ is a quadtree-like structure, built on a cube with side length $\num$, where each cell is partitioned into an $\EpsA$-grid.  The construction then continues recursively into each grid cell of interest.  As such, a node in this tree has up to ${1}/{{\epsA}^d}$ children, and a node at level $i \geq 0$ has an associated cube of side length $\num \epsA^{i}$. When $\EpsA = 2$, this is a regular quadtree.
\end{defn}

Informally, an $\epsA$-quadtree connects a node directly to its descendants that are $\LgEpsA$ levels below it in the regular quadtree. Thus, to cover the quadtree, we need $\LgEpsA$ different $\epsA$-quadtrees to cover all the levels of the quadtree. This cover can be realized by setting the level of the root of the $\epsA$-quadtree. Specifically, we fix the root of the $\epsA$-quadtree $\qte^i$ to be $[0,2^{i})^d$, for $i=1, \ldots, \LgEpsA$. We thus get the following.
\begin{lemma}
    \lemlab{reg_to_eps_quad}%
    Let $\qt$ be a regular (infinite) quadtree over $[0, 2)^d$.  There are $\LgEpsA$ $\epsA$-quadtrees $\qteA^1, \ldots, \qteA^{\LgEpsA}$, such that the collection of cells at each level in $\qt$ is contained in exactly one of these $\LgEpsA$ $\epsA$-quadtrees.
\end{lemma}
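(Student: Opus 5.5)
We prove this by a direct residue-class argument on the levels of the regular quadtree $\qt$. Index the levels of $\qt$ by integers $j$, so that a cell at level $j$ has side length $2^{-j}$ (with $\qt$ extended upward as needed: level $-i$ corresponds to the cube $[0,2^{i})^d$, and level $-1$ is the root $[0,2)^d$). The cells of level $j$ form the grid of dyadic cubes $\prod_{k}[a_k 2^{-j},(a_k+1)2^{-j})$ restricted to the region of interest.

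For $i \in \{1,\ldots,\LgEpsA\}$, let $\qteA^i$ be the $\epsA$-quadtree rooted at $[0,2^i)^d$. Its level-$m$ cells are obtained by partitioning the root $m$ times into $\EpsA$-grids, where $\EpsA = 2^{\LgEpsA}$. They are therefore the cubes of side length $2^i \epsA^m = 2^{i-m\LgEpsA}$ aligned to the grid through the origin, since $2^{\LgEpsA}$ divides the dyadic grid exactly. So the level-$m$ cells of $\qteA^i$ are exactly the dyadic cubes of level $j = m\LgEpsA - i$ that lie inside $[0,2^i)^d$. Every dyadic cube at level $j$ inside $[0,2)^d$ lies in $[0,2^i)^d$, since $i\geq 1$.

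The map $(i,m)\mapsto m\LgEpsA - i$, with $i\in\{1,\ldots,\LgEpsA\}$ and $m\geq 0$, hits every integer $j\geq -1$ (for $j\ge 0$ use $m\ge1$). Moreover, for a given $j$ the index $i$ is forced: it is the unique $i\in\{1,\ldots,\LgEpsA\}$ with $i\equiv -j \pmod{\LgEpsA}$. Hence the set of cells at level $j$ of $\qt$ appears as a full level in exactly one $\qteA^i$, which proves the statement.

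Two points need some care. First, the levels of $\qt$ that lie above the root of $\qteA^i$ should be excluded; each such level is still covered by another $\qteA^{i'}$ with a larger root. Second, "contained in exactly one" must be read level-wise, since a single cube could trivially appear in several trees at different depths only if side lengths coincide, and they do not, by the uniqueness of the residue. The main obstacle is this bookkeeping of level offsets. The remaining steps follow directly from the dyadic alignment.
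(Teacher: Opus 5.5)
Your proposal is correct and uses the same argument as the paper. The paper fixes the root of $\qteA^i$ to be $[0,2^i)^d$ for $i=1,\ldots,\LgEpsA$ and notes that an $\epsA$-quadtree jumps $\LgEpsA$ levels at a time, so these $\LgEpsA$ offsets hit every level; you make the residue bookkeeping $j = m\LgEpsA - i$ explicit. Two minor points: your first ``point of care'' is moot, since every root $[0,2^i)^d$ sits at or above the root of $\qt$; and for $i \geq 2$ the level-$j$ cells of $\qt$ are a proper subset of the corresponding level of $\qteA^i$ rather than the full level, which is exactly what ``contained in'' permits.
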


We need the following result on shifted quadtrees (see also \cite[Appendix~A.2]{chj-lsoa-20}).

\begin{lemma}[\cite{c-annqr-98},~Lemma~3.3]
    \lemlab{shifting}%
    Consider any two points $p, q$ in the unit cube $\UC$, and let $\qt$ be the infinite quadtree of $[0,2)^d$.  For $D= 2\ceil{d/2}$ and $i = 0, \ldots, D$, let $\shift_i = (i/(D+1), \ldots, i/(D+1))$. Then there exists an $i \in \brc{0, \ldots, D}$, such that $p + \shift_i$ and $q + \shift_i$ are contained in a cell of $\qt$ with side length $\leq 2(D+1) \dY{p}{q}$.%
\end{lemma}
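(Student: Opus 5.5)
The plan is the standard "few shifts, pigeonhole over coordinates" argument. Let $r = \dY{p}{q}$.

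\emph{Trivial case.} If $(D+1) r \geq 1$, the root cell $[0,2)^d$ already works. Its side length is $2 \leq 2(D+1)r$, and $p+\shift_i, q+\shift_i \in [0,2)^d$ for every $i$, because all coordinates of $\shift_i$ are less than $1$.

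\emph{Choice of level.} Otherwise $(D+1)r < 1$. Let $L = 2^{-k}$, with $k \geq 0$ an integer, be the smallest power of two satisfying $L > (D+1) r$. Then $L \leq 1$ and $L \leq 2(D+1) r$. The cells of $\qt$ of side length $L$ are the products of half-open intervals $[mL, (m+1)L)$.

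\emph{When a shift fails.} The points $p+\shift_i$ and $q+\shift_i$ lie in a common cell of side $L$ unless, for some coordinate $j$, a grid value $mL$ separates $p_j + i/(D+1)$ from $q_j + i/(D+1)$. Call such a pair $(i,j)$ \emph{bad}. Badness of $(i,j)$ depends only on the residue $i/(D+1) \bmod L$. It means that this residue falls in a fixed half-open window of length $\cardin{p_j - q_j} \leq r$ taken modulo $L$.

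\emph{Residues of the shifts.} Scaling by $2^k$, the residues of $i/(D+1)$ modulo $L$ correspond to $i 2^k/(D+1) \bmod 1$. Since $D+1$ is odd, $2^k$ is invertible modulo $D+1$. Hence, as $i$ ranges over $0,\ldots,D$, these residues are exactly the $D+1$ distinct points $\Set{ i' L/(D+1)}{i' = 0,\ldots, D}$ modulo $L$. They are equally spaced, with gap $L/(D+1) > r$.

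\emph{Counting.} A half-open window of length at most $r < L/(D+1)$ contains at most one of these residues. So for each coordinate $j$, at most one shift $i$ is bad. Over all $d$ coordinates at most $d$ shifts are bad. Since $D+1 = 2\ceil{d/2}+1 \geq d+1$, some shift $\shift_i$ is good for every coordinate. For that shift, $p+\shift_i$ and $q+\shift_i$ lie in a common cell of side $L \leq 2(D+1) r$, as required.

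\emph{Expected obstacle.} The delicate step is the number-theoretic claim that the shifts are equally spaced modulo every dyadic length $L$. This is exactly why $D+1$ is chosen odd. One must also handle the half-open boundary conventions carefully, so that "at most one bad shift per coordinate" is justified by the strict inequality $r < L/(D+1)$.
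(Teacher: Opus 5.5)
Your proof is correct. It follows the standard argument: the paper gives no proof of its own and only cites Chan's Lemma~3.3, whose proof is this same per-coordinate pigeonhole, using that $D+1$ is odd so that no two distinct shifts can both be cut, in the same coordinate, by the dyadic grid of side $L$ with $(D+1)\dY{p}{q} < L \leq 2(D+1)\dY{p}{q}$. The only caveat is the tacit assumption $p \neq q$, which the lemma itself also needs: without it no smallest such $L$ exists, and no cell of side length $\leq 0$ exists either.
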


\begin{observation}
    \obslab{ball_shift}%
    Consider a ball $\ballY{c}{\ell} \subseteq [0,1]^d$, and its axis aligned bounding cube $C$. Let $p$ and $q$ be the bottom and top corners of $C$.  Applying \lemref{shifting} to $p$ and $q$, we have that there is a node $v$ in a shifted quadtree that its sidelength is at most
    $L \leq  2(D+1) \dY{p}{q} \leq (2d + 4) \sqrt{d} 2\ell \leq 8 d^{3/2}\ell$, as $\dY{p}{q} = \sqrt{d} \cdot  2\ell$.
\end{observation}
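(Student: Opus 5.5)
The statement just before this point is \obsref{ball_shift}. I will derive it by applying \lemref{shifting} directly and then doing a routine computation.

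\emph{Setup and application of \lemref{shifting}.} Let $C$ be the axis-aligned bounding cube of $\ballY{c}{\ell}$. It has side length $2\ell$ and is contained in $[0,1]^d$ (assuming the ball's bounding cube lies in the unit cube; a boundary point can be handled by a slight shrink, or by the half-open convention). Let $p = c - (\ell,\ldots,\ell)$ and $q = c + (\ell,\ldots,\ell)$ be its bottom and top corners. Then $\dY{p}{q} = \sqrt{d}\cdot 2\ell$, since the difference vector is $2\ell$ in each of the $d$ coordinates. Apply \lemref{shifting} to $p,q$. It yields a shift index $i\in\{0,\ldots,D\}$ and a cell $v$ of the infinite quadtree $\qt$ of $[0,2)^d$ that contains both $p+\shift_i$ and $q+\shift_i$, with side length at most $2(D+1)\dY{p}{q}$.

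\emph{The cell contains the shifted ball.} Since $v$ is an axis-parallel cube (hence convex and a product of intervals) containing the two opposite corners $p+\shift_i$ and $q+\shift_i$ of the translated cube $C+\shift_i$, it contains all of $C+\shift_i$, and therefore all of $\ballY{c}{\ell}+\shift_i$. This is the implicit content of ``there is a node $v$ in a shifted quadtree'': equivalently, $v-\shift_i$ is a node of the quadtree shifted by $-\shift_i$ that contains the ball.

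\emph{Bounding the side length.} Use $D = 2\ceil{d/2}\le d+1$, so $D+1 \le d+2$. Then
\begin{math}
2(D+1)\dY{p}{q} \le 2(d+2)\cdot 2\sqrt{d}\,\ell = (2d+4)\sqrt{d}\,2\ell .
\end{math}
To reach $8d^{3/2}\ell$, note that $(2d+4)\cdot 2 = 4d+8 \le 8d$ holds for $d\ge 2$. For $d=1$, $D=2$ and the bound is directly $2\cdot 3\cdot 2\ell = 12\ell$. That exceeds $8\ell$, so in that case the constant must be slightly loosened, or the finer estimate $D+1 = 3$ used instead.

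The only place that needs care is this final constant. Everything else is an immediate consequence of \lemref{shifting} and convexity of the cell.
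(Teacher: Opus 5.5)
Your proposal is correct and is the paper's argument: apply \lemref{shifting} to the two corners, then use $\dY{p}{q}=2\ell\sqrt{d}$ and $D+1\le d+2$. Your convexity remark, and the fact that $C\subseteq[0,1]^d$, fill in details the paper leaves implicit; the latter is automatic (no assumption needed), since $c\pm\ell e_j\in\ballY{c}{\ell}$.

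You are also right that the final step $(2d+4)\sqrt{d}\,2\ell\le 8d^{3/2}\ell$ holds only for $d\ge 2$. However, your ``finer estimate $D+1=3$'' does not rescue $d=1$, because that is exactly what yields $12\ell$. The fix is to loosen the constant, e.g.\ to $12d^{3/2}\ell$, which is valid for all $d\ge1$ and still gives $\ell>\rho$ in the proof of \thmref{abc_r_d}.
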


\begin{theorem}
    \thmlab{abc_r_d}%
    For a set $\P$ of $n$ points in $\Re^d$, and $\eps \in (0,1)$, one can compute a pair decomposition $\WR$ of size $O( \tfrac{d^{3}n }{\eps} \log \tfrac{d}{\eps} )$. The decomposition  $\WR$ is \eABC and \eSSPD.
\end{theorem}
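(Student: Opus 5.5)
We follow the scheme of \thmref{abc_metric}: induction on $n$ that removes a point $p$ of the closest pair, covers the pairs $\{p\}\otimes\P$ at distance up to $T\ell$ from $p$ using rings, and handles the far points through the proxy $q$. The difference is the range covered by rings. In the general metric case we had to use $T=n^4$ so that proxy insertions would perturb distances by only a $(1+1/n^4)$ factor per step, and this costs $O(\tfrac1\eps\log n)$ rings per point. In $\Re^d$ we want to take $T=\mathrm{poly}(d/\eps)$, giving $O(\tfrac1\eps\log\tfrac d\eps)$ rings per point. Proxy insertions must then be organized through a shifted-quadtree hierarchy, so that each pair absorbs only geometrically decreasing perturbations instead of up to $n$ equal ones.

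\emph{Setup.} Scale $\P$ into $[0,1]^d$ and fix $\epsA=2^{-\LgEpsA}$ with $\EpsA=\Theta(d^{3/2}/\eps)$. Take the $D+1=O(d)$ shifted quadtrees of \lemref{shifting}. By \lemref{reg_to_eps_quad}, each of them splits into $\LgEpsA=O(\log\tfrac d\eps)$ $\epsA$-quadtrees. Each point $p$ has nearest-neighbor distance $\ell_p$. By \obsref{ball_shift}, for each point there is a shift in which some cell $\Cell_p$ of side at most $8d^{3/2}\ell_p$ contains $\ballY{p}{\ell_p}$. Process the points bottom-up, roughly in increasing order of $\ell_p$ as in the closest-pair induction. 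When $p$ is processed, choose a representative $q\neq p$ near $p$ (its nearest neighbor among the remaining points), and do two things.
\begin{compactenumi}
\item Add $\{\{p\},P_i\}$ for the rings $P_i=\ringZ{p}{r_i}{r_{i+1}}\cap\P$ with $r_i=\ell_p(1+\eps/8)^{i-2}$, for $i\le M=O(\tfrac1\eps\log\tfrac{d}{\eps})$. These rings cover every $x$ with $\dmY{p}{x}\le \Theta(d^{3/2}/\eps)\cdot\ell_p$.
\item For every farther $f$, the pair $qf$ is already covered by some pair $\{\PB,\PC\}$ with $q\in\PB$. Add $p$ to $\PB$.
\end{compactenumi}
This yields $O(d\cdot\tfrac1\eps\log\tfrac d\eps)$ pairs per point. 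The extra factor $d^2$ in the claimed bound absorbs the choice over the $O(d)$ shifts and the $d^{3/2}$ factor in \obsref{ball_shift}, which enters the ring range and hence $\log$. I would rather over-count here than optimize.

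\emph{Stability.} A singleton-side pair starts $\eps/16$-stable. Each insertion of a point $p_j$ through its proxy $q_j$ changes $\dmaxY{\cdot}{\cdot}$ and $\dminY{\cdot}{\cdot}$ by at most $\dmY{p_j}{q_j}$. The key claim is that the proxies added to a fixed pair $\{\PB,\PC\}$ have nearest-neighbor distances that decrease geometrically, by a factor of at least $\EpsA/\mathrm{poly}(d)$, as one goes down the chain of insertions. The reason is that they correspond to distinct levels of a single $\epsA$-quadtree, where a cell at one level has side $\epsA$ times the side of its parent. Hence the total perturbation is at most $\sum_j \dmY{p_j}{q_j}\le O(\epsA d^{3/2})\dmY{\PB}{\PC}\le (\eps/32)\dmY{\PB}{\PC}$. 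The \tedref{boring_1}-type computation then gives $\eps/8$-stability. The side that began as a singleton keeps diameter at most $\eps\cdot\dmY{\PB}{\PC}$, so the decomposition is also an \eSSPD, as in the remark following \thmref{abc_metric}.

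\emph{Main obstacle.} The hard step is the geometric-decay claim. In the metric proof the perturbations were only bounded individually by $\Delta/n^4$, and $n$ of them were summed. Here I need a charging scheme that prevents two proxies of comparable scale from being chained into the same pair. The plan is to make the proxy choice canonical: $q$ is the representative of the $\epsA$-quadtree cell $\Cell_p$, namely the surviving point of that cell after all smaller-scale points have been merged away. Points merged at the same quadtree level of the same cell are then pairwise far apart relative to the cell, so at most $2^{O(d)}$ of them can join the same side of a given pair. The remaining issue is to check that this $2^{O(d)}$ factor enters only the sum of perturbations and not the pair count. If it does not, I would fall back to paying it by enlarging $\EpsA$, which costs only in the $\log\tfrac d\eps$ term.
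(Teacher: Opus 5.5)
\textbf{Gap: the geometric-decay claim.} Your proposal takes a different route from the paper, but its central step, which you flag yourself, does not hold. For the scheme as you set it up, both the geometric-decay claim and the bound $\sum_j\dmY{p_j}{q_j}\le O(\epsA d^{3/2})\dmY{\PB}{\PC}$ are false. The scheme here means: nearest remaining neighbor as proxy, rings up to $T\ell_p$ with $T=\Theta(d^{3/2}/\eps)$, and points processed in increasing order of $\ell_p$.

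Take a point $f$ and points $c_0,\ldots,c_m$ on a ray from $f$. Let $L=\dmY{c_m}{f}\ge 2(T+8d^2)\ell$, and let the consecutive gaps $\dmY{c_j}{c_{j+1}}\approx\ell$ increase slightly in $j$.
\begin{compactenumi}
\item The closest-pair order removes $c_0,c_1,\ldots,c_{m-1}$ in turn, and the proxy of $c_j$ is $c_{j+1}$.
\item The point $f$ is never within ring range of any $c_j$.
\item Rebuilding from $\{c_m,f\}$, each $c_j$ is inserted through $c_{j+1}$ into the only pair covering $c_{j+1}f$.
\end{compactenumi}
So the output contains $\bigl\{\{c_0,\ldots,c_m\},\{f\}\bigr\}$, whose stability is about $m\ell/(2L)$, which is unbounded as $m$ grows. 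Insertions at equal scale accumulate linearly. This is exactly why \thmref{abc_metric} needs $T=n^4$.

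\textbf{The canonical-proxy repair.} This is only a plan, and it runs into a real tension.
\begin{compactenumi}
\item \emph{Per-point shifts.} Keeping $O(\frac{1}{\eps}\log\frac{d}{\eps})$ rings per point requires $\dmY{p}{q(p)}\le\mathrm{poly}(d)\,\ell_p$, which is why each $p$ picks its own good shift. But then consecutive links of a proxy chain live in different shifted $\epsA$-quadtrees, so the appeal to ``distinct levels of a single $\epsA$-quadtree'' does not apply. In the example above, every remaining point of $\Cell_{c_j}$ is some $c_k$ already sitting in the pair with $f$, since $f$ is too far to lie in $\Cell_{c_j}$. So whichever representative you choose, $c_j$ lands in that pair and the linear drift persists.
\item \emph{A single tree.} If you instead commit to one tree and process it bottom-up, proxy chains do climb levels. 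However, the nearest remaining point of $p$ can sit just across a cell boundary. Then $\dmY{p}{q(p)}/\ell_p$ is unbounded, and so is the number of rings needed, about $\frac{1}{\eps}\log\bigl(T\dmY{p}{q(p)}/\ell_p\bigr)$.
\item \emph{The wrong quantity.} Even with decay along chains, what must be bounded is each inserted point's drift along its own proxy chain, not the sum over all insertions into a pair. Arbitrarily many points can enter one side at the same level. Representatives of sibling quadtree cells can also be arbitrarily close, so your $2^{O(d)}$ count has no justification.
\end{compactenumi}

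\textbf{How the paper avoids this.} The paper never grows pairs. For each nonempty child cell $\Cell$ (side $\num$) of a node of a compressed $\epsA$-quadtree, it pairs $\P\cap\Cell$ with the points in the sibling cells that meet the sphere of radius $\rho+i\num$ about the center of $\Cell$. Here $\rho=8\sqrt{d}\,\num/\eps$ and $i=0,\ldots,\EpsA$. All distances in such a pair lie in $\rho+i\num\pm 2\sqrt{d}\,\num$, so stability and semi-separation are immediate. Coverage follows from \lemref{shifting}. The count is $O(n/\epsA)$ per tree over $O(d\log\frac{d}{\eps})$ trees.
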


\begin{proof}
    Let $\epsA$ be the maximum number that is a power of $2$ and smaller than $ < \eps/(128d^{2})$. We construct the $D=2\ceil{d/2} \leq d+1$ shifted quadtrees of \lemref{shifting} over the points of $\P$ (we can assume $\P \subseteq [0,1]^d$ by shifting and scaling). Next, for each such shifted quadtree, we compute the $\LgEpsA$ $\epsA$-quadtrees each one of them induces. We compress the resulting $N = O( D \LgEpsA) = O(d \log \tfrac{d}{\eps})$ $\eps$-quadtrees -- they each store the $n$ points of $\P$ in their leaves, and each one of them has at most $n-1$ internal nodes.

    Now, given such a compressed $\epsA$-quadtree, for each of its nodes, $v$, we consider its $\EpsA$-grid $\Grid_v$, partitioning its cell $\Cell_v$, and assume the side length of cells in $\Grid_v$ is $\num$. For a cell $\Cell \in \Grid_v$ that contains at least one point of $\P$, we generate several pairs associated with $\Cell$ to the computed \ABC, as follows. Let
    \begin{equation*}
        \rho = \frac{8 \sqrt{d}}{\eps} \num.
    \end{equation*}
    For $i = 0, \ldots, 1/\epsA$, consider the spheres
    \begin{equation*}
        S_i
        =
        \Set{ x \in \Re^d }{ \dY{\cen }{x} =  \rho + i \num }.
    \end{equation*}
    Let $\Grid_{v,i}$ be the set of all the grid cells of $\Grid_v$ intersecting $S_i$, and let $\PC_i = \cup_{ C \in \Grid_{v,i}} (C\cap P)$, for $i = 0, \ldots, 1/\epsA$. We add the pairs $\pair_i = \{ \P \cap \Cell , \PC_i \}$ to the computed \ABC, for $i=0, \ldots, 1/\epsA$, see \figref{pairing}.  It is straightforward to verify that all these pairs are $\eps$-stable, see \tedref{boring_2}.

    \begin{figure}[ht]
        \centerline{%
           \includegraphics[width=0.5\linewidth]%
           {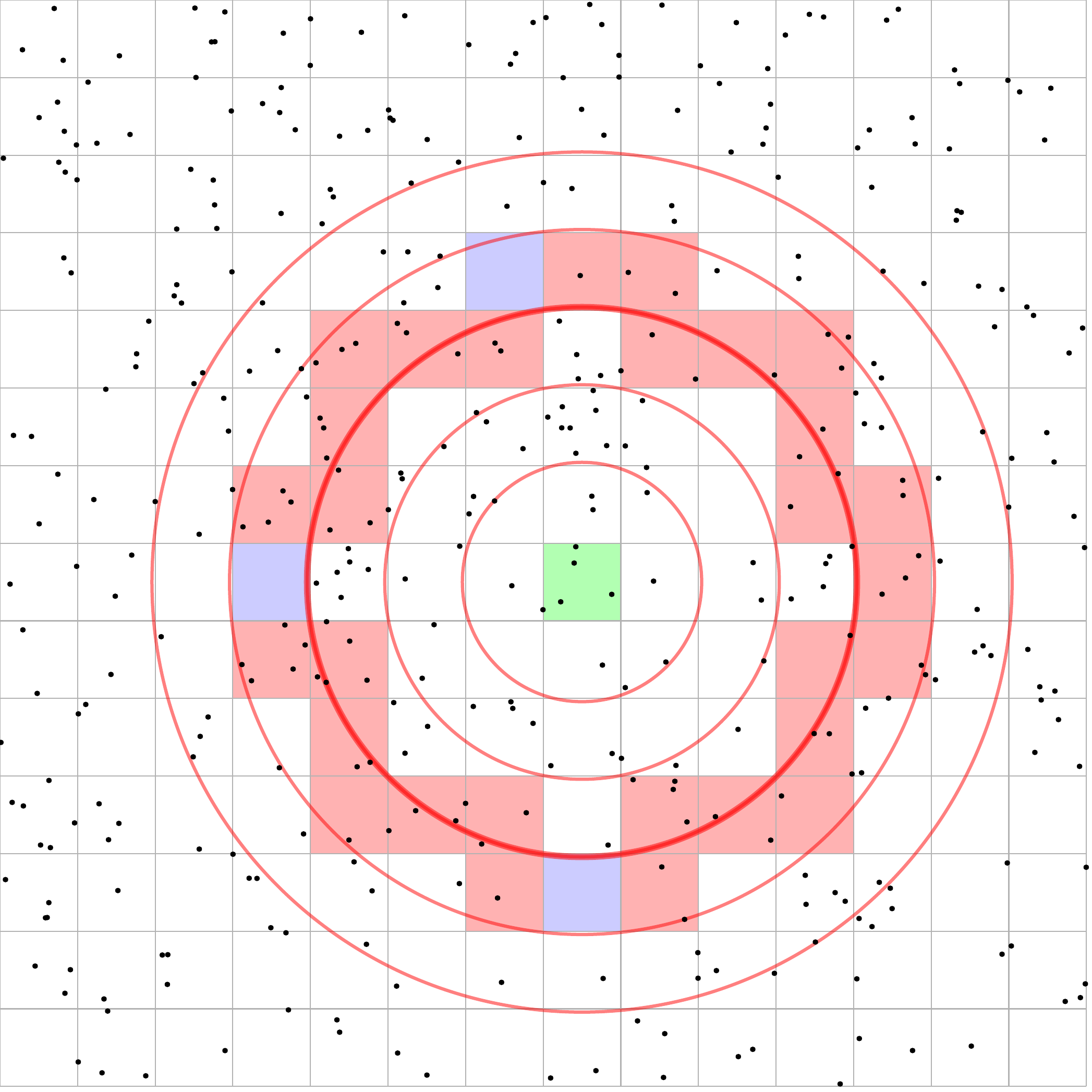}%
        }%
        \caption{The set of points in the green cell is paired with the set of points in the red cells (the blue cells are empty).}
        \figlab{pairing}
    \end{figure}

    Consider a specific compressed $\epsA$-quadtree $\qt$. A node $v$ in it gives rise to at most $O( n_v(1+1/\epsA) )$ pairs, where $n_v$ is the number of children $v$ has (i.e., $n_v = \cardin{\Set{C \in \Grid_{v}}{ C \cap P \neq \emptyset}}$). Clearly, $\sum_{v\in\qt} n_v = n-1$. Thus, each tree gives rise to $O(n/\epsA)$ pairs, and overall there are $O(n /\epsA) = O(d^{2} n/\eps)$ pairs generated for this tree. %
    Since there are $N$ $\epsA$-quadtrees, it follows that the total number of pairs is $O(d^{3} \tfrac{n}{\eps} \log \tfrac{d}{\eps} )$.

    As for the covering property, consider two points $p,q \in P$, with $\ell = \dY{p}{q}$. Let $\ball = \ballY{p}{ \ell }$. By \obsref{ball_shift}, there is a shifted quadtree that contains this ball in a cell with sidelength $L$, where
    \begin{equation*}
        2\ell
        \leq
        L \leq
        8 d^{3/2} \ell.
    \end{equation*}
    This cell $\Cell$ corresponds to a node $v$ in one of the computed $\epsA$-quadtrees. Let $\Cell_p$ and $\Cell_q$ be the two children cells of $\Cell$ that contains $p$ and $q$. The sidelength of $\Cell_p$ is $\num =\epsA L$.  Observe that $\eps/(256d^{2}) \leq \epsA \leq \eps/(128d^{2})$, and as such
    \begin{equation*}
        \frac{\ell}{\num}
        =%
        \frac{\ell}{\epsA L }
        \geq%
        \frac{256 d^{2} \ell}{\eps \cdot 8 d^{3/2} \ell }
        =%
        \frac{ 32 \sqrt{d} }{ \eps  }
        >%
        \frac{8 \sqrt{d} }{\eps } = \frac{\rho}{\num}.
    \end{equation*}
    This implies that $\ell > \rho$ for this cell. Similarly, we have
    \begin{equation*}
        \frac{\ell}{\num}
        =%
        \frac{\ell}{\epsA L }
        \leq%
        \frac{128 d^{2} \ell}{\eps \cdot 2  \ell }
        =%
        \frac{ 64 d^{2} }{ \eps  }
        \leq%
        \frac{1}{\epsA}.
    \end{equation*}
    Namely, the cell of $\Cell_p$ would be in the center of one of the generated sets in the pair decomposition, and it would contain the points of $P \cap \Cell_q$ on the other side.  We conclude that the pair $pq$ would be covered in the generated decomposition.
\end{proof}

\begin{remark}
    Every pair in the decomposition computed by \thmref{abc_r_d} can be represented implicitly (in a similar fashion to \WSPD). Indeed, each pair computed is a quadtree node coupled with a set of quadtree nodes, all children of the same $\epsA$-quadtree node, that intersect a certain sphere. The number of such grid cells in a set is $O(1/\epsA^{d-1} ) = O(1/\eps^{d-1})$. Namely, a single set can be represented as a list of $O(1/\eps^{d-1})$ nodes, where each node contributes all the points stored in its subtree to the set.
\end{remark}

\begin{remark}
    \thmref{abc_r_d} improves the number of pairs in the \eSSPD compared to previous work, from $O(n/\eps^d)$, to $O(n \cdot \tfrac{d^{3}}{\eps} \log \Eps)$. This result should improve some results using \SSPD{}s. However, the current construction of \SSPD{}s does not have the property that the weight of the \SSPD is near linear, and it can be quadratic in the worst case. It is quite conceivable that previous constructions can be modified to use the new ideas, and get a construction with an improved number of pairs and a small weight.

    Note that the resulting decomposition is not disjoint, and a single pair would get covered multiple times.
\end{remark}

\section{Constant approximation in doubling metrics}
\seclab{doubling_metrics}

\subsection{Preliminaries}
Let $P$ be a set of $n$ points in a metric space $(\FMS, \DistChar )$, with bounded doubling dimension $\dim$.  A standard construction of \WSPD{s} in doubling metrics is by Har-Peled and Mendel~\cite{hm-fcnld-06}. It is an adaptation of the construction in Euclidean spaces by Callahan and Kosaraju~\cite{ck-dmpsa-95}. The first step is constructing a hierarchical tree over the points, similar in spirit to quadtrees.

\begin{definition}
    \deflab{net_tree}%
    A \emphi{\nettree{}} of $P$ is a tree $T$ storing the points of  $P$ in its leaves. For a node $v\in T$, $P_v\subset P$ is the set of leaves in the subtree of $v$.  Each node $v$  has an associated representative point $\repX{v} \in P_v$.  Internal nodes have at least two children. Each vertex $v$ has a \emphi{level} $\mlevelX{v} \in \ZZ \cup\brc{-\infty}$.  The levels satisfy $\mlevelX{v}<\mlevelX{\parent(v)}$, where $\parent(v)$ is the parent of $v$ in $T$.  The levels of the leaves are $-\infty$.

    We require the following properties from $T$:
    \begin{compactenumI}
        \item \emphw{Covering}. For all $v\in T$:
        \begin{math}
            \ballY{\repX{v}}{\frac{2\tbase}{\tbase-1} \cdot \tbase^{\mlevelX{v}}} \supset P_v.\Bigr.
        \end{math}
        where $\tbase=11$.

        \item \emphw{Packing}. For all non-root
        $v\in T$:
        \begin{math}
            \ball\pth{\repX{v}, \smash{\frac{\tbase-5}{2(\tbase-1)}} \cdot \tbase^{\mlevelX{\parent(v)}-1}} \bigcap P \subset P_v.\Bigr.
        \end{math}

        \smallskip%
        \item \emphw{Inheritance}. For all nonleaf $u\in T$, there exists a child $v\in T$ of $u$ such that $\repX{u}=\repX{v}$.
    \end{compactenumI}
\end{definition}
\Nettrees are an abstraction of compressed quadtrees, as the levels of a node and its child can be dramatically different. Uncompressed quadtree corresponds to   \nettree where the levels of consecutive nodes along a path differ by exactly one (except the leaves). A \nettree where all the levels are consecutive is uncompressed.

\begin{theorem}[\cite{hm-fcnld-06}]
    \thmlab{net_tree}%
    Given a set $P$ of $n$ points in $\FMS$, one can construct a
    \nettree{} for $P$ in\/ $2^{\Odim}n \log n $ expected
    time, where $\dim$ is the doubling dimension of $\FMS$.
\end{theorem}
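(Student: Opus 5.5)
The statement is \thmref{net_tree}, the net-tree construction of Har-Peled and Mendel. I would prove it by first building the hierarchy as an uncompressed net hierarchy, then compressing it, and only afterwards worrying about the running time. For each integer level $i$, take a $\tbase^{i}$-net $N_i \subseteq P$, meaning points pairwise more than $\tbase^i$ apart such that every point of $P$ is within $\tbase^i$ of $N_i$. The nets should be nested, $N_{i+1} \subseteq N_i$: at the top level $N_i$ is a single point, and at levels below $\log_\tbase \cpX{P}$ it is all of $P$.

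Each net point $x \in N_i$ that is not in $N_{i+1}$ is attached to a parent $y \in N_{i+1}$ with $\dmY{x}{y} \leq \tbase^{i+1}$. Each point of $N_{i+1}$ is its own child at level $i$, which gives \emph{inheritance}. Paths of single-child nodes are then contracted, keeping the level of the top node of each contracted path, which yields the compressed tree $T$. The properties are checked as follows.
\begin{compactenumI}
    \item \emph{Covering.} Telescoping gives $\dmY{\repX{v}}{x} \leq \sum_{j \leq \mlevelX{v}} \tbase^{j} = \frac{\tbase}{\tbase-1}\tbase^{\mlevelX{v}}$ for every $x \in P_v$. This is within the required $\frac{2\tbase}{\tbase-1}\tbase^{\mlevelX{v}}$, and the factor-$2$ slack absorbs the fact that a fast algorithm only gets approximate nets.
    \item \emph{Packing.} Suppose a point $z$ close to $\repX{v}$ descends from a different net point $w$ at the parent level. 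Using the net separation $\dmY{\repX{v}}{w} > \tbase^{\mlevelX{\parent(v)}-1}$ and the covering radius of $w$'s subtree, the triangle inequality forces $\dmY{z}{\repX{v}} > \frac{\tbase-5}{2(\tbase-1)}\tbase^{\mlevelX{\parent(v)}-1}$. This is where $\tbase = 11$ is used.
    \item \emph{Compression.} Contracting single-child paths preserves both properties: covering only needs the top level of the path, and packing only needs the parent's level.
\end{compactenumI}

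The main obstacle is achieving $2^{\Odim} n\log n$ expected time independently of the spread, since a naive level-by-level construction costs time proportional to $\log \SpreadX{P}$. I would try first to build the tree incrementally, inserting points in random order, in the style of Clarkson's or Har-Peled and Mendel's randomized net construction.
\begin{compactenumI}
    \item Each uninserted point keeps a pointer to the current tree node responsible for it, playing the role of a nearest-net-point bucket.
    \item When a point is inserted, it is placed by following its bucket pointer. Only the $2^{\Odim}$ nearby net points need to be examined, by the doubling packing bound.
    \item A backwards-analysis argument shows each point is rebucketed $O(\log n)$ times in expectation, giving $2^{\Odim} n \log n$ in total.
    \item Compressed levels are handled by jumping directly to the next level at which the structure actually changes. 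Such a level is computable from the closest-pair distance within the affected cluster, which avoids any dependence on the spread.
\end{compactenumI}
The nets produced this way are only approximate, which is exactly what the factor-$2$ slack in the covering constant and the conservative packing constant are designed to absorb. If this route gets messy, a fallback is to first compute a $2^{\Odim}$-approximate spanning structure (a ring-separator tree) to split $P$ into well-separated clusters, and recurse on those clusters.
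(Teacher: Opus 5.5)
Your covering and compression checks are fine, but the packing verification is wrong, and with your parent rule packing actually fails. The triangle inequality you describe yields nothing. Write $L=\mlevelX{\parent(v)}$. The guaranteed separation $\dmY{\repX{v}}{w}>\tbase^{L-1}$ is smaller than the covering radius $\frac{\tbase}{\tbase-1}\tbase^{L-1}$ of the competing subtree, and the radius is larger still if $w$ sits at level $L$. So the resulting lower bound on $\dmY{z}{\repX{v}}$ is negative. Here is a concrete failure. Take $\tbase=11$, $P=\{0,2,12\}\subseteq\Re$, and the nested nets $N_0=P$, $N_1=\{0,12\}$, $N_2=\{0\}$. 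Your rule allows attaching $2$ to $12$, since the distance is $10\le 11$. After compression, the leaf $v$ storing $0$ has a parent of level $2$, and its packing ball $\ballY{0}{\frac{6}{20}\cdot 11}$ contains $2\notin P_v$.

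The missing ingredient is to link each point to its \emph{nearest} point of the next coarser net, and to argue one level below $v$. Let $a$ be the level-$(L-2)$ ancestor of $z$ and $a'$ its parent. Since $\repX{v}\in N_{L-1}$, we get $\dmY{a}{a'}\le\dmY{a}{\repX{v}}$. Let $\frac{c\tbase}{\tbase-1}$ be the covering factor. Then $\dmY{a'}{\repX{v}}\le 2\dmY{a}{\repX{v}}\le 2\dmY{z}{\repX{v}}+\frac{2c}{\tbase-1}\tbase^{L-1}$. Whenever this is at most $\tbase^{L-1}$, separation forces $a'=\repX{v}$, that is, $z\in P_v$. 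This gives packing radius $\frac{\tbase-1-2c}{2(\tbase-1)}\tbase^{L-1}$. With $c=2$ (your factor-$2$ slack) this is exactly the constant $\frac{\tbase-5}{2(\tbase-1)}$ of the definition. So the constants encode a nearest-parent argument, not separation plus covering.

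The second gap is the running time. This is the real content of the statement, and your proposal asserts it rather than proves it. Note first that the paper gives no proof at all; it imports the theorem from \cite{hm-fcnld-06}. The backwards analysis you invoke at best bounds how often a point's bucket changes. It does not bound the cost of finding the points that change: an insertion must scan the buckets of the $2^{\Odim}$ neighbouring net points, and nothing ties their sizes to the number of points that actually move. You delegate spread independence to ``jumping to the next level, computed from the closest-pair distance within the affected cluster''. But maintaining such cluster statistics under insertions is itself a nontrivial dynamic proximity problem that you leave unaddressed. You also never specify which approximate-net invariant the incremental process maintains. By the above, it would have to include (approximately) nearest parents, and a later insertion can invalidate these. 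Without that invariant, the claim that the factor-$2$ slack absorbs the approximation cannot be checked.

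In \cite{hm-fcnld-06}, spread independence comes from the ring-separator machinery you list only as a fallback. Ring separators are found by random sampling, which is where ``expected'' comes from. They give a coarse approximation of the metric. That approximation lets a Gonzalez-style computation of a $2$-approximate greedy permutation skip the scales at which nothing happens, replacing $\log\SpreadX{P}$ by $\log n$. The \nettree is then read off that permutation. Either carry out that route or give a real analysis of the incremental one; as written, the $2^{\Odim}n\log n$ bound is unsupported.
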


Once such a \nettree is available, computing the \WSPD is relatively straightforward.
\begin{lemma}[\cite{hm-fcnld-06}]
    \lemlab{wspd_d_m}%
    Let $P$ be a set of $n$ points in a metric space $\FMS$.  For $\eps \in (0,1)$, one can construct a \eWSPD of size $n \eps^{-\Odim}$, and expected construction time is $2^{\Odim}n \log n + n \eps^{-\Odim}\Bigr.$.  Furthermore, the pairs of the \WSPD are of the form $\{P_u,P_v\}$, where $u,v$ are vertices of a \nettree of $P$, such that $ \diamX{P_u}, \diamX{P_v} \leq \eps \dmY{\repX{u}}{\repX{v }}$.
\end{lemma}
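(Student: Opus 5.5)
The plan is to follow the Callahan--Kosaraju style recursion on the \nettree{} $T$ given by \thmref{net_tree}. First I build $T$ in $2^{\Odim} n\log n$ expected time. Then I run the recursive procedure $\mathrm{pairs}(u,v)$, starting with $\mathrm{pairs}(r,r)$ for the root $r$. On a pair of nodes $u\neq v$ it acts as follows:
\begin{itemize}
\item if $\frac{2\tbase}{\tbase-1}\tbase^{\max(\mlevelX{u},\mlevelX{v})} \cdot 2\leq \eps \,\dmY{\repX{u}}{\repX{v}}$, it outputs $\{P_u,P_v\}$;
\item otherwise it splits the node of larger level, say $u$, into its children $u'$ and recurses on each $(u',v)$.
\end{itemize}
When $u=v$ it recurses on all pairs of distinct children. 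By the covering property, the diameter of $P_u$ is at most $2\cdot\frac{2\tbase}{\tbase-1}\tbase^{\mlevelX{u}}$. So every output pair satisfies $\diamX{P_u},\diamX{P_v}\leq\eps\dmY{\repX{u}}{\repX{v}}$, and it is also $1/\eps$-separated up to a constant rescaling of $\eps$. Every point pair $\{x,y\}$ is covered exactly once, because the recursion descends the two root-to-leaf paths of $x$ and $y$, and leaves are always separated.

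For the size bound, the standard charging argument works as follows. An output pair $\{u,v\}$ arises from a parent call $(\parent(u),v)$ that failed. That failure means
\[
\dmY{\repX{u}}{\repX{v}} \leq \dmY{\repX{\parent(u)}}{\repX{v}} + O(\tbase^{\mlevelX{\parent(u)}}) = O(\tbase^{\mlevelX{\parent(u)}}/\eps),
\]
and also $\mlevelX{v}\leq \mlevelX{\parent(u)}$ (roughly). Charge the pair to $\parent(u)$. By packing, the number of nodes $v$ at a comparable level within distance $O(\tbase^{\mlevelX{\parent(u)}}/\eps)$ is $\eps^{-\Odim}$. Summing over the $O(n)$ nodes gives size $n\eps^{-\Odim}$. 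The running time is proportional to the number of calls, and each failed call has at most $2^{\Odim}$ children calls, each charged the same way. This gives the total $2^{\Odim}n\log n + n\eps^{-\Odim}$.

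The step I expect to need the most care is the packing count in the compressed tree. The nodes $v$ paired with a fixed $u$ may sit at widely varying levels, because levels jump. The fix I would try is to use the splitting rule, which always splits the larger-level node. With that rule, whenever the pair $(u,v)$ is formed we have $\mlevelX{v}\leq\mlevelX{\parent(u)}$, and also $\mlevelX{\parent(v)}\geq\mlevelX{u}$. The first inequality (or its symmetric version) holds because $v$'s parent was split earlier. So $v$ is the unique node on its root-to-leaf path that straddles the level $\mlevelX{\parent(u)}$. Distinct such $v$ therefore have distinct representatives that are $\Omega(\tbase^{\mlevelX{\parent(u)}-1})$-separated, by the packing property. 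All of them lie in a ball of radius $O(\tbase^{\mlevelX{\parent(u)}}/\eps)$ around $\repX{u}$. The doubling dimension then bounds their number by $(\tbase/\eps)^{\Odim}=\eps^{-\Odim}$.
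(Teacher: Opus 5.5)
Your proposal follows the paper's route. You build the \nettree, repeatedly split the higher-level node of an unseparated pair until the representatives are far apart compared to the covering radii, and bound the output by charging each pair to the node whose split created it, then applying packing in the doubling metric. Your straddling-level argument supplies exactly the compressed-tree detail that the paper's sketch defers to Har-Peled and Mendel.

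A few slips need fixing.

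\begin{itemize}
\item \textbf{Coverage.} As written, a call $(u,u)$ recurses only on pairs of \emph{distinct} children, so $(c,c)$ is never called. Point pairs lying inside a common child are then never covered, and your ``covered exactly once'' claim fails. You must also recurse on $(c,c)$ for each child $c$. The paper gets this for free by treating $\{u,u\}$ as an ordinary unseparated pair and splitting it.

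\item \textbf{The straddling inequality.} For $v$ to straddle level $\mlevelX{\parent(u)}$ you need $\mlevelX{\parent(v)} \geq \mlevelX{\parent(u)}$, not merely $\mlevelX{\parent(v)} \geq \mlevelX{u}$. The stronger inequality does hold. Either $\parent(v)$ was split against an ancestor-or-self of $\parent(u)$ (the split rule gives the level bound), or $v$ came from a self-call at a strict ancestor of $\parent(u)$.

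\item \textbf{Uniqueness.} The inequalities are non-strict, so up to two nodes per root-to-leaf path can straddle the level, not a unique one. This costs only a factor of $2$.

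\item \textbf{Uncounted pairs.} Sibling pairs output directly from a self-call $(u,u)$ do not arise from a failed call $(\parent(u),v)$, so your charging does not cover them. There are at most $\sum_u \deg(u)^2 = 2^{\Odim} n$ of them, so the bound is unaffected.
\end{itemize}
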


\begin{proof}
    (Sketch.) The algorithm computes a \nettree{} $T$ using \thmref{net_tree}. Then, the algorithm maintains a FIFO queue of pairs of nodes $\{u,v\}$ that it has to separate. Initially, the queue consists of a single pair, that is, $\{ \rootX{T}, \rootX{T}\}$. The algorithm repeatedly considers the pair $\pair = \{u,v\}$ at the head of the queue, and checks if the pair is separated. For a pair $\pair$, define its \emphi{level} to be $\mlevelX{\pair} =\max( \mlevelX{u}, \mlevelX{v})$. The pair $\pair$ is defined to be separated if $8\tfrac{2\tbase}{\tbase-1}\cdot \tbase^{ \mlevelX{\pair}} \leq \eps \cdot \dmY{ \repX{u}}{ \repX{v}}.$ In other words, the distance between the representatives $\rho(u)$ and $\rho(v)$ is large enough to separate (the balls covering) $P_u$ and $P_v$. Now, if the pair $\pair$ is separated, then the pair $\{P_u, P_v\}$ is output. Otherwise, let $L(u) > L(v)$. The algorithm splits the bigger pair~$u$ by replacing $\pair$ in the queue with the pairs $C_u \otimes \{v\}$, where $C_u$ is the set of children of $v$.

    The output pairs have the form $\{P_u, P_v\}$. One can verify that every pair of points is covered by a well-separated pair $\{P_u, P_v\}$ in the output. Moreover, these pairs satisfy $\diamX{P_u}, \diamX{P_v} \leq \eps \dmY{\repX{u}}{\repX{v}}$, since by the covering property (I) of the net-tree, $\diamX{P_u}, \diamX{P_v} \leq \tfrac{\eps}{4} \dmY{P_u}{P_v}$, as $P_u \subseteq \ballY{\repX{u}}{r}$ and $P_v \subseteq \ballY{\repX{v}}{r}$, where $r = \frac{2\tbase}{\tbase-1} \cdot \tbase^{L}$.

    The standard charging argument for \WSPD{s} yields a size bound of $n/ \eps^{\Odim}$ pairs in the \WSPD. In particular, if $L(u) > L(v)$, we charge the pair $\{P_u, P_v\}$ to the parent of $v$ in the net-tree, and show using a packing argument in the doubling metric that each node $v \in T$ is charged at most $\eps^{-\Odim}$ times. For details, see \cite{hm-fcnld-06}.
\end{proof}

\subsection{Output-sensitive running time}

For the running time analysis, we improve on \cite{hm-fcnld-06} by providing an output-sensitive analysis. Let $\WR$ be the \eWSPD computed by the algorithm. Consider a history tree $H$ of the computation of the \WSPD, having the pairs being created by the algorithm as nodes. Every internal node in this tree has two or more children, and every leaf corresponds to a pair being output. Additionally, at most $O(n)$ superfluous leaves correspond to self-pairs $\{u,u\}$, with $u \in T$, and $u$ being a leaf. Overall, the size of $H$ is $N = O(n + \cardin{\WR})$. A careful inspection of the algorithm (see \lemref{wspd_d_m}) reveals that its overall running time is $2^{\Odim} n \log n + 2^{\Odim} \cardin{H} = 2^{\Odim} \pth{n \log n + \cardin{\WR}}$.  We state the output-sensitive running time as a corollary.

\begin{corollary}
    \lemlab{doubling_dimension_running_time}%
    The \WSPD algorithm in \lemref{wspd_d_m} runs in expected time $2^{\Odim} \pth{n \log n + \cardin{\WR}}$, where $\WR$ is the \eWSPD computed.
\end{corollary}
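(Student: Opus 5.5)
The plan is to bound the work of the algorithm of \lemref{wspd_d_m} by the size of its recursion (history) tree $H$. Every step of the algorithm pops one pair $\{u,v\}$ from the FIFO queue and does one of two things: it outputs the pair (a leaf of $H$), or it splits the higher-level node into its children (an internal node of $H$ whose children are the new pairs). The separation test for a popped pair is a constant-time computation: we compare $8\tfrac{2\tbase}{\tbase-1}\tbase^{\mlevelX{\pair}}$ with $\eps\, \dmY{\repX{u}}{\repX{v}}$, which needs one distance evaluation and the stored levels. Splitting $u$ costs $O(1+\deg(u))$, and this equals, up to a constant, the number of children of that node in $H$. So the queue phase takes time proportional to $\cardin{H}$, and the expected $2^{\Odim} n\log n$ term comes from building the \nettree via \thmref{net_tree}.

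Next we bound $\cardin{H}$. Every internal node of $H$ has at least two children, because internal net-tree nodes have at least two children. Hence the number of internal nodes of $H$ is less than its number of leaves, and $\cardin{H} \le 2\,\#\text{leaves}$. Leaves are of two kinds. The first kind is output pairs, which number $\cardin{\WR}$. The second kind is degenerate pairs that are discarded rather than output, namely self-pairs $\{u,u\}$ with $u$ a leaf of $T$. There are at most $n$ of these, one per point. One subtlety needs checking: when a self-pair $\{u,u\}$ with $u$ internal is split, it produces pairs $\{c,u\}$, and eventually both $\{c,c'\}$ and $\{c',c\}$. If the algorithm is implemented naively, both orders could appear, but this at most doubles the count, or is avoided by splitting self-pairs into $\binom{C_u}{2}$ plus the diagonal. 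Either way this gives $\cardin{H}=O(n+\cardin{\WR})$.

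The main obstacle is justifying that each split really costs only $2^{\Odim}$ amortized per child. Concretely, we must confirm that no step touches anything beyond the children created, and that with compressed levels ($\mlevelX{u}$ and its child differing arbitrarily) no hidden ``empty'' splits occur. The second point holds because every split strictly replaces a node by all of its at least two children, so every split creates a branching node of $H$. The first point holds as long as children lists are stored explicitly in $T$, which the construction of \thmref{net_tree} provides. Combining the two phases gives expected time $2^{\Odim}(n\log n+\cardin{\WR})$.
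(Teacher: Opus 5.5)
Your proposal is correct and follows essentially the same route as the paper. Like the paper, you charge the work to the history tree $H$, note that every internal node of $H$ branches and every leaf is either an output pair or one of at most $n$ leaf self-pairs, conclude $\cardin{H} = O(n+\cardin{\WR})$, and add the $2^{\Odim} n\log n$ expected cost of building the \nettree. Your extra remark about duplicate ordered pairs from splitting self-pairs is a detail the paper glosses over; either of your fixes handles it, as does simply letting $\WR$ denote the actual output list.
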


\subsection{Approximation factor for \minWSPD}

\begin{lemma}
    \lemlab{goody_goody}%
    The \WSPD algorithm of \lemref{wspd_d_m} outputs \eWSPD of size at most $2^{\Odim}\cardin{\Opt}$, where $\Opt$ is the optimal \eWSPD for $P$.
\end{lemma}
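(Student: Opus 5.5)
The plan is a charging argument. Every pair output by the net-tree algorithm gets charged to a pair of $\Opt$, and we show that each pair of $\Opt$ receives at most $2^{\Odim}$ charges.

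Fix an output pair $\{P_u,P_v\}$ of the algorithm, and suppose it was created from a parent pair $\{u',v\}$ in the history tree $H$, where $u$ is a child of $u'$. The parent was not separated, so $\tbase^{\mlevelX{u'}}$ is at least roughly $\eps\,\dmY{\repX{u'}}{\repX{v}}/16$. Also $\mlevelX{u'} \geq \mlevelX{v}$, since the algorithm splits the node of larger level. Now consider the pair of points $\{\repX{u},\repX{v}\}$. It is covered by some pair $\opair=\{\PB,\PC\}\in\Opt$ with $\repX{u}\in\PB$ and $\repX{v}\in\PC$. We charge $\{P_u,P_v\}$ to $\opair$.

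To bound the number of charges $\opair$ receives, write $\Delta=\dmY{\PB}{\PC}$.
\begin{compactenumI}
    \item \emph{Distances are comparable.} Since $\opair$ is $\Eps$-separated, every point pair it covers has distance in $[\Delta,(1+2\eps)\Delta]$. In particular $\dmY{\repX{u}}{\repX{v}}\approx\Delta$.
    \item \emph{Levels are pinned to a window.} By the non-separation of the parent pair and the covering property of the \nettree, $\tbase^{\mlevelX{u'}}=\Theta(\eps\Delta)$; this uses that $\dmY{\repX{u'}}{\repX{u}}$ is small compared to $\Delta$. The level $\mlevelX{u'}$ therefore lies in an $O(1)$-size range of integers. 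Moreover $\repX{u}$ lies in $\PB$, a set of diameter $\leq\eps\Delta$.
    \item \emph{Packing.} By the packing property, nodes with parents at a fixed level $L$ have representatives that are $\Omega(\tbase^{L})$ apart. Hence the number of candidate nodes $u$ whose representative lies in a ball of radius $\eps\Delta=O(\tbase^{L})$ is $2^{\Odim}$. The same holds for $v$, provided the level of $v$, or of its parent, is also comparable to $\eps\Delta$.
\end{compactenumI}
This gives $2^{\Odim}\cdot 2^{\Odim}$ charges per pair of $\Opt$, which proves the lemma.

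The main obstacle is the side $v$ that was not split. Its level may be much smaller than $\eps\Delta$, since $v$ can even be a leaf, so the packing count for $v$ fails. I would handle this by counting only the \emph{parent} pairs $\{u',v\}$ in $H$, each of which has at most $2^{\Odim}$ children pairs. For the $v$ side I would argue as follows. When $v$ was last produced as the result of a split of $\parent(v)$, the level satisfied $\mlevelX{\parent(v)}\ge \mlevelX{u'}$. Because of the FIFO order, the partner node existing at that time had level at least roughly $\mlevelX{\parent(v)}$. This lets me use $\parent(v)$, which has level comparable to $\eps\Delta$ up to the compression issue, as the packing witness.

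Compression can make $\mlevelX{\parent(v)}$ much larger than $\eps\Delta$. To handle it, I would instead note that all nodes $v$ charged to the same $\opair$ have $P_v\cap\PC\ni\repX{v}$. Distinct candidate $v$'s either have representatives that are $\Omega(\tbase^{\mlevelX{u'}})$ apart, or else one is an ancestor of the other, and ancestor chains are excluded because the history produces disjoint pairs. I expect this case analysis to be the delicate part of the argument.
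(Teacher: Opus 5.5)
\textbf{There is a genuine gap.} Your charging of each output pair to the optimal pair covering $\{\rho(u),\rho(v)\}$ matches the paper's registration step. The part you flag as delicate is where the argument breaks, and two claims you rely on are false in the compressed net-tree.

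\emph{The level window in (II).} Non-separation of $\{u',v\}$ gives only the lower bound $\tau^{L(u')}=\Omega(\eps\Delta)$; the upper bound fails under compression. For example, split a self-pair $\{w,w\}$ and then $\{c_i,w\}$. This outputs pairs $\{c_i,c_j\}$ of children of $w$, with $u'=w$ and $\tau^{L(w)}=\Theta(\Delta)$. So $L(u')$ can range over $\Theta(\log(1/\eps))$ levels, and your level-by-level packing count in (III) would pick up an $\eps$-dependent factor.

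\emph{Ancestor chains.} These are not excluded by disjointness. Suppose $\{w,v_1\}$ is split at $w$ into children $u_1,u_2$. Then $\{u_1,v_1\}$ may pass the test while $\{u_2,v_1\}$ fails and is split at $v_1$, producing an output $\{u_2,v_2\}$ with $v_2$ a child of $v_1$. These two output pairs cover disjoint sets of point pairs, since $P_{u_1}\cap P_{u_2}=\emptyset$, yet all four representatives can lie in one optimal pair.

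\emph{FIFO order.} The appeal to FIFO order plays no role: which node is split depends only on levels. The correct relation is $L(\overline{\mathrm{p}}(v))\ge L(w)\ge L(u')$, where $w$ is the partner present when $v$ was created.

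\textbf{The missing idea is how the paper removes the unsplit-side problem.} It argues on the \emph{uncompressed} net-tree. There, splitting the higher-level node keeps $|L(u)-L(v)|\le 1$ for every output pair. It then replaces each unbalanced pair $\{u,v\}$ by $C_u\otimes\{v\}$, which can only increase the count. After this, both nodes of every registered pair sit on a common level $\lambda$ with $\tau^{\lambda}=\Theta(\eps\Delta)$, confined to $O(1)$ levels. The upper bound comes from the pair being separated, and the lower bound from the pair it was split from not being separated. The packing argument then applies symmetrically in $X$ and in $Y$.

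\textbf{How to repair your version in the compressed tree.} You need two bounds for each side $x\in\{u,v\}$.
\begin{enumerate}
\item The lower bound $\tau^{L(\overline{\mathrm{p}}(x))}\ge\tau^{L(u')}=\Omega(\eps\Delta)$, combined with the packing property, gives $\Omega(\eps\Delta)$ separation between representatives of candidates with disjoint point sets. There are therefore $2^{O(\dim)}$ such candidates inside a side of diameter at most $\eps\Delta$.
\item The upper bound $\tau^{L(x)}=O(\eps\Delta)$ comes from the separation test of the output pair itself, since $\dmY{\rho(u)}{\rho(v)}\le(1+2\eps)\Delta$. It shows that every ancestor chain of candidates has only $O(1)$ members, because their levels strictly decrease inside a constant-size window.
\end{enumerate}
Only with both bounds is the number of charges per optimal pair $2^{O(\dim)}$, independent of $\eps$.
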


\begin{proof}
    \prflab{goody_goody}%
    Let $\WR$ be the \ \WSPD output by the algorithm.  It would be easier to argue here on the uncompressed \nettree. In this scenario, a pair $\{u,v\} \in \WR$ has the property that $\cardin{\mlevelX{u} - \mlevelX{v}} \leq 1$. Note that running the algorithm on the uncompressed \nettree, or the original \nettree, would output \emph{exactly} the same pairs. The difference is that in the compressed version, the algorithm can ``slide'' down a compressed edge in constant time, whereas the uncompressed version must laboriously go down the levels one by one.

    Consider an ``unbalanced'' pair $\{u,v\}$ with $\mlevelX{u} = \mlevelX{v} + 1$ (i.e., $u$ is larger than $v$). We replace this pair in $\WR$ by the pairs $C_u \otimes \{v\}$, where $C_u$ are the children $u$ in $T$. We repeatedly do this replacement till all pairs are balanced. This replacement process increases, by a factor of at most $2^{\Odim}$, the size of $\WR$, as this bounds the maximum degree in $T$. Let $\WR'$ be the resulting \eWSPD

    The set $\WR'$ is a list of pairs $\pair_1, \ldots, \pair_m$ of nodes of the \nettree.  And similarly, the optimal \eWSPD is a list of pairs $\opair_1, \ldots, \opair_k$, where $\opair_i = \{\PB_i,\PC_i\}$ for all $i$.

    For all $\pair = \{u,v\} \in \WR'$, we register $\pair$ with the (assume unique) optimal pair $\opair_j = \{\PB_j, \PC_j\}$ such that $\repX{u_i} \in \PB_j$ and $\repX{v_i} \in \PC_j$.

    Consider an optimal pair $\opair$, and let $\Delta = \diamX{ A \cup B}$.  Let $\pair_1, \ldots, \pair_t$ be all the pairs of $\WR'$ registered with $\opair$. All the pairs registered with $\opair$ must belong to levels $\mu, \mu-1$, for some integer $\mu$. Indeed, if not, consider the two extreme pairs $\pair,\pair'$ registered with $\opair$, with $L=\mlevelX{\pair}> \mlevelX{\pair'}+1$. But then, for $\pair' =\{u',v'\}$, a tedious but straightforward argument shows that $\{\parent(\repX{u'}), \parent(\repX{v'})\}$ would be considered to be separated by the algorithm.
    Implying that $\pair'$ has not been output by the algorithm, a contradiction.

    Let $Q$ be the set of all the representatives of pairs of level $\mu-1$, contained in $A$, where $\pair = \{A, B\}$. We have $\diamX{A} \leq \eps \Delta$, but by the packing property of the \nettree, the minimum distance between points of $Q$ is at least $\Omega( \eps \Delta )$. The doubling dimension implies that $\cardin{Q} \leq 2^{\Odim}$. Applying a similar analysis to $B$, and to registered pairs of level $\mu$, we conclude that the number of pairs of $\WR'$ registered with $\opair$ is $2 \cdot 2^{\Odim}\cdot 2^{\Odim} = 2^\Odim$.
\end{proof}

\subsection{The result}

\begin{theorem}
    Let $P$ be a set of $n$ points in a metric space $\FMS$ with doubling dimension $\dim$.  For $\eps \in (0,1)$, the algorithm of \lemref{wspd_d_m} constructs a \eWSPD of size $2^{\Odim} \cardin{\Opt}$, and in
    \begin{math}
       2^{\Odim}\pth{\cardin{\Opt}+n \log n}
    \end{math}
    expected time, where $\Opt$ is the minimal sized \eWSPD of $P$.
\end{theorem}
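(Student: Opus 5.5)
This theorem combines the two results just proved. The size bound is \lemref{goody_goody}: the \WSPD $\WR$ returned by the algorithm of \lemref{wspd_d_m} has $\cardin{\WR} \leq 2^{\Odim}\cardin{\Opt}$. The running time comes from \lemref{doubling_dimension_running_time}, which bounds the expected time by $2^{\Odim}\pth{n\log n + \cardin{\WR}}$. Plugging the size bound into this gives
\begin{equation*}
    2^{\Odim}\pth{n\log n + 2^{\Odim}\cardin{\Opt}}
    =
    2^{\Odim}\pth{\cardin{\Opt} + n\log n},
\end{equation*}
since $2^{\Odim}\cdot 2^{\Odim} = 2^{\Odim}$.

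Two points need checking so the two lemmas compose. First, the algorithm never needs to know $\Opt$. It is the fixed procedure of \lemref{wspd_d_m}: build a \nettree via \thmref{net_tree} in $2^{\Odim} n\log n$ expected time, then run the queue-based splitting. The output-sensitive bound is in terms of its own output $\WR$, so substituting the deterministic inequality $\cardin{\WR}\le 2^{\Odim}\cardin{\Opt}$ inside the expectation is legitimate. The only randomness is in the \nettree construction, and the approximation bound holds for every valid \nettree. Second, the history tree $H$ has size $O(n+\cardin{\WR})$ because every internal node has at least two children. Each node of $H$ costs $2^{\Odim}$ time, with compressed edges traversed in constant time.

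I expect the main obstacle to be the approximation lemma, not the composition; here I take it as given. If I had to make its sketch rigorous, the hardest step would be the ``tedious but straightforward'' claim that all pairs of $\WR'$ registered with one optimal pair $\opair=\{\PB,\PC\}$ lie on two consecutive levels. My plan for that step is as follows. Take a registered pair $\{u',v'\}$ at a low level. Use the covering property (I) to bound $\diamX{P_{\parent(u')}}$ and $\diamX{P_{\parent(v')}}$ by $O(\tbase^{\mlevelX{\pair'}+1})$. Note that the representatives lie in $\PB,\PC$, whose distance is at least $\Delta/(1+2\eps)$ by separation. Compare with a higher registered pair, which was unseparated one level up, so its level is $\Theta(\eps\Delta)$ up to the factor $\tbase$. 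This forces the parents of the low pair to already satisfy the separation test, which is a contradiction.

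With the levels pinned down, the packing property (II) gives representatives at level $\mu-1$ that are $\Omega(\eps\Delta)$ apart inside a set of diameter $\eps\Delta$. The doubling dimension then bounds their number by $2^{\Odim}$, which completes the count.
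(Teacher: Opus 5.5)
Your proposal is correct and matches the paper, which gives no separate proof of this theorem: it is simply the size bound of \lemref{goody_goody} substituted into the output-sensitive running-time corollary, using $2^{\Odim}\cdot 2^{\Odim} = 2^{\Odim}$. Your additional points are sound refinements of the paper's sketch rather than a different route: the approximation bound holds for every valid net-tree, so it can be substituted inside the expectation, and your sketch of the level-pinning step is consistent with the paper's argument.
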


\section{Hardness of minimum-size \WSPD}
\seclab{hardness}

\subsection{Hardness of arbitrary metric spaces}
\seclab{hard_m_s}

The reduction is from \EdgeCliqueCover.
\begin{problem}[\EdgeCliqueCover]
    An instance is a pair $(\G,k)$, where $\G$ is a graph, and one has to decide if one can cover the edges of $\G$ by $k$ cliques that are contained in $\G$.
\end{problem}

\EdgeCliqueCover is \NPHard, and it is \NPHard to approximate within a factor of $n^{1-\delta}$, for some $\delta \in (0,1)$, see \cite{ly-hamp-94}.

\begin{theorem}
    \thmlab{NP_hard_arbitrary}%
    Computing the (exact) minimum size $3$-\WSPD is \NPHard on an arbitrary metric space.
\end{theorem}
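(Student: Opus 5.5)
The plan is a polynomial reduction from \EdgeCliqueCover. Given a graph $\G=(V,E)$, assumed without loss of generality to have no isolated vertices, I build a finite metric space in which the well-separated pairs containing non-singleton sets are exactly the pairs coming from cliques of $\G$. All remaining point pairs are forced to be covered one at a time.

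\emph{Construction.} Take two copies of the vertices: points $u$ and $u'$ for every $u \in V$. Define the distances by the rules in the following table.
\begin{center}
\begin{tabular}{lll}
pair of points & condition & distance \\
\hline
$u,v$ (same copy) & $uv \in E$ & $1$ \\
$u,v$ (same copy) & $uv \notin E$ & $2$ \\
$u',v'$ (same copy) & $uv \in E$ & $1$ \\
$u',v'$ (same copy) & $uv \notin E$ & $2$ \\
$u,v'$ (across copies) & $u=v$ or $uv\in E$ & $3$ \\
$u,v'$ (across copies) & otherwise & $2$ \\
\end{tabular}
\end{center}
All distances lie in $\{1,2,3\}$. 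The triangle inequality can fail only if some distance-$3$ pair is joined by a path of two distance-$1$ steps. Distance $3$ occurs only across copies, while every cross distance is at least $2$, so no such path exists and this is a metric. We use $\eps = 1/3$.

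\emph{Structure of separated pairs.} Let $\{X,Y\}$ be $3$-separated, i.e.\ $\max(\diamX{X},\diamX{Y}) \le \dmY{X}{Y}/3$. If one side, say $X$, has at least two points, then $\diamX{X} \ge 1$, so $\dmY{X}{Y} \ge 3$. Hence every distance between $X$ and $Y$ equals $3$, and also $\diamX{X},\diamX{Y} \le 1$. Distances inside a copy are at most $2$, so $X$ lies in one copy and $Y$ in the other. Write $Y=W'$ for some $W \subseteq V$. Then $X$ and $W$ are cliques of $\G$. Moreover, every $x\in X$ and $w \in W$ satisfy $x=w$ or $xw\in E$, so $K = X\cup W$ is a clique. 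It follows that every point pair not in the relation
\[
R=\Set{\{u,v'\}}{u=v \text{ or } uv\in E}
\]
can only be covered by a pair of singletons. Let $F$ be the number of point pairs outside $R$ (intra-copy pairs and non-edge cross pairs); at least $F$ pairs are needed for these, and $F$ pairs suffice.

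\emph{Counting the rest.} Any valid pair covering elements of $R$ has the form $\{X,W'\}$ with $X \cup W = K$ a clique (this includes stars $\{\{u\},W'\}$). It covers a subset of $K\otimes K'$. The pair $\{K,K'\}$ is itself $3$-separated, since both diameters are at most $1$ and all cross distances are $3$. So each such pair can be replaced by $\{K,K'\}$ without increasing the count. The minimum number of non-forced pairs is therefore the minimum number of cliques $K_1,\ldots,K_k$ with $\bigcup_i K_i\otimes K_i' \supseteq R$. Covering the cross pairs $\{u,v'\}$ with $uv\in E$ is exactly an edge clique cover. The diagonal pairs $\{u,u'\}$ are automatically covered because no vertex is isolated, so every vertex lies in some covering clique. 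Consequently $\G$ has an edge clique cover of size $k$ if and only if the point set has a $3$-\WSPD of size $F+k$, and \NPHardness follows.

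The main obstacle is the ``structure'' step. I must make sure that stars and other lopsided pairs cannot beat the clique count, and that no non-$R$ pair can be absorbed into a larger pair. Both rest on the tightness $\diamX{\cdot} \le 1 \iff \dmY{X}{Y} = 3$, so I would check these two claims carefully.
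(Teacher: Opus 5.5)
Your reduction is correct, and it takes a different route from the paper; both checks you flag go through, because a non-singleton side forces $\dmY{X}{Y}=3$, which places the two sides in opposite copies with all cross distances equal to $3$. The paper keeps a single copy of the vertex set with the $\{1,2\}$-metric and adds one apex point $p$ at distance $3$ from every vertex. The intra-graph point pairs are then forced to be $\binom{n}{2}$ singleton pairs, $p$ must be a singleton, and $\{\{p\},V_i\}$ is $3$-separated exactly when $V_i$ is a clique. Because these pairs must cover every $\{p,v\}$, the paper's gadget really encodes covering the \emph{vertices} of the graph by cliques (clique partition, NP-hard as coloring of the complement). This is so even though the paper announces it as a reduction from \EdgeCliqueCover. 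The converse direction there only yields a vertex clique cover, which can be much smaller than an edge clique cover ($m$ versus $m^2$ for $K_{m,m}$). Your doubled copy, with cross distance $3$ exactly on the diagonal and on edges, makes the cross pairs $\{u,v'\}$ correspond to edges, so you get a faithful reduction from \EdgeCliqueCover itself. The price is a longer structural step (forcing the sides into opposite copies with $X\cup W$ a clique) and the replacement of $\{X,W'\}$ by $\{K,K'\}$, which can create overlapping pairs. Your argument therefore relies on the covering notion of \WSPD, which is the paper's default. The apex construction, by contrast, also gives hardness of the disjoint \pWSPD variant verbatim, since a vertex clique cover can always be shrunk to a partition.
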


\begin{proof}
    Consider an instance $(\G,k)$ of the \EdgeCliqueCover problem, with $n$ vertices.  Construct a metric space $(\FMS,\dC)$ with $\VV = \VX{\G}$ as the initial set of elements.  Let $\dmY{u}{v} = 1$ if $uv \in \EGX{\G}$, and $2$ otherwise. The metric space $(\FMS, \DistChar)$ is a $1/2$ metric space.  Add a new point $p$ to $\FMS$, such that $\dmY{p}{v} = 3$ for all $v \in \FMS$.  The finite metric space $\FMS$ has a $3$-\WSPD of size $\binom{n}{2} +k$ if and only if \CliqueCover on $G$ has a clique cover of size $k$.

    Indeed, if $\G$ has a clique cover $C_1, \ldots, C_k$, then a corresponding $3$-\WSPD, of the state size, is
    \begin{equation*}
        \WR%
        =%
        \bigl\{ \{p\}, \VX{ C_i}\bigr\}_{i=1}^k %
        \cup%
        \binom{\VV}{2}.
    \end{equation*}
    The set $\WR$ is a valid $3$-\WSPD for $\FMS$, as it covers all the pairwise points in $\FMS$, and the diameter of each $C_i$ is one by construction, and thus, all the pairs constructed are $3$-separated.

    Conversely, let $\WR$ be a $3$-\WSPD of size $N$.  Any two elements $u, v \in \VV$ have distance either $1$ or $2$.  Thus, as $\eps = 1/3$, all the pairs involving only points of $\VV$ must be singletons, as otherwise they would not be $3$-separated. Similarly, the ``special'' point $p$ must be a singleton in any pair of $\WR$ including it, as any set containing $p$ and some other point of $\VV$ has diameter $3$, which is not $3$-separable from any set of points in $\FMS$. Thus, the pairs in $\WR$ can be broken into two types of pairs:
    \begin{compactenumI}
        \smallskip%
        \item Singleton pairs $\bigl\{\{x,y\}\bigr\}$, with $x,y \in \VV$. There are exactly $\binom{n}{2}$ such pairs.

        \smallskip%
        \item Pairs $\pair_i = \bigl\{ \{p\}, \VV_i \}$, for $i=1,\ldots, t$,
        where $N = \binom{n}{2} + t$ and $\VV_i \subseteq \VX{\G}$.
    \end{compactenumI}
    \smallskip%
    If the induced subgraph $\G_{\VV_i}$ form a clique, then $\diamX{\VV_i} =1$, and the pair $\pair_i$ is $3$-separated. If $\VV_i$ is not a clique then $\diamX{\VV_i} = 2$, and then $\pair_i$ is not $3$-separated. We conclude that $\WR$ encodes a $N - \binom{n}{2}$ clique cover for $\G$.

    Thus, $\FMS$ has a $3$-\WSPD  of size $\binom{n}{2} + k$ $\iff$ $\G$ has a clique cover of size $k$.
\end{proof}

\begin{lemma}
    \lemlab{hard_m_s}
    For a finite metric space $\FMS$ defined over a set of $n$ points, for $\delta \in (0,1)$ sufficiently small, \emph{no} $n^{\delta}$-approximation to the size of the optimal $3$-\WSPD of $\FMS$ is possible, unless $\POLYT=\NP$.
\end{lemma}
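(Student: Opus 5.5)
The plan is to amplify the reduction of \thmref{NP_hard_arbitrary} so that the clique-cover term dominates the optimum, and then to transfer the $n^{1-\delta}$ inapproximability of \EdgeCliqueCover~\cite{ly-hamp-94}. In that reduction the optimum is $\binom{n}{2}+k$, and the additive $\binom{n}{2}$ hides any multiplicative gap in $k$. I would therefore replace the single special point $p$ by many special points $p_1,\ldots,p_m$. Each $p_j$ is at distance $3$ from every $v\in\VV$, and the points of $\VV$ keep their $1/2$ metric from the graph.

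The first requirement is that every $p_j$ pays for its own clique cover. I would force this by making any two special points at distance strictly greater than $1$. Then a set containing two special points has diameter above $1=\eps\cdot 3$, so it cannot be paired with a subset of $\VV$. As in the proof of \thmref{NP_hard_arbitrary}, sets mixing a special point with a point of $\VV$ have diameter $\geq 2$ and are useless. So every pair covering the edges $\{p_j\}\otimes \VV$ has the form $\{\{p_j\},\VV_i\}$ with $\VV_i$ a clique. This gives
\[
  m\cdot k_{\mathrm{opt}}(\G)\;\le\;\Opt\;\le\; m\cdot k_{\mathrm{opt}}(\G)+\binom{n}{2}+W(m),
\]
where $W(m)$ is the size of an optimal $3$-\WSPD of the special points alone. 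The remaining work is to make $\binom{n}{2}+W(m)$ negligible compared with $m\cdot k_{\mathrm{opt}}$. Then an $n'^{\delta}$-approximation on the new space, with $n'=n+m$ points, yields an $n^{1-\delta'}$-approximation for \EdgeCliqueCover, contradicting~\cite{ly-hamp-94} unless $\POLYT=\NP$.

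To keep $W(m)$ small I would cluster the special points. Split them into blocks of size $b$, with pairwise distance $2$ inside a block and distance $6$ between blocks; the triangle inequality through $\VV$ permits distances up to $6$, so this is a valid metric. A block has diameter $2=6/3$, so it can be paired with any other block, while pairs inside a block must be singletons. This gives $W(m)\le (m/b)^2+mb$, which is $O(m^{3/2})$ for $b=\sqrt m$. I would then pick $m$ a suitable polynomial in $n$ (for example $m=n^{2}$) and restrict to hard \EdgeCliqueCover instances in which $k_{\mathrm{opt}}$ is polynomially large, so that $m\,k_{\mathrm{opt}}$ dominates $\binom{n}{2}+m^{3/2}$. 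The last step is to check that the multiplicative gap survives with exponent $\delta$ measured in $n'$, which is polynomial in $n$.

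The main obstacle is precisely this domination condition. The bounded range of distances (all of them lie in $[1/2,6]$) limits how far the special-point part can be compressed, so the $m^{3/2}$ overhead is only beaten when $k_{\mathrm{opt}}\gtrsim\sqrt m$. I would first try to cite, or re-derive by a standard padding of the \EdgeCliqueCover hardness (disjoint unions of copies of the instance, which multiply $k$), a version of the gap in which the cover size is polynomially large. If that fails, I would look for a three-level block structure of the special points. That structure must still keep every two-point set of special points with diameter above $1$, so it needs a different distance scale than the $2$-versus-$6$ split above.
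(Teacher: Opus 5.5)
Your set-up is correct, and it is a genuinely different amplification from the paper's. With $\eps=1/3$, all distances at most $6$, and special points pairwise more than $1$ apart, no valid set mixes special points with $\VV$, and the special side of every special--$\VV$ pair is a singleton. Hence $\Opt = m\,\theta(\G)+\binom{n}{2}+W^*$ exactly, where $W^*$ is the optimum for the special points alone.

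Note that the cliques paired with $\{p_j\}$ only have to cover the \emph{vertices} of $\G$. So the parameter is the vertex clique-cover number $\theta(\G)$ (the chromatic number of the complement), not the edge clique-cover number. For the latter, your lower bound $m\,k_{\mathrm{opt}}\le\Opt$ is false, e.g.\ for $K_{n/2,n/2}$. This is easy to repair, since $\theta$ is also polynomially hard to approximate, and the paper's text has the same slip.

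The real gap is that the proof stops at the step that decides it, and the choices you offer do not close it.
\begin{itemize}
\item \textbf{Your parameters fail.} For $m=n^2$ and blocks of size $\sqrt m$, the overhead is $m^{3/2}=n^3$, while $m\,\theta(\G)\le n^3$ because $\theta(\G)\le n$. No restriction on the instances rescues this.
\item \textbf{The three-level fallback fails.} A non-singleton set of special points has diameter $>1$, so its partner is more than $3$ away; since all distances are at most $6$, its diameter is at most $2$. Hence every point-pair at distance at most $3$ needs its own singleton pair, and only one level of grouping exists. Writing $\sigma(x)$ for the largest set containing $x$, you get $W^*\ge\frac12\sum_x(\sigma(x)-1)$ from these short pairs. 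You also get $W^*\ge\sum_{\{x,y\}}1/(\sigma(x)\sigma(y))$, since each pair $\{X,Y\}$ equals $\sum_{X\otimes Y}1/(|X||Y|)$. Balancing the two gives $W^*=\Omega(m^{4/3})$, an unavoidable overhead of $\Omega(m^{1/3})$ per special point.
\item \textbf{Padding works, but must be carried out.} Let $\G'$ be $r$ disjoint copies of $\G$, so $n'=rn$ and $\theta(\G')=r\,\theta(\G)\ge r$. Take $r=4n^2$, $m=4n^4$, $b=\sqrt m$. Then $\binom{n'}{2}/m\le r/2$ and $W(m)/m\le\sqrt m/2\le r/2$. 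Given a $3$-\WSPD of size at most $\rho\,\Opt$, some $p_j$ is paired with at most $\rho\,\Opt/m\le\rho r(\theta(\G)+1)\le 2\rho r\,\theta(\G)$ cliques. Each clique lies inside one copy, so some copy is covered by at most $2\rho\,\theta(\G)$ of them. With $N=n'+m=O(n^4)$ points, an $N^{\delta}$-approximation yields an $O(n^{4\delta})$-approximation of $\theta$, a contradiction for small $\delta$.
\end{itemize}

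This also explains the difference from the paper's route. The paper takes $\beta\gg n^2$ copies of the $\{1,2\}$-valued metric of $\G$ at mutual distance $3$. There the graph-dependent cross-copy term grows like $\binom{\beta}{2}$, while the overhead $\beta\binom{n}{2}$ is only linear in $\beta$, so domination is automatic and no padding is needed. Your special points carry no information about $\G$, so their mutual cost is pure superlinear overhead, and that is what forces the padding. In return, once padded, your reduction is a clean factor-$2$ transfer to $\theta(\G)$. The paper's cross-copy cost instead counts covers of $\VV\times\VV$ by products of cliques, a quantity only sandwiched between $\theta$ and $\theta^2$, which makes its final accounting less direct.
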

\begin{proof}
    Let us repeat the above proof, but use $\beta \gg n^2$ copies of the original graph. Given a graph $\G = (\VV,\Edges)$, with $n = \cardin{\VV}$ and $m = \cardin{\Edges}$, consider the set $U = \VV \times \IRX{\beta}$, where $\beta$ is an integer and $\IRX{\beta} = \{1, \ldots, \beta\}$. Let $\FMS=(U,\dC)$ be a metric space, where for two points $(x,i), (y,j) \in U$, we define their distance to be (i) $3$ if $i \neq j$, (i)  $1$ if $xy \in \Edges$, and (iii)  $2$ if $i = j$ and $xy \notin \Edges$. Intuitively, we have $\beta$ copies of the $1/2$ metric from the previous proof, with distance $3$ between the copies.

    If $\G$ has an (edge) clique cover of size $k$, with cliques in the cover having the vertices sets $C_1, \ldots, C_k$, then setting $C_{\alpha,i} =  C_\alpha \times \{i\}$, we have the $3$-\WSPD:
    \begin{align*}
      \WR
      &=
        \Set{\Bigl. \{ C_{\alpha,i}, C_{\beta,j}\} }{ i,j \in \IRX{\beta}, \alpha,\beta \in \IRX{k}, i< j}
      \\
      &\qquad \cup \Set{\Bigl. \bigl\{\{(u,i)\}, \{(v,i)\}\bigr\} }{ i \in \IRX{\beta}, u,v \in \IRX{n}, u \neq v}.
    \end{align*}
    The first part matches all the cliques between different layers, and the second part covers each layer's vertices with $\binom{n}{2}$ singleton pairs. Overall, this \WSPD has size
    \begin{equation*}
        \xi = \binom{\beta}{2} k^2 + \beta \binom{n}{2}.
    \end{equation*}

    Assume that we had somehow computed a $3$-\WSPD $\WR$ of $\FMS$ with $\alpha \xi$ pairs, for some $\alpha < k$ (here, the clique cover is not provided). A pair $\{X, Y\} \in \WR$, must have the property that there are indices $i, j$, such that $X \subseteq \VV \times \{i\}$ and $Y \subseteq \VV \times \{j\}$, as otherwise $\diamX{X} \geq 3$ or $\diamX{Y} \geq 3$, which in either case would imply the pair can not be $3$-separated as $\diamX{\FMS}=3$.

    Let $\WR_{i,j}$ be all the pairs in $\WR$ between subsets of $\VV\times\{i\}$ and $\VV\times \{j\}$. By averaging, there are indices $i,j$ such that $\cardin{\WR_{i,j}} \leq \cardin{\WR}/\binom{\beta}{2} \leq \alpha k^2 + 1 \leq k^3$. Interpreting all the sets of $\WR_{i,j}$ covering $\VV \times \{i\}$ as an edge clique-cover of $\G$, implies that we had computed a $k^3$ cover of $\G$. To see why this is impossible, assume $k = O(n^{\delta/8})$, for some sufficiently small constant $\delta \in (0,1)$ (this can be ensured by adding a sufficiently large clique to the original graph $\G$). We thus had computed an $k^2$-approximate clique cover for $\G$, where $k^3/k = O(n^{\delta/4}) \ll O(n^{1-\delta})$. However, it is known \cite{ly-hamp-94} that unless $\POLYT = \NP$, no approximation of quality $n^{1-\delta}$ is possible for edge clique cover.
\end{proof}

\subsection{Hardness in 2d}
\subsection{\NPHardness{} in the Euclidean plane}
\apndlab{2_d_hardness}

\subsubsection{Background}
\seclab{p_c_background}

Given a metric space $(\FMS, \dC)$, a set of points $\P \subseteq \FMS$, and a set of centers $\CS \subseteq \FMS$, the $k$-center clustering \emphw{price} of $\P$ by $\CS$ is
\begin{math}
  \KCenPrcY{\P}{\CS}%
  =%
  \max_{p \in P} \dmY{p}{\CS}%
  =%
  \max_{p \in P} \min_{c \in \CS} \dmY{p}{c}.
\end{math}
Every point in $\P$ is within distance at most $\KCenPrcY{\P}{\CS}$ from its nearest center in $\CS$.

\begin{definition}
    The \emphi{$k$-center problem} is to find a set $\CS \subseteq \FMS$ of $k$ points, such that $\KCenPrcY{\CS}{\P}$ is minimized:
    \begin{math}
      \KCenPrcOptY{\P}{k}%
      =%
      \min_{\CS \subseteq \P,
      \cardin{\CS} = k} \KCenPrcY{\P}{\CS}.
    \end{math}
\end{definition}

\begin{definition}[\PairwiseClustering]
    The \emphi{$k$-pairwise clustering} problem is to find a partition of $\P$ into $k$ disjoint sets $\Pi = \{P_1, \ldots, P_k \}$, such that the \defrefY{metric_space}{diameter} $\diamX{\Pi} = \max_{X \in \Pi} \diamX{X}$ is minimized. The optimal such diameter is denoted by
    \begin{equation*}
        \optCDiamY{k}{\P} = \min_{\Pi \in \IRX{k}^\P} \diamX{\Pi},
    \end{equation*}
    where $\IRX{k}^\P$ denotes the set of all $k$ disjoint partitions of $\P$.
\end{definition}

Gonzalez \cite{g-cmmid-85} presented a simple, $O(kn)$ time, $2$-approximation algorithm to the optimal $k$-center clustering radius, and it is not hard to show that no better approximation exists in general metric spaces. The same algorithm provides a $2$-approximation for the \PairwiseClustering problem. Feder and Greene \cite{fg-oaac-88} showed that in two dimensions it is $\approx 1.822$ hard to approximate the optimal $k$-center radius. For \PairwiseClustering, they showed a slightly stronger hardness.

\begin{theorem} \hcite[Theorem 2.1]{fg-oaac-88} %
    \thmlab{f_g_pairwise}%
    \PairwiseClustering, in the plane (under the $L_2$-norm) is, \NPHard to approximate within a factor of $\alpha = 2\cos 10^\circ  \approx 1.969$.
\end{theorem}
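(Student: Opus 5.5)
This theorem is cited from Feder and Greene \cite{fg-oaac-88}; the plan below reconstructs their argument rather than deriving it from earlier results in this paper. I would reduce from a known \NPHard planar covering problem: vertex cover in planar graphs of maximum degree $3$. The target is a gap instance of \PairwiseClustering. In a yes instance the points split into $k$ clusters of diameter at most $1$. In a no instance every partition into $k$ parts has a part of diameter at least $\alpha=2\cos 10^\circ$. Any algorithm with approximation factor below $\alpha$ would then decide vertex cover.

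\textbf{Step 1 (embedding).} I first embed the graph in the plane with its edges drawn as rectilinear or polygonal paths. A vertex of degree at most $3$ gets three incident directions, which can be spread roughly $120^\circ$ apart.

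\textbf{Step 2 (edge chains).} I replace each edge by a chain of an odd number $2m_e+1$ of points, spaced so that consecutive points are at distance $1$. The key spacing condition is that points two apart along the chain are at distance at least $\alpha$. A point whose two neighbours make an angle of $160^\circ$ or more at it satisfies this, since $2\cos 10^\circ$ is exactly the distance between the ends of two unit segments meeting at $160^\circ$. This is where the constant comes from.

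\textbf{Step 3 (counting).} With this spacing, any cluster of diameter less than $\alpha$ contains at most two consecutive chain points. Covering a chain by pairs then costs $m_e$ clusters if one endpoint's cluster is shared with the adjacent vertex, and $m_e+1$ otherwise. Setting $k$ to the sum of the $m_e$ plus the vertex-cover budget, I show that a vertex cover of size $K$ exists if and only if the points split into $k$ clusters of diameter at most $1$. In the no case, any partition into $k$ parts must therefore place two points at distance at least $\alpha$ in the same part.

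\textbf{Main obstacle (vertex gadgets).} The hard part is the vertex gadget where three chains meet. There a cluster might contain points from different chains at mutual distance below $\alpha$, which would let the solution cheat. I would place the three chain ends at mutual distance at least $\alpha$ around a center point, for example at $120^\circ$ spacing with radius $1$, giving $\sqrt3\approx1.73$. Since $\sqrt3<\alpha$, this spacing is too small, so the geometry must be tuned. The chains should approach along directions that exploit the full $160^\circ$ slack, and one must check every triple of nearby points. This gadget design, with checked distances, is the delicate core of Feder–Greene's construction, and I would take it from \cite{fg-oaac-88} rather than rederive it.
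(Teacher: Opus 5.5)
Your top-level plan matches the paper's description. The paper does not reprove this; it cites Feder--Greene and says it is a reduction from vertex cover in planar graphs of degree at most $3$, with edges replaced by long chains and vertices ``carefully designed to merge such chains.'' But the one thing you specify about the vertex gadget is backwards, and the gadget is where the whole proof lives.

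Your Step~3 count $k=\sum_e m_e+K$ works only if choosing a vertex costs exactly one extra cluster, of diameter at most $1$, that swallows the end points of \emph{all} chains incident to it. So the chain ends at a vertex must be pairwise within distance $1$; the chains must merge there. There must also be no point, such as your center point, that needs a cluster in every solution anyway. Otherwise every vertex absorbs its chain ends for free, and the count no longer depends on the cover.

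You instead push the chain ends to mutual distance at least $\alpha$ around a center (``$\sqrt3<\alpha$, this spacing is too small''). With that gadget every set of diameter $<\alpha$ is a single point or two consecutive points of a subdivided graph, which is triangle-free. The minimum number of clusters is then the number of points minus a maximum matching, which is computable in polynomial time, so no hardness can come from it. The cross-chain clusters you want to forbid are exactly what encodes the cover. What must be shown is that no clustering of diameter $<\alpha$ beats one cluster per chosen vertex. That is the ``only if'' direction of Step~3, which you never argue.

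The constant is also misattributed. After scaling the embedding by a polynomial factor, each bend can be spread over many chain points, so chains alone allow a ratio arbitrarily close to $2$. The bottleneck is the degree-$3$ vertex. There, the clean design cannot reach $2\cos 10^\circ$: chain ends pairwise within $1$, every other pair at distance at least $\alpha$.

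To see this, let $p_1,p_2,p_3$ be the chain ends and $q_i$ the next point of chain $i$. Then $|p_iq_i|,|p_ip_j|\le 1$ and $|q_ip_j|\ge 2\cos 10^\circ$ force $\angle q_ip_ip_j\ge 160^\circ$ for both $j\ne i$. Every angle of the triangle $p_1p_2p_3$ would then be at most $40^\circ$, which is impossible. This design caps at $2\cos 15^\circ$, the equilateral case.

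Reaching $2\cos 10^\circ$ therefore needs some pairs near a vertex, neither consecutive on a chain nor in the same vertex cluster, to lie at distance below $\alpha$. It then needs a counting argument that clusters using such pairs never save a cluster. This is delicate: a poor choice can let a chain shed three of its points to vertex clusters, saving an extra cluster and breaking the count. That analysis is the actual content of the Feder--Greene gadget. It is not a matter of tuning distances upward, and your proposal neither supplies it nor states correctly what it must achieve.
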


The proof of Feder and Greene \cite{fg-oaac-88} works via a reduction from \VertexCover for planar graphs of degree at most $3$. Specifically, given such a graph, they carefully replace its edges with long chains of points, and the vertices are carefully designed to merge such chains. In the resulting point set $P$ (which looks ``like'' an embedding of the original graph), there is a parameter $k$, such that either $P$ can be covered by $k$ clusters of diameter $1$, in which case the original graph has a vertex-cover of size $k$. Alternatively, any cover of the resulting point set by $k$ clusters has \PairwiseClustering diameter $\alpha$ ($k$ here is quite large).

\subsubsection{Reduction}
\seclab{reduction}

\begin{theorem}
    \thmlab{NP_hard_euclidean_2}%
    Computing the minimum size \WSPD is \NPHard for point sets in the Euclidean plane. Specifically, given a point set $Q$ in the plane, and parameters $t \in \NN$ and $\eps \in (0,1)$, such that deciding if $Q$ has a \eWSPD of size at most $t$ is \NPHard.
\end{theorem}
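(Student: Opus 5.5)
We will reduce from \PairwiseClustering in the plane, using the hardness gap of \thmref{f_g_pairwise}. Let $P$ be a Feder--Greene instance with parameter $k$: either $P$ can be partitioned into $k$ clusters of diameter at most $1$, or every partition into $k$ clusters has some cluster of diameter at least $\alpha \approx 1.969$. Set $n=|P|$ and choose $\eps$ of order $1/(4n)$. Build $Q = P \cup \{z\}$, where $z$ is a new point at distance $R$ from $P$, with $R$ chosen so that $\eps R$ is slightly more than $1$ but less than $\alpha$, up to the distortion discussed below. The gadget mimics the proof of \thmref{NP_hard_arbitrary}: pairs of the form $\{\{z\}, C\}$ act as ``cluster'' pairs and must have $\diamX{C} \le \eps\, \dmY{z}{C} \approx \eps R$.

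The first step is to control the pairs inside $P$. The Feder--Greene point set has closest-pair distance comparable to its cluster scale (the chains have spacing $\Theta(1)$ relative to the cluster diameter, up to polynomial factors). With $\eps \approx 1/(4n)$ and $\diamX{P} = O(n)$, no two subsets of $P$ that are not both singletons can be $\tfrac1\eps$-separated. So every \WSPD of $Q$ covers $\binom{P}{2}$ with exactly $\binom{n}{2}$ singleton pairs; these pairs are forced and their count does not depend on the instance. Likewise, $z$ must be a singleton in every pair that contains it, because a set containing $z$ and a point of $P$ has diameter about $R$, too large to be separated from anything in $Q$.

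The remaining pairs have the form $\{\{z\}, C_i\}$, and the sets $C_i$ must cover $P$. Such a pair is valid exactly when $\diamX{C_i} \le \eps\, \dmY{z}{C_i}$. The main obstacle is that $\dmY{z}{C_i}$ depends on where $C_i$ sits, since $P$ has extent $O(n)$ while $R$ is only about $1/\eps = O(n)$. To fix this, I would place $z$ very far away, with $R \gg n\cdot \diamX{P}$, and make the distances from $z$ to points of $P$ equal up to a factor $1\pm o(1)$. One way is to put $z$ on a large circle around $P$; alternatively, choose $\eps = (1+\eta)/R$ with $\eta$ small and $R$ polynomially large, so that $\eps\,\dmY{z}{C}$ always lies in $[1, \alpha)$. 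This contradicts the earlier choice $\eps \approx 1/(4n)$ with $\diamX{P}=O(n)$, so I would rescale: keep $\eps$ tied to the separation requirement inside $P$, which needs $\eps \cdot \cpX{P} < \diamX{P}$-type conditions, and check carefully that both constraints can hold simultaneously with polynomially bounded coordinates. The paper's remark that $\eps \approx 1/(4n)$ suggests this is what the intended parameters achieve.

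With the gap in place, the cover $\{C_i\}$ may overlap, but it can be shrunk to a partition without increasing any diameter. Hence $Q$ has a \eWSPD of size at most $t = \binom{n}{2} + k$ if and only if $P$ has a $k$-pairwise clustering of diameter at most about $1$ (which, in the gap instance, is equivalent to $<\alpha$). The forward direction is the explicit construction; the converse uses the forced structure established above. Finally, the coordinates are polynomial-size rationals, up to a rounding perturbation that the slack $\alpha - 1$ absorbs, so the reduction runs in polynomial time.
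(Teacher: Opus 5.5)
Your reduction is the same as the paper's. You add one far point $z$ (the paper's $p$). You take $\eps$ small enough that every pair inside $P$, and $z$ itself, must be a singleton, so the remaining pairs $\{\{z\},C_i\}$ encode a cover of $P$ by clusters. The Feder--Greene gap between $1$ and $\alpha$ then separates the two cases at $t=\binom{n}{2}+k$.

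The one real defect is the step you leave open. The ``contradiction'' between $\eps\approx 1/(4n)$ and placing $z$ far away does not exist, and the inequality you propose for reconciling them is the wrong way around. For $X,Y\subseteq P$ with $|X|\ge 2$ we have $\diamX{X}\ge\cpX{P}$ and $\dmY{X}{Y}\le\diamX{P}$. So singletons are forced as soon as $\eps\,\diamX{P}<\cpX{P}$. Your condition $\eps\cdot\cpX{P}<\diamX{P}$ is essentially vacuous and forces nothing. The correct condition is an \emph{upper} bound on $\eps$. Moving $z$ farther away and setting $\eps=1/\dmY{z}{P}$ only decreases $\eps$, so both requirements hold simultaneously for every large enough $R$, and no rescaling is needed. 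Your ``obstacle'' also needs nothing beyond $1+\diamX{P}/R<\alpha$, i.e., $R$ a large enough constant times $\diamX{P}$; taking $R$ larger still costs nothing.

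Concretely, the paper puts $p$ at distance $\ell=n/\delta$ from $P$, with $\delta=\min(1/4n,10^{-5})$, and sets $\eps=\tau=\delta/n=1/\ell$. Since $\diamX{P}\le 2n$, every $X\subseteq P$ satisfies $\ell\le\dmY{p}{X}\le\ell+2n$, hence $1\le\eps\,\dmY{p}{X}\le 1+2\delta<\alpha$. Two consequences follow. Clusters of diameter $1$ give valid pairs with $p$. Any valid $C_i$ has diameter below $\alpha$, so under the promise a $k$-clustering of diameter $<\alpha$ puts you in the yes-case. At the same time $\eps<1/(4n)$ yields the singleton structure inside $P$, given the (implicit, in both your write-up and the paper's) lower bound $\cpX{P}\ge 1/2$.

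With this parameter choice filled in, the rest of your argument is correct. That includes shrinking the possibly overlapping $C_i$ to a partition, which the paper glosses over.
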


\begin{proof}
    Let $(P,k)$ be an instance of \PairwiseClustering problem in the plane as generated by the reduction of \thmref{f_g_pairwise}.  The task at hand is to decide if $\P$ can be partitioned into $k$ clusters of diameter $1$, or $\P$ has the property that any such partition has \PairwiseClustering diameter $\alpha$.  Let $n = \cardin{P}$, and observe that by the construction of $P$, we have $\diamX{P} \leq 2n$. Note, that for $\eps < 1/4n$, any \eWSPD of $\P$ is an exhaustive list of all $\binom{n}{2}$ pairwise singletons.

    The idea is to add a point $p$ to $P$ that is far away. Let $p$ be a point in distance (say) $\ell = n/\delta$ from its closest point $p' \in P$, where $\delta \in (0,1)$ is a sufficiently small constant to be specified shortly. For any $X \subseteq P$, we have that its separation quality in the pair
    $\bigl\{ \{p\}, X\bigr\}$ is
    $\frac{\diamX{X}}{\dmY{p}{X}}$, which is bounded in the range
    \begin{equation*}
        (1-2\delta) \frac{\delta}{n}
        \diamX{X}
        \leq
        \frac{\diamX{X}}{(2 + 1/\delta) n}
        \leq
        \frac{\diamX{X}}{\ell + 2n}
        \leq
        \frac{\diamX{X}}{\dmY{p}{X}}
        \leq
        \frac{\diamX{X}}{\ell}
        =
        \frac{\delta}{n}\diamX{X},
    \end{equation*}
    since
    \begin{math}
        \frac{1}{2 + 1/\delta}
        =%
        \frac{\delta}{1+ 2\delta}
        \geq
        \delta(1-2\delta).
    \end{math}

    Now, the gap in the \PairwiseClustering diameter of $\P$ is between $1$ and $\alpha \approx 1.969$. For $\delta = \min(1/4n, 10^{-5})$, a set $X \subseteq P$ of diameter $1$, the corresponding pair $\bigl\{ \{p\}, X\bigr\}$ has separation in the range $0.999 \frac{\delta}{n}$ to $\tau = \frac{\delta}{n}$. On the other hand, if $X$ has diameter $\alpha$, then this pair has separation at least $0.999 \frac{\delta}{n} \alpha > 1.96 \tau$.

    So, consider a \eWSPD $\WR$ of $Q = \P \cup \{ p \}$, for $\eps = \tau = \min(1/4n, 10^{-5})/n$. This \WSPD has $\binom{n}{2}$ singleton pairs to handle the points in $P$. All the pairs involving $p$ must have $p$ as a singleton on one side of the pair. The other side can have a diameter of at most $1$ if we demand that it be $1 / \tau$-separated. If $X$ has diameter $\alpha$ or bigger, than the pair would not be $1 / \tau$-separated. It follows, that if $\P$ has a $k$ \PairwiseClustering with $k$-clusters of diameter $1$, then the corresponding \eWSPD would be of size $\binom{n}{2} + k$. The implication also works in the other direction. If $\P$ has a minimum $k$-\PairwiseClustering diameter $\alpha$, then any $\psi$-\WSPD of $Q$ of size $\binom{n}{2} + k$, must have separation at least (say) $(0.99 /\alpha)/\tau$.
\end{proof}

\section{The one-dimensional problem}
\seclab{1_dim}

\subsection{Approximating \minWSPD via  \SetCover}
\seclab{1D_WSPD_by_SetCover}

Let $\P = \{ p_1, \ldots, p_n \} $ be a set of $n$ points in the real line $\Re$, where $p_i < p_{i+1}$ for all $i$.  As described in the introduction and in \secref{set_cover}, we can reduce \cWSPD to \SetCover.  We aim to cover the set $\Q = \Set{ (p_i,p_j)}{ i < j}$ of all $\binom{n}{2}$ grid points strictly above the diagonal in the non-uniform grid $\P \times \P$.

A (maximal) well-separated pair for $\P$, is a pair $\pair = \{ I, J\}$, where $I, J$ are intervals on the line, and $I$ and $J$ are $\tfrac{1}{\eps}$-well-separated. Formally, we require that $\lenX{I}, \lenX{J} \leq \eps \dmY{I}{J}$, where $\lenX{I}$ denotes the length of $I$.  We might as well pick $I$ and $J$ to be maximal while preserving the separation property. That is $\lenX{I} = \lenX{J} = \eps \dmY{I}{J}$. In addition, if the two endpoints of $I$ and $J$ closest to each other are not points of $\P$, we can move the two intervals apart, increasing their separation.  Thus, for a fixed $\eps > 0$, all the (maximal) separated pairs $\{I, J\} \in \F$ are uniquely defined by their two inner endpoints, which belong to $\P$.  Therefore, the size of $\F$ is $\binom{n}{2}$.

When we order the pair $(I, J)$, denoting that $I < J$ (i.e., for all $x \in I$ and $y \in J$, we have $x< y$), the two inner points $p_i,p_j$ form a point $u=(p_i,p_j)$ in $Q$. This gives a unique relation between $I \times J$ and the square $\shadowX{u} := [p_i-\tau, p_i] \times [p_j,p_j+\tau] $, where $\tau = \eps (p_j - p_i)$, by the bottom-right point $u$. The square $ \shadowX{u}$ is the \emphi{shadow} of $u$, and $u$ is the \emphi{anchor} of $ \shadowX{u}$.

Thus, the \minWSPD of $\P$ corresponds to a minimal set $\SQSet \subseteq \Set{ \shadowX{u}}{ u \in Q}$ that covers the points of $\Q$. This problem can be $(1+\delta)$-approximated in $n^{O(1/\delta^2)}$ time using local search \cite{dl-gdssc-23,mr-pghsp-09}, as this is an instance of discrete set cover by pseudodisks. Thus, we get the following.

\begin{lemma}
    \lemlab{1_d_cover}%
    Given a set $\P$ of $n$ points in $\Re$, and parameters $\eps,\delta \in (0,1)$, one can compute, in $n^{O(1/\delta^2)}$ time, a \ecWSPD of $\P$ with $k$ pairs, where $k \leq (1+\delta)\Opt$, where $\Opt$ is the minimum number of pairs in any \ecWSPD of $\P$.
\end{lemma}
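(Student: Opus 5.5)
The plan is to reduce \lemref{1_d_cover} to the local-search \PTAS for discrete set cover by pseudodisks \cite{dl-gdssc-23,mr-pghsp-09}. There are four steps: establish that the reduction is exact (optimal cover value equals $\Opt$), that the family of shadows is a pseudodisk family, that the instance has polynomial size, and that a cover converts back into a \cWSPD of the same size.

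\emph{Exactness of the reduction.} Take any \ecWSPD with $\Opt$ pairs $\{X,Y\}$ with $X<Y$.
\begin{compactitem}
    \item Replace $X$ and $Y$ by their spanning intervals $I$ and $J$. This does not change diameters, distances, or which points they contain.
    \item Extend $I$ leftward and $J$ rightward until $\lenX{I}=\lenX{J}=\eps\,\dmY{I}{J}$. The inner endpoints $p_i=\max X$ and $p_j=\min Y$ stay in $\P$ and are unchanged.
    \item The extended pair is still $\tfrac1\eps$-separated and covers a superset of the original point-pairs. It therefore corresponds to the shadow $\shadowX{u}$ with $u=(p_i,p_j)$, and $\shadowX{u}\cap\Q$ is exactly the set of point-pairs covered by $(I\cap\P)\otimes(J\cap\P)$.
\end{compactitem}
Conversely, each shadow $\shadowX{u}$ yields the separated pair $\{[p_i-\tau,p_i]\cap\P,\ [p_j,p_j+\tau]\cap\P\}$. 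So covers of $\Q$ by shadows correspond to \ecWSPD{s} of the same size.

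\emph{Pseudodisk property (main obstacle).} The shadows are axis-parallel squares anchored at their bottom-right corner on the region $x<y$, with side length $\eps(y-x)$. I need the boundaries of any two of them to cross at most twice. First I would rule out the configuration where one square passes through the other, forming a cross. For this I would use that the side length is a function of the anchor that is monotone in the right way: it decreases as $x$ grows and increases as $y$ grows.

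Concretely, suppose $\shadowX{u}$ and $\shadowX{u'}$ cross. Then one square is strictly wider in the $x$-extent and strictly narrower in the $y$-extent than the other. Write $u=(a,b)$ and $u'=(a',b')$, and assume the cross has $a'<a$ and $a'-\tau'>a-\tau$. Then $\tau'<\tau$. From $b'\le b$ and $a'<a$ we get $b'-a'$ versus $b-a$ depending on the signs; combining $\tau=\eps(b-a)$ with the interval containments yields a contradiction. I would grind out this short case analysis.

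If the cross does not always cause a contradiction, the fallback is to note that squares are homothets of a single convex body. Any family of homothets of a fixed convex set forms pseudodisks, since translates and scalings of one convex body pairwise have boundaries crossing at most twice. I expect this standard fact to settle the property directly, and it makes the monotonicity argument unnecessary.

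\emph{Running time and conclusion.} The instance has $|\Q|=\binom n2$ points to cover and $\binom n2$ candidate shadows. Local search with swaps of size $O(1/\delta^2)$ therefore runs in $n^{O(1/\delta^2)}$ time and returns a cover of size at most $(1+\delta)\Opt$. Translating this cover back through the correspondence above gives the claimed \ecWSPD.
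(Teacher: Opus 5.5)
Your proposal is correct and follows the paper's route. You lift the pairs to the grid points $\Q$, identify maximal separated pairs with the $\binom{n}{2}$ anchored shadow squares, and invoke the local-search \PTAS for discrete set cover by pseudodisks to get $(1+\delta)\Opt$ pairs in $n^{O(1/\delta^2)}$ time. You justify the exactness of the reduction and the pseudodisk property more explicitly than the paper does, and your monotonicity case analysis is unnecessary: two axis-parallel squares can never form a cross, since being strictly wider forces being strictly taller.
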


\subsection{A 3-Approximation for \cWSPD}
\seclab{1_dim_3_approx}

As in \secref{1D_WSPD_by_SetCover}, we interpret $Q$ as the set of all points above the diagonal in the non-uniform grid $\P \times \P$, and search for a set of squares that covers $Q$. The algorithm uses the ``One for the Price of Two'' approach of Bar-Yehuda \cite{b-optua-00}. We compute a set of three squares that provides at least as much coverage as a single square in the optimal solution.

\begin{definition}
    \deflab{vicinity}%
    The \emphi{vicinity} $V(p)$ of a point $p\in \Q$ is the set of all points $p'\in \Q$, such that $p,p'\in \shadowX{r}$ for $r \in \Re^2$. That is, the vicinity of $p$ is the region of all the points in the plane that might be covered by a single shadow square that also covers $p$.
\end{definition}
See \figref{cover_construction} for an example of a vicinity.
We can reduce to $r \in Q$, because if there exists a $\shadowX{r}$ with $p, p' \in \shadowX{r}$ than $r' := (\min(p_i, p'_i), \min(p_j, p'_j)) \in Q$ is an anchor of a shadow with $p, p' \in \shadowX{r'}$.

Now, consider the optimal solution $\OPT \subseteq \F$, and consider a single point $p \in Q$ not yet covered. The point $p$ is covered by $\OPT$, and for all squares $\square \in \OPT$ that contain $p$, we have $\square \subset V(p)$ by definition.
Our main idea is to sweep the points $Q$ from right to left and cover the left half of the vicinity of a point with three squares (compare to \figref{cover_construction}).

\begin{figure}[ht]
    \phantom{}%
    \hfill%
    \includegraphics[page=1]{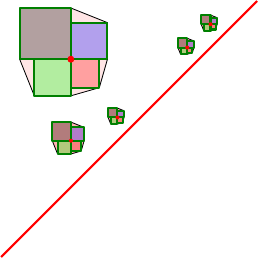}
    \hfill%
    \includegraphics[page=2]{figs/vicinity}
    \hfill%
    \phantom{}  \caption{Left: A few points and their vicinities, for $\eps =0.4$.  Right: The left half of the vicinity of a point $p$ can be covered by three squares.}%
    \figlab{cover_construction}
\end{figure}

\sparagraph{Algorithm.}
We sort the input $P$ to sweep our points $Q$ from right to left. As our points have a grid structure, we can process them column-wise.  We only process uncovered points and add squares to cover the left half of the vicinity of the current point.

Horizontal and vertical intervals define a square.  The horizontal interval $I$ determines when an interval is inserted or deleted during the sweep, i.e.\ for which columns the square is relevant.  The vertical intervals $J$ cover points in the current column.  We use a segment tree as status for the latter, see Bentley's algorithm~\cite{lw-mprrds-81}. A leaf of this segment tree is a point $p_i \in P$ representing the point $(p_i,p_j)$, where $p_j$ is fixed by the current column. We can then query the segment tree for the next uncovered point, ordered from bottom to top, in this column. To ensure that no grid points below the diagonal are considered, for each column, we add the interval below the diagonal into the segment tree.

We have two events: Processing an uncovered point and starting a new column. When processing an uncovered point $p$, we add three squares to cover the left half of its vicinity (see \figref{cover_construction}):
\begin{itemize}
    \item $\shadowX{p}$ (i.e., the shadow of $p$), %
    \item $S$, which is the unique shadow square that has $p$ as its top right corner, and  %
    \item $\shadowX{q}$, where $q$ is the bottom left corner of $S$.
\end{itemize}
Note that $S$ and $\shadowX{q}$ might not be in $\F$. However, as described above, we can move their anchor up resp.\ up and left to a point in $\Q \cap V(p)$. We only add $S$ and $\shadowX{q}$ if they contain a point from $V(p)$ besides $p$.

We insert the vertical segments of $\shadowX{p}$ and $S$ into the segment tree.
Since the anchor of $\shadowX{q}$ is left of the current column, we insert $\shadowX{q}$ into a priority queue, prioritized by the column. In another priority queue, we add the delete events for the three squares, each with a priority indicating when to be deleted from the segment tree (defined by the horizontal segments).

When a new column is started, one checks the priority queues for squares whose intervals need to be added or deleted to the segment tree.
Furthermore, at the start of a new column, one adds an interval to the segment tree to cover everything in the current column below the diagonal, and deletes the corresponding interval of the previous column.

We proceed with the following column if the segment tree contains no uncovered points. A visualization of the sweep and event status is shown in \figref{1D-sweep-status}.

\begin{figure}[ht]
    \phantom{}\hfill%
    \begin{minipage}{0.34\linewidth}
        \includegraphics[page=1,,width=0.99\linewidth]{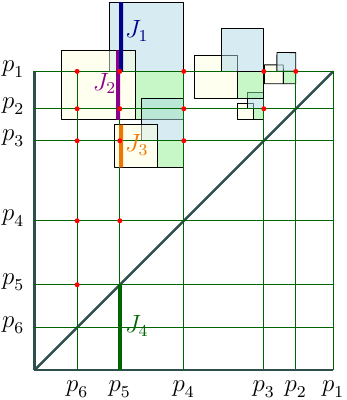}
    \end{minipage}
    \hfill%
    \begin{minipage}{0.45\linewidth}
        \includegraphics[page=2,width=0.99\linewidth]{figs/sweep_seg_tree}
    \end{minipage}%
    \hfill\phantom{}%

    \caption{Visualization of the sweep and event status at the beginning of column $p_5$. A search would find $p_4$. When we place an interval over it on the backtrack, we check if the children of a node are covered; if they are, we mark the node as covered. So, covering $p_4$ propagates to the root node, and the next search would terminate at the root.}
    \figlab{1D-sweep-status}
\end{figure}

\sparagraph{Approximation factor.}
Since we cover the vicinity of a point $V(p)$, our solution covers the optimal square(s) that cover $p$. Therefore, we get $3$ as the approximation factor, which is the minimum number of squares to cover the vicinity of a point in any sweep direction.

\begin{lemma}
    For any $\eps>0$, only three squares are needed.
\end{lemma}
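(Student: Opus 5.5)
The plan is to make the vicinity explicit in coordinates and then show directly that the three squares $\shadowX{p}$, $S$ and $\shadowX{q}$ produced by the algorithm contain the part of every shadow through $p$ that lies to the left of $p$ (the side the sweep has not yet reached).

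\emph{Setup.} Write $p=(a,b)$ with $a<b$. A shadow with anchor $r=(x,y)$, $x<y$, is the square
\[ \shadowX{r}=[x-\eps(y-x),x]\times[y,y+\eps(y-x)] \]
of side $\sigma(r)=\eps(y-x)$. It contains $p$ if and only if
\[ a\le x\le a+\sigma(r) \quad\text{and}\quad b-\sigma(r)\le y\le b. \]
So the anchors of shadows containing $p$ form a small region $A(p)$ to the right of and below $p$. I take the \emph{left half} of $V(p)$ to be $\Set{p'\in\Q}{p'\in\shadowX{r},\ r\in A(p),\ p'_x\le a}$. By the remark after \defref{vicinity}, the anchor may range over all of $\Re^2$, so no discreteness is needed.

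\emph{The three squares.} First compute them explicitly. $\shadowX{p}=[a-\eps(b-a),a]\times[b,b+\eps(b-a)]$. For $S$, its anchor $(a,b-s)$ must satisfy $s=\eps(b-s-a)$, so $s=\eps(b-a)/(1+\eps)$ and $S=[a-s,a]\times[b-s,b]$. Then $q=(a-s,b-s)$ and $\shadowX{q}$ has side $\eps(b-a)$, the same as $\shadowX{p}$, and equals $[a-s-\eps(b-a),a-s]\times[b-s,b-s+\eps(b-a)]$. Since $\eps(b-a)\ge s$, the three squares together contain the staircase region
\[ R=\bigl([a-\eps(b-a),a]\times[b,b+\eps(b-a)]\bigr)\cup\bigl([a-s,a]\times[b-s,b]\bigr)\cup\bigl([a-s-\eps(b-a),a-s]\times[b-s,b-s+\eps(b-a)]\bigr). \]

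\emph{Containment.} Fix $r=(x,y)\in A(p)$ and write $\sigma=\sigma(r)$. The part of $\shadowX{r}$ with abscissa $\le a$ is $[x-\sigma,a]\times[y,y+\sigma]$, and it suffices to show this lies in $R$. I would bound its four sides as follows.
\begin{compactenumi}
\item Right side: it is $a$ by construction.
\item Bottom side: $y\ge b-\sigma$. Substituting into $\sigma=\eps(y-x)\le\eps(b-a)$ gives $\sigma\le s$ exactly when $y\ge b-s$, so $y\ge b-s$.
\item Top side: $y+\sigma=y(1+\eps)-\eps x\le b(1+\eps)-\eps a=b+\eps(b-a)$, using $y\le b$ and $x\ge a$.
\item Left side: $x-\sigma\ge a-\sigma\ge a-s$ when $y\ge b$. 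When the square dips below row $b$, I need the finer bound $x-\sigma\ge a-s-\eps(b-a)$ together with the fact that the portion below row $b$ stays inside $\shadowX{q}\cup S$.
\end{compactenumi}
The proof then splits according to whether the left part lies above row $b$, in which case it sits in $\shadowX{p}\cup S$, or reaches below row $b$, in which case it sits in $S\cup\shadowX{q}$.

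\emph{Main obstacle.} The hard step is this last case split: one must prove that a rectangle $[x-\sigma,a]\times[y,y+\sigma]$ with $y<b$ never protrudes past the inner corner of the staircase $R$ at $(a-s,b)$. My first attempt would be monotonicity in the anchor. As $y$ decreases, or as $x$ increases with $y$ fixed, the side $\sigma$ shrinks at rate $\eps$. Hence the top edge $y+\sigma$ moves by a factor $(1+\eps)$ and the left edge by a factor $\eps$. The extremal anchors are therefore the corners of $A(p)$, namely $p$ itself, $(a,b-s)$, and the lower-right extreme point of $A(p)$. For each corner I would check containment by the explicit formulas above. Finally I would note that this holds for every $\eps>0$, because the computation only uses $\sigma\le\eps(b-a)$ and $s=\eps(b-a)/(1+\eps)$.
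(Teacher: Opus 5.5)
Your formulas for $\shadowX{p}$, $S$, $q$ and $\shadowX{q}$ are correct. The strategy is also right: show that the part with abscissa $\le a$ of every shadow through $p$ lies in their union. (The paper only reads this covering off the figure and spends its proof on tightness over sweep directions.) But the containment step, which is the whole content, is not established, and several claims along the way are false.

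Write $e=\eps(b-a)$, so $e=(1+\eps)s>s$.
\begin{compactitem}
\item In (iv) you claim $x-\sigma\ge a-s$ when $y=b$. This fails for $r=p$, where $x-\sigma=a-e<a-s$.
\item In (ii), the assertion ``$\sigma\le s$ exactly when $y\ge b-s$'' also fails for $r=p$, since there $\sigma=e>s$. The conclusion $y\ge b-s$ is nevertheless true, via $b-y\le\sigma=\eps(y-x)\le\eps(y-a)$.
\item More seriously, your case split is false. A left part that dips below row $b$ still contains $p$, and in general it extends above row $b$ to the right of $a-s$, so it does not sit in $S\cup\shadowX{q}$. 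For $\eps=1$, $p=(0,1)$, $r=(0,3/4)$, the left part is $[-3/4,0]\times[3/4,3/2]$, which contains $(-1/10,6/5)\notin S\cup\shadowX{q}$.
\item The obstacle you describe is misplaced. In fact $R=[a-s-e,a]\times[b-s,b-s+e]\cup[a-e,a]\times[b,b+e]$, so $(a-s,b)$ is not a corner of $R$; its only reflex corner is $(a-e,b-s+e)$.
\item Reducing to the vertices of $A(p)$ by monotonicity is not a valid argument as stated, because $R$ is not convex. The same rectangle $[-3/4,0]\times[3/4,3/2]$ comes from an anchor on an edge of $A(p)$. It is not contained in $\shadowX{p}\cup S$, the union of the nondegenerate vertex rectangles, since it contains $(-3/5,9/10)$. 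This is exactly why $\shadowX{q}$ is needed.
\end{compactitem}

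The missing ingredient is a uniform left bound. For every anchor $r=(x,y)$ of a shadow through $p$,
\[
x-\sigma=(1+\eps)x-\eps y\ge(1+\eps)a-\eps b=a-e,
\]
using only $x\ge a$ and $y\le b$. Your ``finer'' bound $x-\sigma\ge a-s-e$ is weaker than this and cannot rule out the notch $[a-s-e,a-e)\times(b-s+e,b+e]$. Combine the uniform bound with $y\ge b-s$ and with $y+\sigma\le b+e$ from your (iii). Then every left part lies in the single box $[a-e,a]\times[b-s,b+e]$, and this box lies in $R$:
\begin{compactitem}
\item Its part with $y\ge b$ is exactly $\shadowX{p}$.
\item Its part with $y\le b$ lies in $[a-s-e,a]\times[b-s,b]$. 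That rectangle is contained in $S\cup\shadowX{q}$ because $\shadowX{q}$ has height $e\ge s$.
\end{compactitem}
No case analysis over anchors is needed.

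For comparison, the paper's proof devotes itself to optimality. It argues via the five ``features'' of the vicinity that for every half-plane through $p$ (every sweep direction), at least three squares are necessary. Your proposal, once repaired, gives the sufficiency half rigorously but does not address that tightness claim.
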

\begin{proof}
    To account for every sweep's direction, consider all halfplanes through $p$. First, observe that one can cover the two left quadrants with three squares. This argument gives us the lower bound.  We highlight five features also shown in \figref{features}. The upper left point $r$, the neighborhood of $p$, the diagonals in the upper right and lower left quadrant, and the two diagonals (as a single feature) in the lower right quadrant. All halfplanes, except for three, contain an open neighborhood of at least four of them. Every square that contains such an open neighborhood can not cover any other. One direction is your solution, another is a symmetrical version. The last one contains only two features. However, it contains the lower right quadrant, which can not be covered with three squares. So we need at least $3$ squares, which gives a tight upper bound.
    \begin{figure}
        \phantom{}%
        \hfill%
        \includegraphics[page=1]{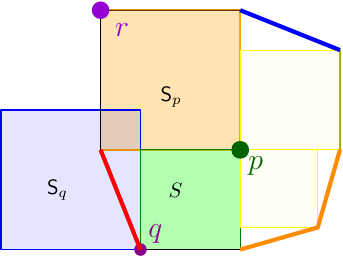}
        \hfill
        \includegraphics[page=2]{figs/small_proof}
        \hfill
        \phantom{}%

        \caption{The five features in different colors and an angle plot with intervals show which direction includes which features. The three arrows highlight directions of the normal vector, which are the only ones with fewer than four features.}
        \figlab{features}
    \end{figure}
\end{proof}

Finally, we analyze the running time.

\begin{lemma}
    \lemlab{runtime}%
     Given a set $\P$ of $n$ points in $\Re$, and parameter $\eps>0$, one can compute, in $O(n \log n)$ time, a cover \eWSPD of $\P$ with $k$ pairs, where $k \leq 3\Opt$, where $\Opt$ is the minimum number of pairs in any cover \eWSPD of $\P$.
\end{lemma}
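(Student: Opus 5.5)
The lemma has two parts: the approximation factor, which I take from the three-square lemma, and the running time, which is the new content. I would prove them in that order.

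\emph{Approximation factor.} The plan is a charging argument in the spirit of Bar-Yehuda's ``one for the price of two''. Let $p_1, p_2, \ldots, p_m$ be the uncovered points the sweep processes. I claim no two of them lie in a common square of $\OPT$. Suppose $p_a$ is processed before $p_b$ and some square $\square \in \OPT$ contains both. By \defref{vicinity}, $p_b \in V(p_a)$. Since the sweep is right to left and bottom to top within a column, $p_b$ lies in the left half of $V(p_a)$. That half is covered by the three squares added for $p_a$, so $p_b$ is already covered when the sweep reaches it, a contradiction. Each $p_a$ is covered by at least one square of $\OPT$, and these squares are distinct for distinct $p_a$, so $m \le \Opt$. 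The algorithm outputs at most $3m \le 3\Opt$ squares.

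Each output square is then converted into a genuine pair of $\F$. A square is either a shadow in $\F$ or, as described above, can be moved up (respectively up and left) to an anchor in $Q \cap V(p)$ without losing coverage of points of $Q$. The corresponding intervals give a valid $\tfrac1\eps$-separated pair. The main subtlety is the claim that the left half of $V(p)$ is fully covered by $\shadowX{p}$, $S$ and $\shadowX{q}$, including the boundary points of the current column. I would verify this by writing $V(p)$ explicitly. For $p=(x,y)$, the union of shadows $\shadowX{r}$ containing $p$ is bounded by the lines of slope determined by $\tau=\eps(r_j-r_i)$. I would then check coordinatewise that the three squares dominate the part with abscissa at most $x$, using the tie-breaking order of the sweep.

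\emph{Running time.} The plan is to count events and charge each a logarithmic cost.
\begin{enumerate}
    \item Sorting $P$ costs $O(n\log n)$, and there are $n$ columns.
    \item Each processed point inserts at most three squares, and $m \le \binom n2$ in general. To get near-linear time I need $m=O(n)$. The argument: every processed point $p$ has its entire left-half vicinity covered afterwards, and in particular its whole column segment inside $\shadowX{p}$ and $S$. So the next uncovered point in the same column lies above $p_j+\tau$.
    \item To bound this, I would charge each processed point to its row index $j$, or to its column. The claim to prove is that each row receives $O(1)$ processed points: after processing $p=(p_i,p_j)$, the square $S$ together with $\shadowX{q}$ covers every point of row $j$ to the left of $p$ inside the vicinity, and points farther left are handled by the diagonal/separation geometry.
    \item Given $O(n)$ squares, each costs $O(\log n)$ for its segment-tree insertion and deletion and its priority-queue operations. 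Each column start costs $O(\log n)$, and each query for the next uncovered point costs $O(\log n)$ using the covered-flag propagation shown in \figref{1D-sweep-status}. The total is $O(n\log n)$.
\end{enumerate}

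\emph{Main obstacle.} The hardest step is bounding the number of processed points by $O(n)$. The approximation bound only gives $m\le \Opt$, and $\Opt$ itself could a priori be superlinear. If the per-row charging fails, my fallback is to use the known fact that a $\tfrac1\eps$-\WSPD of size $O(n/\eps)$ exists in $\Re$, so $\Opt = O(n/\eps)$, which yields $O((n/\eps)\log n)$ and treats $\eps$ as a constant. I would first try to show directly that the shadows added in one column cover, for every subsequent column, a contiguous range. This would make each row's uncovered points in the current column form a single interval, bounding processed points per column by $O(1)$ amortized.
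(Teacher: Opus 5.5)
Your approximation-factor argument is the paper's, stated more carefully. You charge each processed point to an optimal square, and no optimal square is charged twice because $\shadowX{p}$, $S$ and $\shadowX{q}$ cover the closed left half of $V(p)$.

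The genuine problem is step 3 of your running-time plan. A bound $m=O(n)$ on the number of processed points that does not depend on $\eps$ is false. So no per-row or per-column $O(1)$ charging can work, and ``points farther left are handled by the diagonal/separation geometry'' has nothing behind it. To see this, take $P=\{1,\ldots,n\}$ with $\eps\le 1/2$ and $n\ge 2/\eps$. A separated pair $\{X,Y\}$ covering $(i,j)$ with $j-i<1/\eps$ satisfies $\diamX{X},\diamX{Y}\le \eps\,(j-i)<1$, so both sides are singletons. Each of the $\Omega(n/\eps)$ such point pairs therefore needs its own pair, and $\Opt=\Omega(n/\eps)$. The algorithm outputs a valid cover with at most $3m$ squares, so $m\ge \Opt/3=\Omega(n/\eps)$. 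For $\eps<1/(n-1)$ the output even has $\binom{n}{2}$ pairs, so the lemma's $O(n\log n)$ bound can only hold with $\eps$ treated as a constant.

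What you list as your fallback is exactly the paper's proof. The paper says the analysis ``relies on the optimal solution having $O(n)$ many squares''. It charges the processed (uncovered) points to the optimal squares to get $m\le \Opt$. Then $\Opt=O(n/\eps)$ because a \eWSPD of size $O(n/\eps)$ exists in $\Re$ (Callahan--Kosaraju with $d=1$), and this is $O(n)$ for fixed $\eps$. Combine this with your step 4:
\begin{itemize}
\item each of the $O(m)$ squares costs $O(\log n)$ in segment-tree and priority-queue updates;
\item each of the $O(n+m)$ next-uncovered-point queries costs $O(\log n)$ using the covered flags;
\item each of the $n$ column starts costs $O(\log n)$.
\end{itemize}
The total is $O((n+\Opt)\log n)=O(\tfrac{n}{\eps}\log n)$, which is $O(n\log n)$ for constant $\eps$. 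So make the fallback your main argument and drop the attempt to bound $m$ by $O(n)$ independently of $\eps$. Your worry that $\Opt$ might be superlinear is settled by this $O(n/\eps)$ upper bound.
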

\begin{proof}
    The runtime analysis relies on the optimal solution having $O(n)$ many squares.  A square can be computed in constant time.  We only insert and delete intervals for each column and the uncovered point. We can charge the uncovered points on the optimal intervals to get $O(n)$ many. Each insertion and deletion is in $O(\log n)$ time. Computing the subsequent uncovered point takes $O(\log n)$ time. For this purpose, we maintain a flag for each tree node indicating whether it is fully covered. This tracking can be performed while inserting and deleting intervals, and verified by checking whether both children are fully covered during backtracking. Maintaining the priority queue takes $O(n \log n)$ overall. Each square introduces only one insertion, deletion, and a point lookup, and the overall runtime is $O(n \log n)$.
\end{proof}

\subsection{From a \cWSPD to a \pWSPD}
\seclab{partition}

If one requires the separated pairs to be disjoint, we give a constant approximation.

\begin{lemma}
    \lemlab{squares_partition_2}%
    Let $\SQ$ be a set of $n$ axis-aligned squares in the plane. Then one can compute, in polynomial time, a set $\R$ of at most $9n$ interior-disjoint rectangles, such that $\cup_{\square \in \SQ} = \cup \SQ = \cup \R$. Furthermore, for each $\rect \in \R$, there exists a square $\square \in \SQ$, such that $\rect \subseteq \square$.
\end{lemma}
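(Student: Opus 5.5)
We need to partition the union of $n$ axis-aligned squares into at most $9n$ interior-disjoint rectangles, each contained in one of the input squares.

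The plan is to build the partition \emph{incrementally}. Order the squares by non-increasing side length, $\square_1, \ldots, \square_n$. Process them in this order and maintain the region $U_{i-1} = \square_1 \cup \cdots \cup \square_{i-1}$. When $\square_i$ is processed, add a set of interior-disjoint rectangles partitioning $\square_i \setminus U_{i-1}$. Every such rectangle lies inside $\square_i$, so the containment requirement is automatic, and the union of all rectangles produced is $\cup \SQ$. It therefore suffices to show that $\square_i \setminus U_{i-1}$ can be partitioned into $O(1)$ rectangles, ideally at most $9$.

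The key structural fact is that every earlier square $\square_j$, $j<i$, is at least as large as $\square_i$. Hence no $\square_j$ can lie strictly inside $\square_i$ without touching its boundary, and more strongly, $\square_j \cap \square_i$ is a rectangle that contains at least one corner of $\square_i$ or spans $\square_i$ fully in one direction. So each intersection is a ``corner rectangle'' anchored at a corner of $\square_i$, or a ``strip'' spanning $\square_i$ horizontally or vertically from one side.

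The union of such pieces inside $\square_i$ is a union of orthogonal staircases growing from the four corners, together with full-width or full-height slabs. The uncovered region $\square_i \setminus U_{i-1}$ is then an orthogonal region whose shape is constrained. It is bounded by four monotone staircases, one per corner, each of which is the boundary of a union of corner-anchored rectangles at that corner.

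The main obstacle is that these staircases can have many steps. The uncovered region could then have complexity linear in $i$, not constant, so a naive per-square constant bound fails.

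To handle this, I would not bound each square separately but use an amortized count. Partition $\square_i \setminus U_{i-1}$ by the vertical decomposition: shoot vertical segments up and down from every reflex vertex. The number of rectangles is $O(1+r_i)$, where $r_i$ is the number of reflex vertices of the uncovered region. I would then charge reflex vertices to square corners. Each reflex vertex of $\square_i \setminus U_{i-1}$ is a corner of some earlier square $\square_j$ lying in the interior of $\square_i$. Once $\square_i$ is added, that corner is covered, so it cannot reappear later. Each square has $4$ corners, so there are $4n$ charges in total. Together with the constant per-square overhead (at most about $5$ pieces, e.g.\ from the top/bottom slab structure), this gives $5n + 4n = 9n$.

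The delicate step is checking that each input corner is charged at most once and that the per-square constant is really $5$. If the constants do not come out exactly, I would fall back to the cleaner bound $O(n)$ from a global vertical decomposition of $\cup\SQ$. That decomposition has complexity $O(n)$ because unions of squares, being pseudodisks, have linear union complexity. Each slab of that decomposition then gets assigned to a containing square, after further splitting the slabs at square boundaries in amortized constant cost.

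Polynomial running time is immediate.
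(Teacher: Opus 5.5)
\textbf{The gap: the per-square overhead is not constant.} Your overall plan is the paper's: insert squares from largest to smallest, take the vertical decomposition of the newly covered region $\square_i\setminus U_{i-1}$, and pay by charging to vertices that get buried. Your observation that reflex vertices of $\square_i\setminus U_{i-1}$ are corners of earlier squares, each buried at most once, is also fine. The problem is the claim that, apart from these reflex vertices, each square costs only a constant number of rectangles.

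The bound $O(1+r_i)$ holds for a connected region. In general, every connected component of $\square_i\setminus U_{i-1}$ needs its own rectangle, and the number of components is not bounded. For example, let $\square_i=[0,1]^2$, and let all earlier squares have side $10$:
\begin{itemize}
  \item squares $A_1,\ldots,A_m$ with top-right corners $(a_k,b_k)$;
  \item squares $C_1,\ldots,C_m$ with bottom-left corners $(c_k,d_k)$;
  \item $0<c_1<a_1<c_2<a_2<\cdots<a_m<1$ and $1>b_1>d_1>b_2>d_2>\cdots>d_m>0$.
\end{itemize}
Each corner $(a_k,b_k)$ lies in the interior of $C_k$, and each $(c_k,d_k)$ lies in the interior of $A_k$. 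So no corner of an earlier square lies on $\partial U_{i-1}$ inside $\square_i$, and $r_i=0$. Yet the open rectangles $(a_k,c_{k+1})\times(b_{k+1},d_k)$, for $k=1,\ldots,m-1$, are distinct components of $\square_i\setminus U_{i-1}$. They are separated by the fully covered columns $[c_k,a_k]$. So this single square needs at least $m-1=\Omega(n)$ rectangles, and charging only to square corners cannot pay for them.

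\textbf{What is missing.} The vertices that do pay are the corners of these components, such as $(a_k,b_{k+1})$. These are crossings of an edge of $A_k$ with an edge of $A_{k+1}$: vertices of $\partial U_{i-1}$ that are not square corners. The missing idea is to charge to \emph{all} union-boundary vertices buried by $\square_i$, and to bound how many such vertices are ever created. Under the largest-first order, an edge of $\square_i$ that enters an earlier (larger) square stays inside it to the edge's end. Hence each edge of $\square_i$ contributes at most one segment to $\partial U_i$, and each insertion creates at most $8$ new union vertices. Since a buried vertex never reappears, the total charge is $O(n)$. Your structural fact that each $\square_j\cap\square_i$ contains a corner of $\square_i$ is exactly what yields this one-segment property, so you had the ingredient but charged to the wrong objects.

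\textbf{The fallback.} It does not close the gap either. Cells of a vertical decomposition of $\cup\SQ$ need not lie inside a single square. Splitting them at the square boundaries crossing them can produce $\Theta(n^2)$ pieces, since an arrangement of $n$ squares can have quadratically many vertices inside the union. The asserted ``amortized constant cost'' needs precisely the charging argument that is missing above.
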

\begin{proof}
    Let $\SQ = \set{\sq_1, \ldots, \sq_n}$ be this set of squares numbered from largest to smallest.  We next insert them into the union one by one.  Imagine collecting all the vertices that appear on the boundary of the union during this process into a set $U$.

    In the $i$\th iteration, at most $8$ new vertices are added to $U$.  Indeed, an edge of the smallest square (i.e., $\sq_i$) can support at most two vertices in the union.  Indeed, as one traverses an edge $e$ of $\sq_i$, once it enters into a larger square $\sq_j$, for $j < i$, it can not reappear on the union, as its endpoint lies in the interior of $\sq_j$. Thus, the portion of $e$ that appears on the boundary of $\cup_{t=1}^i \sq_i$ is either empty or a single segment. Thus, $\sq_i$ can contribute at most four edges to the union of squares, and hence at most $8$ vertices to $U$.

    For $i=1,\ldots,n$, let $\UU_i = \Re^2 \setminus \cup_{t=1}^i \sq_i$ be the complement of the union of the first $i$ squares of $\SQ$, and let $R_i = \UU_i \cap \sq_i$ be the region being newly covered in the $i$\th iteration. The algorithm in the $i$\th iteration adds the rectangles forming the vertical decomposition of $R_i$ to the output set.

    For $i=1,\ldots,n$, let $V_i = \VX{\partial{\UU_i}}$ be the set of vertices of the union, and let $B_i = V_{i-1} \cap \sq_i$. The vertices of $B_i$ are being ``buried'' in the interior of the union in the $i$\th iteration.

    Let $b_i = \cardin{B_i}$. If $b_i = 0$, then $\sq_i$ is an ``island'', and a new rectangle is added to the output. More generally, a somewhat naive upper bound on the number of rectangles needed in forming the vertical decomposition of  $R_i$ is $b_i +1$. Indeed, imagine sweeping $R_i$ from left to right. Every time a new trapezoid (except the first one) is created, it is because of a newly encountered vertex of $B_i$.

    Since every vertex being covered was created by some earlier iteration, we have $\sum_{i=1}^n b_i \leq 8m -4$. Thus, the number of rectangles, output by the algorithm, is bounded by $9n$.
\end{proof}

\begin{remark}
    A similar result to \lemref{squares_partition_2} holds for any (convex) pseudo-disks, but the constant is somewhat worse. A direct proof follows by randomized incremental construction drawing the objects from ``back to front'', see \cite[Section 2]{hks-akltd-17}.
\end{remark}

Combining \lemref{1_d_cover} and \lemref{squares_partition_2} implies the following result.

\begin{lemma}
    \lemlab{1_d_partition_2}%
    Given a set $\P$ of $n$ points in $\Re$, and a parameter $\eps \in (0,1)$, one can compute, in $n^{O(1)}$ time, a partition \eWSPD of $\P$ with $k$ pairs, where $k \leq 10 \Opt$, where $\Opt$ is the minimum number of pairs in any cover (or partition) \eWSPD of $\P$.
\end{lemma}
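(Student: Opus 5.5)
We combine \lemref{1_d_cover} with \lemref{squares_partition_2}. First fix $\delta = 1/9$ (any constant with $9(1+\delta) \leq 10$ works). \lemref{1_d_cover} then gives, in $n^{O(1/\delta^2)} = n^{O(1)}$ time, a \ecWSPD of $\P$ with $k' \leq (1+\delta)\Opt_{\mathrm{cov}}$ pairs, where $\Opt_{\mathrm{cov}}$ is the minimum size of a cover \eWSPD. As in \secref{1D_WSPD_by_SetCover}, we view this cover as a set $\SQ$ of $k'$ shadow squares $\shadowX{u}$ in the plane whose union contains every grid point of $\Q$ above the diagonal. Since every partition \eWSPD is also a cover \eWSPD, we have $\Opt_{\mathrm{cov}} \leq \Opt_{\mathrm{part}}$. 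So it suffices to compare against $\Opt_{\mathrm{cov}}$, which is the optimum the statement refers to.

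Next we apply \lemref{squares_partition_2} to $\SQ$. This gives, in polynomial time, at most $9k'$ interior-disjoint axis-aligned rectangles $\R$ with the same union, each contained in some square of $\SQ$. Each rectangle $\rect = I' \times J' \subseteq \shadowX{u} = I \times J$ yields the pair $\{\P\cap I', \P \cap J'\}$. This pair is $\tfrac{1}{\eps}$-separated because shrinking both sides can only decrease the diameters and can only increase the distance. Every grid point $(p_i,p_j) \in \Q$ lies in $\cup\SQ = \cup\R$, so every pair of points is covered.

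The main obstacle is disjointness. The rectangles are only interior-disjoint, so a grid point lying on a shared boundary edge could be covered twice. We fix this by a tie-breaking convention. We make each rectangle half-open, of the form $[a,b) \times [c,d)$, which is natural for a vertical decomposition, and assign each boundary point to exactly one rectangle. The one exception is the outer boundary of the union, where the right and top edges must stay closed. The cleanest way to handle this is to perturb first: replace each closed square by a slightly enlarged open one, chosen so that no grid coordinate lies on any square's boundary. This keeps every point of $\Q$ in the same squares and preserves separation, since we only need the point sets $\P\cap I$, and a small enlargement adds no new points of $\P$. With all boundaries avoiding the grid, interior-disjoint rectangles contain pairwise disjoint sets of grid points, so every pair of $\binom{\P}{2}$ is covered exactly once. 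We then drop the rectangles that contain no grid point of $\Q$.

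The resulting partition \eWSPD has at most $9k' \leq 9(1+\delta)\Opt_{\mathrm{cov}} \leq 10\,\Opt_{\mathrm{cov}} \leq 10\,\Opt_{\mathrm{part}}$ pairs, and the whole procedure runs in $n^{O(1)}$ time.
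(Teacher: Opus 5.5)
Your proposal is correct and is essentially the paper's own argument: apply \lemref{1_d_cover} with a constant $\delta$ such that $9(1+\delta) \leq 10$, then apply \lemref{squares_partition_2}. Your perturbation step for grid points lying on shared rectangle boundaries fills in a detail the paper leaves implicit; note that it relies on the vertical-decomposition rectangles having their edges on lines through square edges, which comes from the construction in the proof of \lemref{squares_partition_2} rather than from its statement.
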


\section{Experiments}
\seclab{experiments}

We implemented our algorithms for the one-dimensional case.  The implementation was done using \Julia, and \Gurobi to solve the associated integer program \IP. Specifically, we implemented the following:
\begin{compactenumI}[leftmargin=1cm]
    \smallskip%
    \item Greedy: Lifts the point set from one dimension to the 2d grid, and computes their associated anchored squares induced by the points. Then uses the standard greedy \SetCover algorithm that repeatedly picks the set that covers the most uncovered elements.

    \smallskip%
    \item \WSPD: Uses the \WSPD algorithm as described by Callahan and Kosaraju \cite{ck-dmpsa-95}. This algorithm outputs a disjoint cover, and it is surprisingly efficient in our experiments.

    \smallskip%
    \item \AprxT: The $3$-approximation algorithm described in
    \secref{1_dim_3_approx}.

    \smallskip%
    \item \AprxTC: The algorithm of \AprxT with a post-\SpellIgnore{proc\-essing} stage removing unnecessary squares in the cover using greedy \SetCover to compute a smaller sub-cover.

    \smallskip%
    \item \IP: A solution to the natural integer program for computing the cover of the associated set-system of points and squares.
    The \IP was solved using \Gurobi, with a time limit of 240 seconds. \Gurobi is very good at applying various techniques in solving the \IP. In particular, the difficulty \Gurobi had in solving even mildly sized instances suggests that computing the \minWSPD is hard even for $\eps=1$ in one dimension. Nevertheless, \Gurobi seems to have a good heuristic for solving this specific \IP, and it gets a decent upper bound quite quickly.

    \smallskip%
    \item Disjoint \IP: A solution to the natural integer program for computing the \emph{disjoint} cover of the associated set-system of points and \emph{rectangles}. This \IP is much harder to solve than the original \IP, and \Gurobi had quite a hard time solving it.  Note that in this case, when the upper and lower bounds provided by \Gurobi disagree, neither bound necessarily represents a valid solution.
\end{compactenumI}

\begin{figure}[h]
    \phantom{}\hfill%
    \includegraphics[page=10,width=0.4\linewidth]%
    {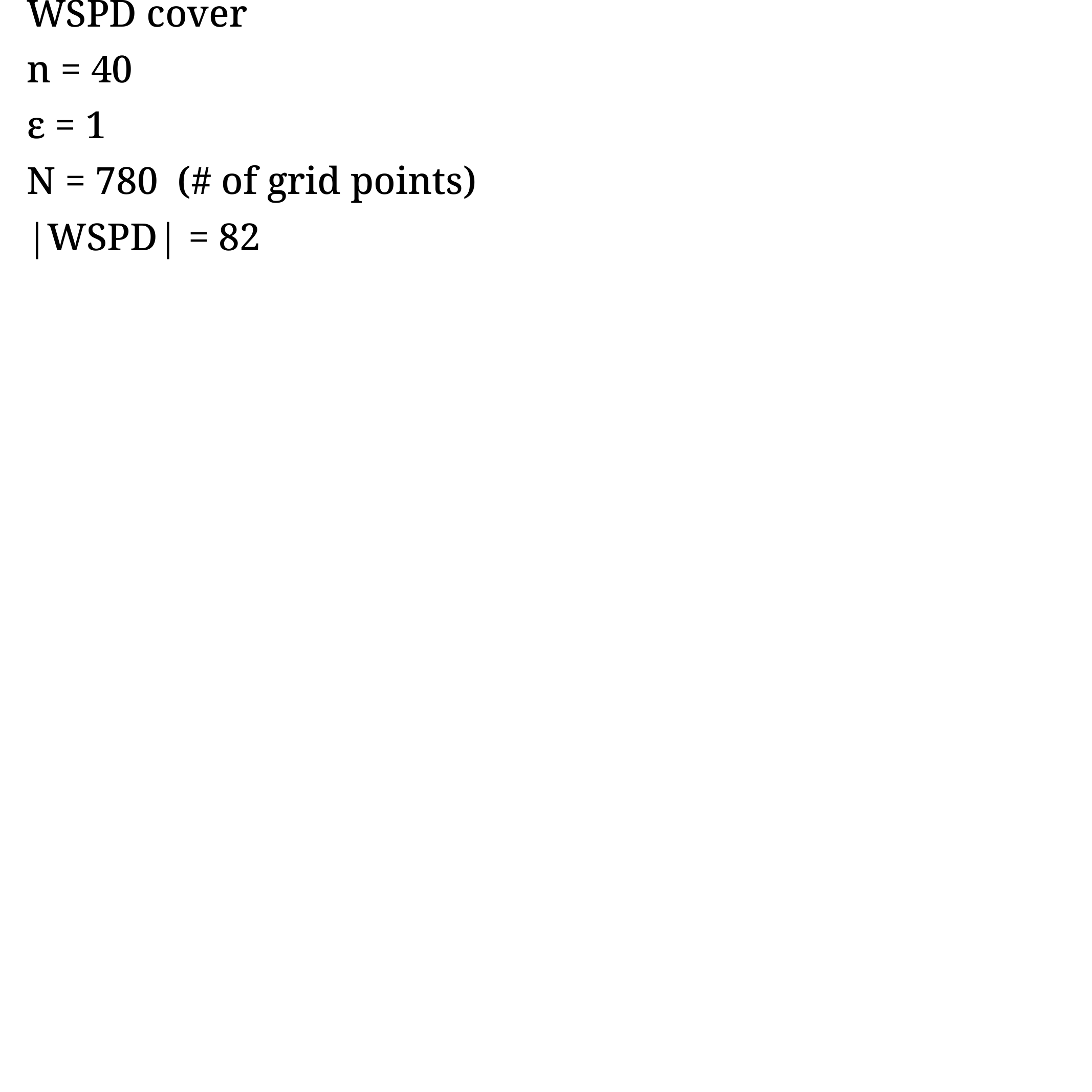}
    \hfill%
    \includegraphics[page=12,width=0.4\linewidth]%
    {figs/uniform/1_dim_n40_eps_1_0}
    \hfill\phantom{}%
    \caption{An example where the \IP and the Disjoint \IP solutions are different. Here $n=40$, the \IP solution uses $63$ squares, while the disjoint solution uses $65$. (The  \WSPD solution in this case is much worse and has size $82$.) }
    \figlab{diff}
\end{figure}

\medskip%
An example where the \IP cover and the Disjoint \IP cover have different sizes is shown in \figref{diff}.  We tried various distributions of points. Initially, it seems believable that the uniform case (i.e., the points are $1,\ldots, n$) is the hardest (see \tblref{1_dim_uniform}), but this turned out to be false. The distribution that required the largest number of pairs, that we came up with, is the logarithmic distribution (see \tblref{1_dim_log}), where the $i$\th point is $\log(i+1)$, for $i=1,\ldots, n$.  See \figref{harder} for comparison of the optimal solution in both cases.

\begin{figure}[ht]
    \phantom{}\hfill%
    \includegraphics[page=12,width=0.4\linewidth]%
    {figs/uniform/1_dim_n40_eps_1_0}
    \hfill%
    \includegraphics[page=12,width=0.4\linewidth]%
    {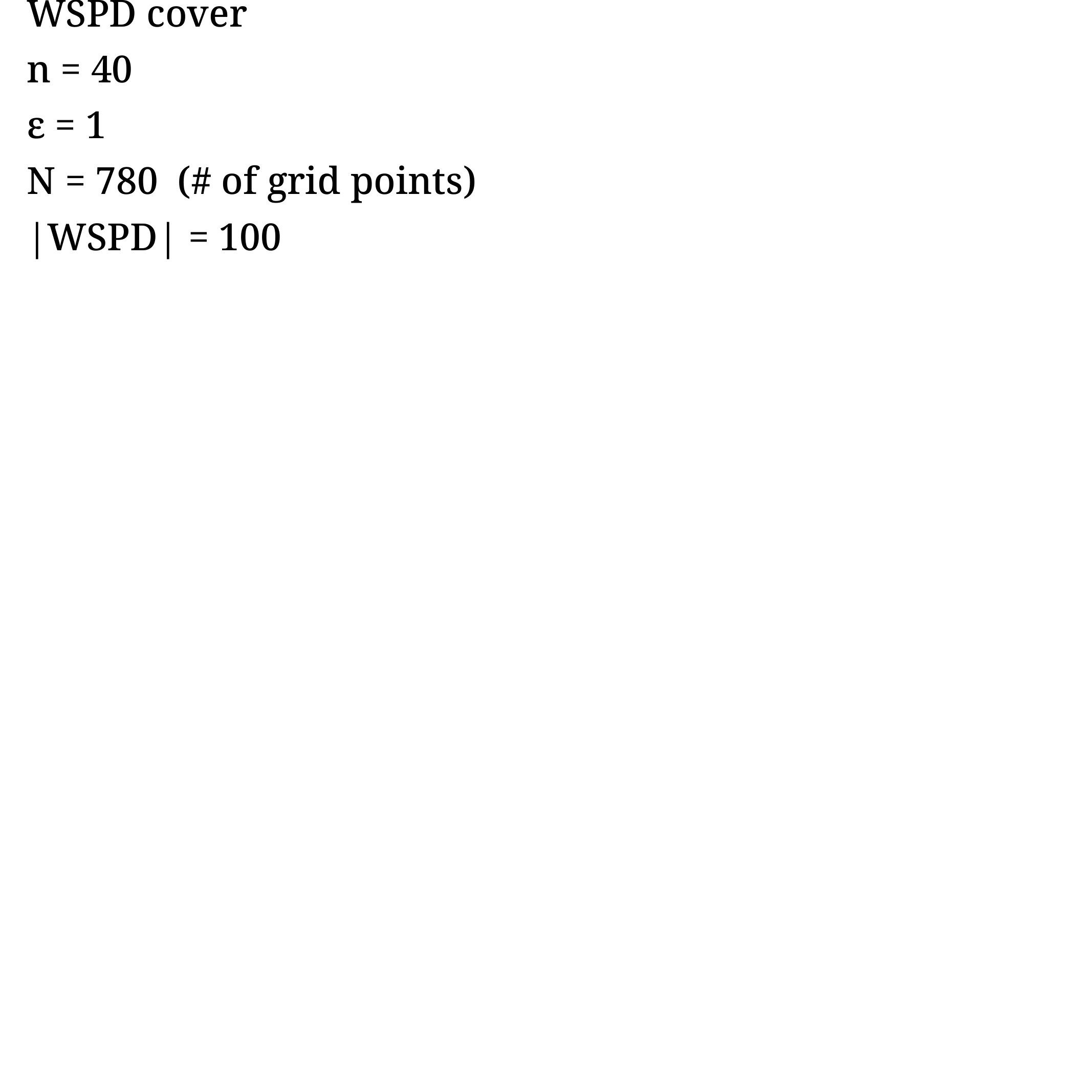}
    \phantom{}\hfill%
    \caption{The uniform $\{1,\ldots, n\}$ (left) vs the $\log$ distribution (right). The solutions shown are the Disjoint \IP covers, where the uniform case uses $63$ rectangles, while the right uses $74$. The $\log$ distribution seems to be harder because the elements close to the diameter are harder to cover and require smaller sets.  }
   \figlab{harder}
\end{figure}

See Figures \figrefX{sol_10}, \figrefX{sol_10}, \figrefX{sol_20}, \figrefX{sol_30}, and \figrefX{sol_40} for comparison of the different solutions on uniform point sets of size $n=10,20,30,40$.
We are not showing the \IP disjoint solutions because they are similar to the other solutions.
See also Tables \tblrefX{1}, \tblrefX{2}, \tblrefX{3}, \tblrefX{4}, \tblrefX{5}, and \tblrefX{6}.

\begin{remark*}
    We do not know what the worst input is for $n$ points in one dimension as far as computing \minWSPD, and it is an interesting open problem for further research.
\end{remark*}

\InsertFigure{10}{sol_10}{15}{14}{12}{12}{12}
\InsertFigure{20}{sol_20}{36}{38}{31}{30}{29}
\InsertFigure{30}{sol_30}{55}{63}{51}{49}{45}
\InsertFigure{40}{sol_40}{82}{86}{72}{69}{63}

\InsertFigure{60}{sol_60}{122}{139}{113}{108}{98}

\InsertFigure{80}{sol_80}{177}{189}{156}{146}{134}

\begin{table}[hp]
\centering
\begin{adjustbox}{max width=0.95\textwidth}
    \begin{tabular}{{ |r|r|r|r|r|r|r|r|r|r| }}
        \hline
        n & N & Greedy & WSPD & Aprx3 & Aprx3C & IP & IP$_{LB}$ & IP DJ & IP DJ$_{LB}$ \\
         \hline
        \hline
          $10$ &   $45$ &   $12$ &   $15$ &   $14$ &   $12$ &   $12$ &   $12$ &   $12$ &   $12$
         \\
        \hline
          $20$ &  $190$ &   $30$ &   $36$ &   $38$ &   $31$ &   $29$ &   $29$ &   $29$ &   $29$
         \\
        \hline
          $30$ &  $435$ &   $49$ &   $55$ &   $63$ &   $51$ &   $45$ &   $45$ &   $47$ &   $47$
         \\
        \hline
          $40$ &  $780$ &   $69$ &   $82$ &   $86$ &   $72$ &   $63$ &   $63$ &   $65$ &   $65$
         \\
        \hline
          $50$ & $1,225$ &   $87$ &  $111$ &  $112$ &   $91$ &   $81$ &   $81$ &   $84$ &   $82$
         \\
        \hline
          $60$ & $1,770$ &  $108$ &  $122$ &  $139$ &  $113$ &   $98$ &   $98$ &  $105$ &  $100$
         \\
        \hline
          $70$ & $2,415$ &  $126$ &  $143$ &  $162$ &  $132$ &  $116$ &  $115$ &  $122$ &  $118$
         \\
        \hline
          $80$ & $3,160$ &  $146$ &  $177$ &  $189$ &  $156$ &  $134$ &  $132$ & -- & --
         \\
        \hline
          $90$ & $4,005$ &  $167$ &  $207$ &  $216$ &  $177$ &  $152$ &  $149$ & -- & --
         \\
        \hline
         $100$ & $4,950$ &  $188$ &  $236$ &  $239$ &  $196$ &  $170$ &  $166$ & -- & --
         \\
        \hline
         $120$ & $7,140$ &  $228$ &  $259$ &  $293$ &  $240$ &  $208$ &  $201$ & -- & --
         \\
        \hline
         $140$ & $9,730$ &  $265$ &  $302$ &  $344$ &  $283$ &  $242$ &  $236$ & -- & --
         \\
        \hline
         $160$ & $12,720$ &  $308$ &  $368$ &  $395$ &  $325$ &  $285$ &  $271$ & -- & --
         \\
        \hline
         $180$ & $16,110$ &  $345$ &  $430$ &  $450$ &  $371$ &  $335$ & -- & -- & --
         \\
        \hline
         $200$ & $19,900$ &  $387$ &  $489$ &  $501$ &  $414$ &  $374$ & -- & -- & --
         \\
        \hline
         $250$ & $31,125$ &  $501$ &  $546$ &  $630$ &  $520$ &  $473$ & -- & -- & --
         \\
        \hline
         $300$ & $44,850$ &  $591$ &  $692$ &  $763$ &  $630$ &  $573$ & -- & -- & --
         \\
        \hline
         $350$ & $61,075$ &  $696$ &  $837$ &  $893$ &  $740$ &  $672$ & -- & -- & --
         \\
        \hline
         $400$ & $79,800$ &  $797$ &  $998$ & $1,023$ &  $848$ &  $771$ & -- & -- & --
         \\
        \hline
    \end{tabular}
\end{adjustbox}

    \caption{Results for the $1$-dim for the point set $1,2, \ldots, n$ and $\eps=1$. Here, $n$ is the number of points, and $N = n (n-1)/2$ is the number in the 2d lifted point set. The $LB$ designation refers to the lower bound.}
    \tbllab{1_dim_uniform}
    \tbllab{1}
\end{table}

\begin{table}[p]
\centering
\begin{adjustbox}{max width=0.95\textwidth}
    \begin{tabular}{{ |r|r|r|r|r|r|r|r|r|r| }}
        \hline
        n & N & Greedy & WSPD & Aprx3 & Aprx3C & IP & IP$_{LB}$ & IP DJ & IP DJ$_{LB}$ \\
         \hline
        \hline
          $10$ &   $45$ &   $10$ &   $10$ &   $14$ &   $11$ &   $10$ &   $10$ &   $10$ &   $10$
         \\
        \hline
          $20$ &  $190$ &   $27$ &   $30$ &   $37$ &   $31$ &   $27$ &   $27$ &   $27$ &   $27$
         \\
        \hline
          $30$ &  $435$ &   $50$ &   $62$ &   $65$ &   $52$ &   $45$ &   $45$ &   $45$ &   $45$
         \\
        \hline
          $40$ &  $780$ &   $63$ &   $74$ &   $85$ &   $70$ &   $59$ &   $59$ &   $60$ &   $60$
         \\
        \hline
          $50$ & $1,225$ &   $76$ &   $89$ &   $99$ &   $88$ &   $72$ &   $72$ &   $72$ &   $72$
         \\
        \hline
          $60$ & $1,770$ &   $97$ &  $120$ &  $128$ &  $109$ &   $91$ &   $91$ &   $92$ &   $92$
         \\
        \hline
          $70$ & $2,415$ &  $119$ &  $140$ &  $157$ &  $129$ &  $108$ &  $108$ &  $109$ &  $109$
         \\
        \hline
          $80$ & $3,160$ &  $146$ &  $163$ &  $184$ &  $159$ &  $130$ &  $130$ & -- & --
         \\
        \hline
          $90$ & $4,005$ &  $163$ &  $184$ &  $210$ &  $170$ &  $142$ &  $142$ & -- & --
         \\
        \hline
         $100$ & $4,950$ &  $171$ &  $203$ &  $227$ &  $191$ &  $158$ &  $158$ & -- & --
         \\
        \hline
         $120$ & $7,140$ &  $203$ &  $235$ &  $266$ &  $221$ &  $183$ &  $183$ & -- & --
         \\
        \hline
         $140$ & $9,730$ &  $247$ &  $285$ &  $319$ &  $273$ &  $223$ &  $223$ & -- & --
         \\
        \hline
         $160$ & $12,720$ &  $277$ &  $318$ &  $373$ &  $317$ &  $256$ &  $256$ & -- & --
         \\
        \hline
         $180$ & $16,110$ &  $315$ &  $378$ &  $398$ &  $345$ &  $282$ &  $282$ & -- & --
         \\
        \hline
         $200$ & $19,900$ &  $359$ &  $416$ &  $470$ &  $398$ &  $324$ &  $322$ & -- & --
         \\
        \hline
         $250$ & $31,125$ &  $438$ &  $515$ &  $565$ &  $476$ &  $424$ & -- & -- & --
         \\
        \hline
         $300$ & $44,850$ &  $548$ &  $631$ &  $693$ &  $590$ &  $527$ & -- & -- & --
         \\
        \hline
         $350$ & $61,075$ &  $655$ &  $748$ &  $840$ &  $724$ &  $625$ & -- & -- & --
         \\
        \hline
         $400$ & $79,800$ &  $737$ &  $823$ &  $949$ &  $814$ &  $696$ & -- & -- & --
         \\
        \hline
    \end{tabular}
\end{adjustbox}

    \caption{Results for the $1$-dim for a set of $n$ points picked uniformly from an interval, with $\eps=1$.}
    \tbllab{1_dim_u_random}
    \tbllab{2}
\end{table}

\begin{table}
\centering
\begin{adjustbox}{max width=0.95\textwidth}
    \begin{tabular}{{ |r|r|r|r|r|r|r|r|r|r| }}
        \hline
        n & N & Greedy & WSPD & Aprx3 & Aprx3C & IP & IP$_{LB}$ & IP DJ & IP DJ$_{LB}$ \\
         \hline
        \hline
          $10$ &   $45$ &   $12$ &   $12$ &   $13$ &   $12$ &   $12$ &   $12$ &   $12$ &   $12$
         \\
        \hline
          $20$ &  $190$ &   $29$ &   $30$ &   $34$ &   $32$ &   $29$ &   $29$ &   $29$ &   $29$
         \\
        \hline
          $30$ &  $435$ &   $50$ &   $61$ &   $67$ &   $57$ &   $46$ &   $46$ &   $47$ &   $47$
         \\
        \hline
          $40$ &  $780$ &   $65$ &   $80$ &   $83$ &   $71$ &   $61$ &   $61$ &   $62$ &   $62$
         \\
        \hline
          $50$ & $1,225$ &   $82$ &   $91$ &  $100$ &   $89$ &   $75$ &   $75$ &   $75$ &   $75$
         \\
        \hline
          $60$ & $1,770$ &  $106$ &  $114$ &  $132$ &  $114$ &   $93$ &   $93$ &   $94$ &   $94$
         \\
        \hline
          $70$ & $2,415$ &  $122$ &  $133$ &  $156$ &  $133$ &  $109$ &  $109$ &  $110$ &  $110$
         \\
        \hline
          $80$ & $3,160$ &  $142$ &  $154$ &  $186$ &  $152$ &  $127$ &  $127$ & -- & --
         \\
        \hline
          $90$ & $4,005$ &  $152$ &  $176$ &  $207$ &  $179$ &  $143$ &  $143$ & -- & --
         \\
        \hline
         $100$ & $4,950$ &  $174$ &  $205$ &  $219$ &  $190$ &  $158$ &  $158$ & -- & --
         \\
        \hline
         $120$ & $7,140$ &  $216$ &  $249$ &  $292$ &  $243$ &  $192$ &  $192$ & -- & --
         \\
        \hline
         $140$ & $9,730$ &  $257$ &  $286$ &  $324$ &  $284$ &  $228$ &  $227$ & -- & --
         \\
        \hline
         $160$ & $12,720$ &  $284$ &  $318$ &  $373$ &  $317$ &  $259$ &  $258$ & -- & --
         \\
        \hline
         $180$ & $16,110$ &  $319$ &  $382$ &  $419$ &  $357$ &  $290$ &  $290$ & -- & --
         \\
        \hline
         $200$ & $19,900$ &  $366$ &  $427$ &  $461$ &  $395$ &  $323$ &  $322$ & -- & --
         \\
        \hline
         $250$ & $31,125$ &  $445$ &  $518$ &  $581$ &  $499$ &  $428$ & -- & -- & --
         \\
        \hline
         $300$ & $44,850$ &  $554$ &  $623$ &  $743$ &  $626$ &  $529$ & -- & -- & --
         \\
        \hline
         $350$ & $61,075$ &  $651$ &  $749$ &  $832$ &  $713$ &  $618$ & -- & -- & --
         \\
        \hline
         $400$ & $79,800$ &  $757$ &  $883$ &  $980$ &  $847$ &  $712$ & -- & -- & --
         \\
        \hline
    \end{tabular}
\end{adjustbox}

    \caption{Results for the $1$-dim for a set of $n$ points picked according to the normal distribution, with $\eps=1$.}
    \tbllab{1_dim_normal}
    \tbllab{3}
\end{table}

\begin{table}
\centering
\begin{adjustbox}{max width=0.95\textwidth}
    \begin{tabular}{{ |r|r|r|r|r|r|r|r|r|r| }}
        \hline
        n & N & Greedy & WSPD & Aprx3 & Aprx3C & IP & IP$_{LB}$ & IP DJ & IP DJ$_{LB}$ \\
         \hline
        \hline
          $10$ &   $45$ &   $13$ &   $16$ &   $20$ &   $16$ &   $13$ &   $13$ &   $13$ &   $13$
         \\
        \hline
          $20$ &  $190$ &   $34$ &   $41$ &   $49$ &   $38$ &   $32$ &   $32$ &   $32$ &   $32$
         \\
        \hline
          $30$ &  $435$ &   $54$ &   $69$ &   $80$ &   $62$ &   $52$ &   $52$ &   $52$ &   $52$
         \\
        \hline
          $40$ &  $780$ &   $78$ &   $96$ &  $110$ &   $91$ &   $71$ &   $71$ &   $71$ &   $71$
         \\
        \hline
          $50$ & $1,225$ &   $97$ &  $122$ &  $142$ &  $115$ &   $91$ &   $91$ &   $92$ &   $92$
         \\
        \hline
          $60$ & $1,770$ &  $120$ &  $150$ &  $175$ &  $143$ &  $111$ &  $111$ &  $112$ &  $112$
         \\
        \hline
          $70$ & $2,415$ &  $142$ &  $176$ &  $208$ &  $169$ &  $131$ &  $131$ &  $133$ &  $132$
         \\
        \hline
          $80$ & $3,160$ &  $164$ &  $208$ &  $241$ &  $190$ &  $152$ &  $152$ & -- & --
         \\
        \hline
          $90$ & $4,005$ &  $186$ &  $235$ &  $274$ &  $218$ &  $172$ &  $172$ & -- & --
         \\
        \hline
         $100$ & $4,950$ &  $211$ &  $258$ &  $307$ &  $244$ &  $193$ &  $192$ & -- & --
         \\
        \hline
         $120$ & $7,140$ &  $254$ &  $320$ &  $374$ &  $299$ &  $236$ &  $232$ & -- & --
         \\
        \hline
         $140$ & $9,730$ &  $298$ &  $376$ &  $441$ &  $350$ &  $280$ &  $272$ & -- & --
         \\
        \hline
         $160$ & $12,720$ &  $347$ &  $436$ &  $509$ &  $404$ &  $320$ &  $312$ & -- & --
         \\
        \hline
         $180$ & $16,110$ &  $392$ &  $490$ &  $577$ &  $465$ &  $365$ &  $353$ & -- & --
         \\
        \hline
         $200$ & $19,900$ &  $432$ &  $542$ &  $645$ &  $518$ &  $417$ & -- & -- & --
         \\
        \hline
         $250$ & $31,125$ &  $548$ &  $688$ &  $815$ &  $646$ &  $523$ & -- & -- & --
         \\
        \hline
         $300$ & $44,850$ &  $664$ &  $832$ &  $986$ &  $787$ &  $629$ & -- & -- & --
         \\
        \hline
         $350$ & $61,075$ &  $783$ &  $975$ & $1,157$ &  $921$ &  $737$ & -- & -- & --
         \\
        \hline
         $400$ & $79,800$ &  $893$ & $1,112$ & $1,328$ & $1,060$ &  $848$ & -- & -- & --
         \\
        \hline
    \end{tabular}
\end{adjustbox}

    \caption{Results for the $1$-dim for a set of $n$ points, where the $i$\th point is $i^2$, where $i=1,\ldots, n$, with $\eps=1$.}
    \tbllab{1_dim_squares}
    \tbllab{4}
\end{table}

\begin{table}
\centering
\begin{adjustbox}{max width=0.95\textwidth}
    \begin{tabular}{{ |r|r|r|r|r|r|r|r|r|r| }}
        \hline
        n & N & Greedy & WSPD & Aprx3 & Aprx3C & IP & IP$_{LB}$ & IP DJ & IP DJ$_{LB}$ \\
         \hline
        \hline
          $10$ &   $45$ &   $12$ &   $15$ &   $19$ &   $15$ &   $12$ &   $12$ &   $12$ &   $12$
         \\
        \hline
          $20$ &  $190$ &   $31$ &   $38$ &   $46$ &   $38$ &   $31$ &   $31$ &   $31$ &   $31$
         \\
        \hline
          $30$ &  $435$ &   $52$ &   $64$ &   $75$ &   $62$ &   $50$ &   $50$ &   $50$ &   $50$
         \\
        \hline
          $40$ &  $780$ &   $72$ &   $90$ &  $106$ &   $86$ &   $70$ &   $70$ &   $70$ &   $70$
         \\
        \hline
          $50$ & $1,225$ &   $94$ &  $120$ &  $137$ &  $112$ &   $89$ &   $89$ &   $89$ &   $89$
         \\
        \hline
          $60$ & $1,770$ &  $114$ &  $145$ &  $169$ &  $138$ &  $109$ &  $109$ &  $109$ &  $109$
         \\
        \hline
          $70$ & $2,415$ &  $135$ &  $171$ &  $199$ &  $168$ &  $129$ &  $129$ &  $131$ &  $130$
         \\
        \hline
          $80$ & $3,160$ &  $163$ &  $203$ &  $231$ &  $190$ &  $149$ &  $149$ & -- & --
         \\
        \hline
          $90$ & $4,005$ &  $180$ &  $226$ &  $263$ &  $215$ &  $170$ &  $169$ & -- & --
         \\
        \hline
         $100$ & $4,950$ &  $204$ &  $259$ &  $297$ &  $242$ &  $190$ &  $188$ & -- & --
         \\
        \hline
         $120$ & $7,140$ &  $247$ &  $316$ &  $362$ &  $295$ &  $230$ &  $228$ & -- & --
         \\
        \hline
         $140$ & $9,730$ &  $293$ &  $368$ &  $429$ &  $345$ &  $274$ &  $268$ & -- & --
         \\
        \hline
         $160$ & $12,720$ &  $343$ &  $427$ &  $496$ &  $399$ &  $315$ &  $309$ & -- & --
         \\
        \hline
         $180$ & $16,110$ &  $383$ &  $482$ &  $562$ &  $450$ &  $362$ &  $349$ & -- & --
         \\
        \hline
         $200$ & $19,900$ &  $427$ &  $541$ &  $629$ &  $505$ &  $406$ &  $389$ & -- & --
         \\
        \hline
         $250$ & $31,125$ &  $543$ &  $686$ &  $799$ &  $640$ &  $519$ & -- & -- & --
         \\
        \hline
         $300$ & $44,850$ &  $654$ &  $822$ &  $968$ &  $771$ &  $628$ & -- & -- & --
         \\
        \hline
         $350$ & $61,075$ &  $774$ &  $962$ & $1,138$ &  $904$ &  $739$ & -- & -- & --
         \\
        \hline
         $400$ & $79,800$ &  $888$ & $1,112$ & $1,309$ & $1,045$ &  $848$ & -- & -- & --
         \\
        \hline
    \end{tabular}
\end{adjustbox}

    \caption{Results for the $1$-dim for a set of $n$ points, where the $i$\th point is $i^3$, where $i=1,\ldots, n$, with $\eps=1$.}
    \tbllab{1_dim_cubes}
    \tbllab{5}
\end{table}

\begin{table}
\centering
\begin{adjustbox}{max width=0.95\textwidth}
    \begin{tabular}{{ |r|r|r|r|r|r|r|r|r|r| }}
        \hline
        n & N & Greedy & WSPD & Aprx3 & Aprx3C & IP & IP$_{LB}$ & IP DJ & IP DJ$_{LB}$ \\
         \hline
        \hline
          $10$ &   $45$ &   $14$ &   $18$ &   $26$ &   $15$ &   $14$ &   $14$ &   $14$ &   $14$
         \\
        \hline
          $20$ &  $190$ &   $37$ &   $44$ &   $68$ &   $42$ &   $33$ &   $33$ &   $34$ &   $34$
         \\
        \hline
          $30$ &  $435$ &   $59$ &   $71$ &  $108$ &   $67$ &   $53$ &   $53$ &   $54$ &   $54$
         \\
        \hline
          $40$ &  $780$ &   $82$ &  $100$ &  $154$ &   $95$ &   $73$ &   $73$ &   $74$ &   $74$
         \\
        \hline
          $50$ & $1,225$ &   $99$ &  $127$ &  $194$ &  $122$ &   $93$ &   $93$ &   $94$ &   $94$
         \\
        \hline
          $60$ & $1,770$ &  $126$ &  $156$ &  $238$ &  $150$ &  $114$ &  $114$ &  $115$ &  $115$
         \\
        \hline
          $70$ & $2,415$ &  $138$ &  $185$ &  $280$ &  $178$ &  $134$ &  $134$ &  $138$ &  $135$
         \\
        \hline
          $80$ & $3,160$ &  $174$ &  $211$ &  $326$ &  $207$ &  $155$ &  $155$ & -- & --
         \\
        \hline
          $90$ & $4,005$ &  $195$ &  $241$ &  $369$ &  $237$ &  $176$ &  $175$ & -- & --
         \\
        \hline
         $100$ & $4,950$ &  $213$ &  $269$ &  $411$ &  $261$ &  $197$ &  $195$ & -- & --
         \\
        \hline
         $120$ & $7,140$ &  $259$ &  $326$ &  $499$ &  $320$ &  $240$ &  $235$ & -- & --
         \\
        \hline
         $140$ & $9,730$ &  $310$ &  $384$ &  $587$ &  $376$ &  $285$ &  $276$ & -- & --
         \\
        \hline
         $160$ & $12,720$ &  $355$ &  $437$ &  $672$ &  $431$ &  $326$ &  $317$ & -- & --
         \\
        \hline
         $180$ & $16,110$ &  $411$ &  $497$ &  $760$ &  $490$ &  $375$ &  $358$ & -- & --
         \\
        \hline
         $200$ & $19,900$ &  $452$ &  $558$ &  $850$ &  $549$ &  $449$ & -- & -- & --
         \\
        \hline
         $250$ & $31,125$ &  $564$ &  $693$ & $1,066$ &  $692$ &  $569$ & -- & -- & --
         \\
        \hline
         $300$ & $44,850$ &  $686$ &  $841$ & $1,287$ &  $836$ &  $686$ & -- & -- & --
         \\
        \hline
         $350$ & $61,075$ &  $819$ &  $989$ & $1,506$ &  $981$ &  $803$ & -- & -- & --
         \\
        \hline
         $400$ & $79,800$ &  $925$ & $1,125$ & $1,722$ & $1,123$ &  $923$ & -- & -- & --
         \\
        \hline
    \end{tabular}
\end{adjustbox}

    \caption{Results for the $1$-dim for a set of $n$ points, where the $i$\th point is $\log(i+1)$, where $i=1,\ldots, n$, with $\eps=1$.}
    \tbllab{1_dim_log}
    \tbllab{6}
\end{table}

\begin{table}
\centering
\begin{adjustbox}{max width=0.95\textwidth}
    \begin{tabular}{{ |r|r|r|r|r|r|r|r|r|r| }}
        \hline
        n & N & Greedy & WSPD & Aprx3 & Aprx3C & IP & IP$_{LB}$ & IP DJ & IP DJ$_{LB}$ \\
         \hline
        \hline
          $10$ &   $45$ &   $15$ &   $17$ &   $28$ &   $16$ &   $14$ &   $14$ &   $14$ &   $14$
         \\
        \hline
          $20$ &  $190$ &   $38$ &   $43$ &   $68$ &   $40$ &   $33$ &   $33$ &   $33$ &   $33$
         \\
        \hline
          $30$ &  $435$ &   $60$ &   $71$ &  $112$ &   $68$ &   $53$ &   $53$ &   $54$ &   $54$
         \\
        \hline
          $40$ &  $780$ &   $85$ &   $99$ &  $153$ &   $96$ &   $73$ &   $73$ &   $74$ &   $74$
         \\
        \hline
          $50$ & $1,225$ &  $102$ &  $127$ &  $195$ &  $122$ &   $93$ &   $93$ &   $94$ &   $94$
         \\
        \hline
          $60$ & $1,770$ &  $125$ &  $155$ &  $239$ &  $151$ &  $114$ &  $114$ &  $118$ &  $114$
         \\
        \hline
          $70$ & $2,415$ &  $148$ &  $181$ &  $282$ &  $178$ &  $135$ &  $135$ &  $137$ &  $135$
         \\
        \hline
          $80$ & $3,160$ &  $173$ &  $207$ &  $326$ &  $206$ &  $155$ &  $155$ & -- & --
         \\
        \hline
          $90$ & $4,005$ &  $203$ &  $240$ &  $368$ &  $232$ &  $176$ &  $175$ & -- & --
         \\
        \hline
         $100$ & $4,950$ &  $221$ &  $270$ &  $411$ &  $263$ &  $197$ &  $195$ & -- & --
         \\
        \hline
         $120$ & $7,140$ &  $267$ &  $323$ &  $499$ &  $320$ &  $240$ &  $235$ & -- & --
         \\
        \hline
         $140$ & $9,730$ &  $310$ &  $381$ &  $587$ &  $376$ &  $283$ &  $276$ & -- & --
         \\
        \hline
         $160$ & $12,720$ &  $358$ &  $441$ &  $674$ &  $437$ &  $327$ &  $317$ & -- & --
         \\
        \hline
         $180$ & $16,110$ &  $409$ &  $500$ &  $761$ &  $491$ &  $391$ & -- & -- & --
         \\
        \hline
         $200$ & $19,900$ &  $449$ &  $553$ &  $848$ &  $548$ &  $437$ & -- & -- & --
         \\
        \hline
         $250$ & $31,125$ &  $569$ &  $691$ & $1,069$ &  $693$ &  $549$ & -- & -- & --
         \\
        \hline
         $300$ & $44,850$ &  $696$ &  $841$ & $1,287$ &  $837$ &  $663$ & -- & -- & --
         \\
        \hline
         $350$ & $61,075$ &  $801$ &  $982$ & $1,505$ &  $980$ &  $776$ & -- & -- & --
         \\
        \hline
         $400$ & $79,800$ &  $925$ & $1,119$ & $1,725$ & $1,128$ &  $893$ & -- & -- & --
         \\
        \hline
    \end{tabular}
\end{adjustbox}

    \caption{Results for the $1$-dim for a set of $n$ points, where the $i$\th point is $\bigl(\log(i+1)\bigr)^2$, where $i=1,\ldots, n$, with $\eps=1$.}
    \tbllab{1_dim_log_sq}
\end{table}

\FloatBarrier%
\section{Conclusions}%
\seclab{conclusions}

We introduced a new pair decomposition, an \ABC, and showed that every finite metric space admits a near-linear-sized \ABC. We also showed that Euclidean spaces admit a linear-sized \ABC, even in high dimensions (the constant there is $O(d^3)$.  In low-dimensional Euclidean space, these decompositions have small implicit representation, similar to \WSPD.  Some open problems include: are our bounds tight in \thmref{abc_metric} and \thmref{abc_r_d}? Which metric spaces, other than $\Re^d$, admit a linear-sized \ABC? We consider the existence of the \ABC as a surprising property, especially in general metric spaces. We believe that the \ABC could lead to further structural or algorithmic insights in the future.

We provided a number of results for approximating the minimum \WSPD.  Some open problems include: Is the minimum \WSPD problem \NPHard in $\Re$? Is there a $(1+\eps)$-approximation algorithm, or a simple $O(1)$-approximation algorithm, in $\Re^d$, for $d \geq 2$? What are the worst-case inputs for \minWSPD in $\Re^d$?

\begingroup%
\emergencystretch=1em
\printbibliography{}%
\endgroup %

\appendix

\section{Tedious calculations}

\begin{tedium}
    \tedlab{boring_1}%
    We  have that
    \begin{align*}
        \fstabY{\PB_t}{\PC_t}
      &=
        \frac{\dmaxY{\PB_t}{\PC_t}- \dmY{\PB_t}{\PC_t}}{2\dmY{\PB_t}{\PC_t}}
      \\&%
        \leq
        \frac{(1+2/n^{3})
           \Delta_0 -
           (1-4/n^{3}) \dmY{\PB_0}{\PC_0}}
        {2(1-4/n^{3}) \dmY{\PB_0}{\PC_0}}
        \\&
        =
        \frac{\frac{1+2/n^{3}}{1-4/n^3}
           \Delta_0 -
            \dmY{\PB_0}{\PC_0}}
        {2 \dmY{\PB_0}{\PC_0}}
      =
      \frac{\frac{6/n^{3}}{1-4/n^3}
           \Delta_0  + \Delta_0-
            \dmY{\PB_0}{\PC_0}}
        {2 \dmY{\PB_0}{\PC_0}}
      \\&%
      \leq
      \fstabY{\PB_0}{\PC_0} +
      \frac{6/n^{3}}{1-4/n^3} \cdot
      \frac{\Delta_0}{2 \dmY{\PB_0}{\PC_0}}
      \leq
      \frac{\eps}{16} + \frac{12}{n^3}
      \leq
      \frac{\eps}{8},
    \end{align*}
    since $\Delta_0 / \dmY{\PB_0}{\PC_0} \leq 1 + 2\fstabY{\PB_0}{\PC_0}$ and $n \geq 1/\eps$.
\end{tedium}

    \begin{tedium}
        \tedlab{boring_2}%
        Indeed, $\diamX{\Cell} = \sqrt{d} \num$, and let $Q = P \cap \Cell$. By construction, we have that
        \begin{align*}
          u &= \dminY{Q}{P_{v,i}}
              \geq
              \rho + i \num - 2\sqrt{d}\num
          \\
          \text{and}\qquad%
          U &= \dmaxY{Q}{P_{v,i}}
              \leq
              \rho + i \num + 2\sqrt{d}\num.
        \end{align*}
        As such, the stability of this pair is
        \begin{align*}
          \fstabY{Q}{P_{v,i}}
          &=
            \frac{U - u}{2u}
            \leq%
            \frac{4\sqrt{d}\num}{2(\rho + i \num - 2\sqrt{d}\num)}
            =
            \frac{4\sqrt{d}\num}{2(\frac{8 \sqrt{d}}{\eps} \num + i \num - 2\sqrt{d}\num)}
          \\&
            \leq
            \frac{4\sqrt{d}}{2(\frac{8 \sqrt{d}}{\eps}  - 2\sqrt{d})}
          \\&%
          =
          \frac{4}{2(\frac{8 }{\eps}  - 2)}
          \leq
          \frac{4}{2\cdot \frac{6 }{\eps}}
          =
          \frac{\eps}{3}.
        \end{align*}
    \end{tedium}

\end{document}